\DeclareFontFamily{OT1}{pzc}{}
\DeclareFontShape{OT1}{pzc}{m}{it}%
{<-> s * [1.15] pzcmi7t}{}
\DeclareMathAlphabet{\mathpzc}{OT1}{pzc}{m}{it}
\setlist{
  parsep=2pt,
  leftmargin=20pt,
  itemsep=2pt
}
\newcommand{\seckern}{\kern-2pt}
\renewcommand{\figureautorefname}{Fig.\kern-1pt}
\definecolor{expgreen}{HTML}{E0E0E0}
\definecolor{patblue}{HTML}{ABDCEC}
\definecolor{typpurp}{HTML}{C9B3EB}
\newcommand*{\RightArrow}{\mathrel{\mathpalette{\@RightArrow}{}}}
\newcommand*{\LeftArrow}{\mathrel{\mathpalette{\@LeftArrow}{}}}
\newcommand*{\@RightArrow}[1]{%
    \edef\@LineWidth{%
        \the\fontdimen8
        \ifx#1\displaystyle\textfont
        \else\ifx#1\textstyle\textfont
        \else\ifx#1\scriptstyle\scriptfont
        \else\scriptscriptfont
        \fi\fi\fi
        3}
    \edef\@ScaleWidth{%
        \ifx#1\displaystyle0.39
        \else\ifx#1\textstyle0.39
        \else\ifx#1\scriptstyle0.35
        \else0.31
        \fi\fi\fi}
    \text{$\tikz
    \draw[double equal sign distance, line width=\@LineWidth,
    -{Implies[sep=-0.5ex] . Computer Modern Rightarrow[scale width=\@ScaleWidth, scale length=0.8]}]
    (0,0) -- (1.5em,0);$}}
\newcommand*{\@LeftArrow}[1]{%
    \edef\@LineWidth{%
        \the\fontdimen8
        \ifx#1\displaystyle\textfont
        \else\ifx#1\textstyle\textfont
        \else\ifx#1\scriptstyle\scriptfont
        \else\scriptscriptfont
        \fi\fi\fi
        3}
    \edef\@ScaleWidth{%
        \ifx#1\displaystyle0.39
        \else\ifx#1\textstyle0.39
        \else\ifx#1\scriptstyle0.35
        \else0.31
        \fi\fi\fi}
    \text{$\tikz
    \draw[double equal sign distance, line width=\@LineWidth,
    -{Implies[sep=-0.5ex] . Computer Modern Rightarrow[scale width=\@ScaleWidth, scale length=0.8]}]
    (1.5em,0) -- (0,0);$}}
\definecolor{tg}{HTML}{3BA64C}
\colorlet{tgreen}{tg!75!darkgray}
\colorlet{eorange}{orange!75!gray}
\definecolor{mp}{HTML}{9B1D78}
\colorlet{mpurp}{mp!90!gray}
\newcommand{\pindent}{12pt}
\newcommand{\rlabelcolor}{mpurp}
\newcommand{\slabelcolor}{eorange}
\colorlet{sbg}{\slabelcolor!22}
\colorlet{rbg}{\rlabelcolor!18}
\colorlet{rbgTwo}{\rlabelcolor!12}
\newcommand{\cradius}{4pt}
\tikzset{
  node distance = 4pt and 4pt,
  box/.style = {rectangle, inner sep=0pt},
  SHIFT/.style args = {#1}{rectangle, rounded corners=\cradius,
                  fill=sbg, inner xsep=\pindent, inner ysep=2pt, fit=#1,
                  node contents={}, outer sep=0pt},
  REDUCE/.style args = {#1}{rectangle, rounded corners=\cradius,
                    fill=rbg, inner xsep=\pindent, inner ysep=2pt, fit=#1,
                    node contents={}, outer sep=0pt},
  REDUCE2/.style args = {#1}{rectangle, rounded corners=\cradius,
                    fill=rbgTwo, inner xsep=\pindent, inner ysep=2pt, fit=#1,
                    node contents={}, outer sep=0pt},
  STUCK/.style args = {#1}{
    rectangle, rounded corners=\cradius,
    fill=gray!30,
    inner xsep=\pindent,
    inner ysep=2pt,
    fit=#1,
    node contents={}
  },
  SLABEL/.style = {
    anchor=north east,
    text=\slabelcolor,
    font=\ttfamily\bfseries\footnotesize,
  },
  RLABEL/.style = {
    anchor=north east,
    text=\rlabelcolor,
    font=\ttfamily\bfseries\footnotesize,
  },
  STLABEL/.style = {
    anchor=north east,
    text=gray!80,
    font=\ttfamily\bfseries\footnotesize,
  },
  SARROW/.style = {
    double=sbg,
    double distance=1pt,
    \slabelcolor,
    line width=1.25pt,
    >=stealth
  },
  RARROW/.style = {
    double=rbg,
    double distance=1pt,
    \rlabelcolor,
    line width=1.25pt,
    >=stealth
  },
  RARROWTWO/.style = {
    double=rbgTwo,
    double distance=1pt,
    \rlabelcolor,
    line width=1.25pt,
    >=stealth
  }
}
\newcommand{\sarrow}[1]{
  \draw[SARROW] ([xshift=1.75pt]#1.north west) -- ([xshift=1.75pt]#1.south west)
}
\newcommand{\rarrow}[1]{
  \draw[RARROW] ([xshift=1.75pt]#1.north west) -- ([xshift=1.75pt]#1.south west)
}
\newcommand{\rarrowtwo}[1]{
  \draw[RARROWTWO] ([xshift=1.75pt]#1.north west) -- ([xshift=1.75pt]#1.south west)
}
\newcommand{\shiftbox}[2]{
  \node (#1) [SHIFT={#2}];
  \node [SLABEL] at (#1.north east) {Shift};
  \sarrow{#1};
}
\newcommand{\reducebox}[3][none]{
  \node (#2) [REDUCE={#3},draw={#1}];
  \node [RLABEL] at (#2.north east) {Reduce};
  \rarrow{#2};
}
\newcommand{\degroutbox}[3][none]{
  \node (#2) [REDUCE2={#3},draw={#1}];
  \node [RLABEL] at (#2.north east) {Degrout};
  \rarrowtwo{#2};
}
\newtcbox{\gbox}[1]
{
  on line,
  nobeforeafter,
  boxsep=1pt,left=1pt,right=1pt,top=0.5pt,bottom=0.5pt,
  colframe=white,colback=gray!#1,opacityframe=0,
  highlight math style={enhanced},
  bottomrule=0pt,leftrule=0.5pt,
  enlarge left by=-2pt,
  enlarge bottom finally by=2pt,
}
\newlength{\glen}
\newcommand{\tylr}{\texttt{tall} \texttt{tylr}\@\xspace}
\newcommand{\tylrcore}{\textsf{meldr}\@\xspace}
\newcommand{\tinytylr}{\texttt{tiny} \texttt{tylr}\@\xspace}
\newcommand{\teentylr}{\texttt{teen} \texttt{tylr}\@\xspace}
\newcommand{\talltylr}{\texttt{tall} \texttt{tylr}\@\xspace}
\newcommand{\gram}{\mathcal{G}}
\newcommand{\labl}{t}
\newcommand{\lbls}{\mathcal{T}}
\newcommand{\sort}{s}
\newcommand{\sortr}{r}
\newcommand{\sorts}{\mathcal{S}}
\newcommand{\sym}{x}
\newcommand{\symy}{y}
\newcommand{\rgx}{g}
\DeclareMathOperator{\ror}{\,\textcolor{black!90}{\pmb{\mid}}\,}
\newcommand{\ralt}[2]{#1 \ror #2}
\newcommand{\ralttight}[2]{#1 \!\ror\! #2}
\newcommand{\rone}{\epsilon}
\newcommand{\rand}{\cdot}
\newcommand{\rseq}[2]{#1\rand #2}
\newcommand{\rstar}[1]{#1^*}
\newcommand{\plt}{\prec}
\newcommand{\pleq}{\preccurlyeq}
\newcommand{\pgt}{\succ}
\newcommand{\pgeq}{\succcurlyeq}
\newcommand{\opscriptsize}{\footnotesize}
\newcommand{\ltop}[1]{\lessdot_{\text{\opscriptsize $#1$}}}
\newcommand{\eqop}[1]{\doteq_{\text{\opscriptsize $#1$}}}
\newcommand{\gtop}[1]{\gtrdot_{\text{\opscriptsize $#1$}}}
\newcommand{\eqrel}[3]{#1\eqop{#2}#3}
\newcommand{\ltrel}[3]{#1\ltop{#2}#3}
\newcommand{\gtrel}[3]{#1\gtop{#2}#3}
\newcommand{\consistent}{\sim}
\newcommand{\reduce}{\Leftarrow}
\newcommand{\produce}{\Rightarrow}
\newcommand{\Reduce}{\Lleftarrow}
\newcommand{\Produce}{\Rrightarrow}
\newcommand{\synelab}[3]{#3 \reduce #2}
\newcommand{\greduce}{\reduce}
\newcommand{\syninj}[3]{#3 \greduce #2}
\newcommand{\pruduce}{\produce}
\newcommand{\ogdots}{\dots}
\renewcommand{\dots}{\:\!...\:\!}
\newcommand{\dsty}[1]{\displaystyle{#1}}
\newcommand{\adjop}{\smallsquare}
\newcommand{\adj}[2]{#1\adjop#2}
\newcommand{\lmsymop}{\smalltriangleleft}
\newcommand{\lmsym}[2]{#1\lmsymop#2}
\newcommand{\lmsymStar}[2]{#1\produce^* {#2}\dots}
\newcommand{\rmsymop}{\smalltriangleright}
\newcommand{\rmsym}[2]{#1\rmsymop#2}
\newcommand{\rmsymStar}[2]{#2\produce^* \dots{#1}}
\newcommand{\wf}[1]{#1~\mathsf{wf}}
\renewcommand{\o}[1]{#1?}
\newcommand{\none}[1][gray]{\textcolor{#1}{\circ}}
\newcommand{\some}[1]{\textcolor{gray}{\bullet}{#1}}
\newcommand{\s}[1]{\overline{#1}}
\newcommand{\snil}{\cdot}
\newcommand{\dlroot}{{\leftY}}
\newcommand{\drroot}{{\rightY}}
\newcommand{\startsort}[1]{\hat{#1}}
\newcommand{\startSym}{\startsort{\sort}}
\newcommand{\groot}{\gamma}
\newcommand{\cft}{\tau}
\newcommand{\cfs}{\sigma}
\newcommand{\cfsr}{\rho}
\newcommand{\cfx}{\chi}
\newcommand{\bddsort}[3]{{
  \prescript{{\vphantom{\mathstrut}{#1}}}{{\vphantom{x}}}{{#2}}^{{\vphantom{\mathstrut}{#3}}}_{{\vphantom{x}}}
}}
\newcommand{\topsort}[1]{\bddsort{\top}{#1}{\top}}
\newcommand{\botsort}[1]{\bddsort{\bot}{#1}{\bot}}
\newcommand{\zerosort}[1]{\bddsort{\precz}{#1}{\precz}}
\newcommand{\nat}[1]{#1~\mathsf{nat}}
\newcommand{\psq}{\bddsort{\precp}{\sort}{\precq}}
\newcommand{\prq}{\bddsort{\precp}{\sortr}{\precq}}
\newcommand{\msn}{\bddsort{\precm}{\sort}{\precn}}
\newcommand{\mrn}{\bddsort{\precm}{\sortr}{\precn}}
\newcommand{\cfg}{\mathcal{H}}
\newcommand{\yields}[2]{#1\smalltriangleup#2}
\newcommand{\yieldsStar}[2]{#1\produce^* \dots{#2}\dots}
\newcommand{\promotesym}[1]{\left\lceil#1\right\rceil}
\newcommand{\head}[1]{\texttt{hd}(#1)}
\newcommand{\code}[1]{\textcolor{purple}{\texttt{#1}}}
\newcommand{\ghostcode}[1]{\textcolor{lightgray}{\texttt{#1}}}
\newcommand{\keypress}[1]{%
  \raisebox{0.6ex}{\adjustbox{valign=M}{\fcolorbox{gray!80}{gray!20}{\color{black!75}\texttt{#1\vphantom{|}}}}}\xspace%
}
\newcommand{\cmpr}{\odot}
\newcommand{\cmprleq}{\leqdot}
\newcommand{\tnode}[1]{#1}
\newcommand{\snode}[1]{\left\{#1\right\}}
\newcommand{\nnnode}[1]{\left\{#1\right\}}
\newcommand{\precm}{m}
\newcommand{\precn}{n}
\newcommand{\precp}{p}
\newcommand{\precq}{q}
\newcommand{\precz}{0}
\newcommand{\precs}{\mathcal{P}}
\newcommand{\term}{\mathbb{S}}
\newcommand{\termr}{\mathbb{R}}
\newcommand{\nod}{\mathbb{X}}
\newcommand{\stack}{\mathbb{K}}
\newcommand{\pick}[3]{#1 \rcurvearrowne #2~=~#3}
\newcommand{\pfill}[2]{#1 \rcurvearrowne #2}
\newcommand{\push}[4]{#1\xleftarrow[#2]{}#3\,=\,#4}
\newcommand{\pushj}[4][\ ]{{#2}#1\xleftarrow[#3]{}#1{#4}}
\newcommand{\pushl}[3]{#1\xleftarrow[#2]{}#3}
\newcommand{\parsel}[2]{\texttt{parse}\left(#1,#2\right)}
\newcommand{\parse}[3]{\parsel{#1}{#2}\,=\,#3}
\newcommand{\iseq}[2]{\textcolor{gray}{\left[\,\textcolor{black}{#1}\,\right]_{\textcolor{black}{#2}}}}
\newlength{\tilelen}
\newcommand{\theight}{1.3ex}
\newcommand{\tiplen}{0.45ex}
\newcommand{\expsort}{\textsc{e}}
\newcommand{\patsort}{\textsc{p}}
\newcommand{\typsort}{\textsc{t}}
\colorlet{litegray}{gray!25}
\newcommand{\expcolor}{expgreen}
\newcommand{\patcolor}{patblue}
\newcommand{\pc}{\patcolor}
\newcommand{\typcolor}{typpurp}
\newcommand{\yc}{\typcolor}
\newcommand{\mixgray}[1]{#1!50!black}
\newcommand{\vartile}[2][\expcolor]{%
  \settowidth{\tilelen}{$#2$}%
  \begin{tikzpicture}[baseline=(text.base)]%
    \draw[line width=0.75pt,{\mixgray{#1}},fill={#1}]
      (0,-\theight)
      -- (-\tiplen,0)
      -- (0,\theight)
      -- (\tilelen,\theight)
      -- ($ (\tilelen+\tiplen,0) $)
      -- (\the\tilelen,-\theight)
      -- cycle;
    \node[anchor=center] (text) at ($ (\tilelen/2,0) $) {$#2\vphantom{X}$};
  \end{tikzpicture}%
}
\newcommand{\soptile}[2][\expcolor]{%
  \settowidth{\tilelen}{$\texttt{#2}$}%
  \begin{tikzpicture}[baseline=(text.base)]%
    \draw[line width=0.75pt,{\mixgray{#1}},fill={#1}]
      (0,-\theight)
      -- (-\tiplen,0)
      -- (0,\theight)
      -- (\tilelen,\theight)
      -- ($ (\tilelen+\tiplen,0) $)
      -- (\the\tilelen,-\theight)
      -- cycle;
    \node[anchor=center] (text) at ($ (\tilelen/2,0) $) {$\texttt{#2}\vphantom{X}$};
  \end{tikzpicture}%
}
\newcommand{\spretile}[2][\expcolor]{%
  \settowidth{\tilelen}{\texttt{#2}}%
  \begin{tikzpicture}[baseline=(text.base)]%
    \draw[line width=0.75pt,{\mixgray{#1}},fill={#1}]
      (0,-\theight)
      -- (-\tiplen,0)
      -- (0,\theight)
      -- ($ (\tilelen+\tiplen,\theight) $)
      -- (\the\tilelen,0)
      -- ($ (\tilelen+\tiplen,-\theight) $)
      -- cycle;
    \node[anchor=center] (text) at ($ (\tilelen/2,0) $) {\texttt{#2}\vphantom{X}};
  \end{tikzpicture}%
}
\newcommand{\sposttile}[2][\expcolor]{%
  \settowidth{\tilelen}{\texttt{#2}}%
  \begin{tikzpicture}[baseline=(text.base)]%
    \draw[line width=0.75pt,{\mixgray{#1}},fill={#1}]
    (-\tiplen,-\theight)
      -- (0,0)
      -- (-\tiplen,\theight)
      -- (\tilelen,\theight)
      -- ($ (\tilelen+\tiplen,0) $)
      -- (\the\tilelen,-\theight)
      -- cycle;
    \node[anchor=center] (text) at ($ (\tilelen/2,0) $) {\texttt{#2}\vphantom{X}};
  \end{tikzpicture}%
}
\newcommand{\sbintile}[2][\expcolor]{%
  \settowidth{\tilelen}{\texttt{#2}}%
  \begin{tikzpicture}[baseline=(text.base)]%
    \draw[line width=0.75pt,{\mixgray{#1}},fill={#1}]
      (-\tiplen,-\theight)
      -- (0,0)
      -- (-\tiplen,\theight)
      -- ($ (\tilelen+\tiplen,\theight) $)
      -- (\the\tilelen,0)
      -- ($ (\tilelen+\tiplen,-\theight) $)
      -- cycle;
    \node[anchor=center] (text) at ($ (\tilelen/2,0) $) {\texttt{#2}\vphantom{X}};
  \end{tikzpicture}%
}
\newcommand{\holeheight}{0.45ex}
\newcommand{\holewidth}{0.25ex}
\newcommand{\holetip}{0.18ex}
\newcommand{\tophole}{%
\settowidth{\tilelen}{$x$}%
\vcenter{\hbox{%
\begin{tikzpicture}%
\draw[line width=0.75pt,black!40!]
  (-\tiplen,0)
  -- (0,\theight)
  -- (\tilelen,\theight)
  -- ($ (\tilelen+\tiplen,0) $)
  -- (\the\tilelen,-\theight)
  -- (0,-\theight)
  -- cycle;%
 \draw[line width=0.75pt,gray]
  ($ (\tilelen/2-\holewidth-\holetip,0) $)
  -- ($ (\tilelen/2-\holewidth,\holeheight) $)
  -- ($ (\tilelen/2+\holewidth,\holeheight) $)
  -- ($ (\tilelen/2+\holewidth+\holetip,0) $)
  -- ($ (\tilelen/2+\holewidth,-\holeheight) $)
  -- ($ (\tilelen/2-\holewidth,-\holeheight) $)
  -- cycle;%
\end{tikzpicture}%
}}%
}
\newcommand{\groutcolor}{black!80}
\newcommand{\ophole}[1][none]{%
\begin{tikzpicture}[baseline=0]%
\draw[line width=0.75pt,\groutcolor,fill=#1]
  (-0.25ex,0.5ex)
  -- (0,1ex)
  -- (0.5ex,1ex)
  -- (0.75ex,0.5ex)
  -- (0.5ex,0)
  -- (0,0)
  -- cycle;%
\end{tikzpicture}%
}
\newcommand{\sophole}[1][\expcolor]{%
\begin{tikzpicture}[baseline=0]%
\draw[line width=0.75pt,{\mixgray{#1}},fill={#1}]
  (-0.25ex,0.5ex)
  -- (0,1ex)
  -- (0.5ex,1ex)
  -- (0.75ex,0.5ex)
  -- (0.5ex,0)
  -- (0,0)
  -- cycle;%
\end{tikzpicture}%
}
\newcommand{\prehole}{%
\begin{tikzpicture}[baseline=0]%
\draw[line width=0.75pt,\groutcolor]
  (-0.25ex,0.5ex)
  -- (0,1ex)
  -- (0.75ex,1ex)
  -- (0.5ex,0.5ex)
  -- (0.75ex,0)
  -- (0,0)
  -- cycle;%
\end{tikzpicture}%
}
\newcommand{\sprehole}[1][\expcolor]{%
\begin{tikzpicture}[baseline=0]%
\draw[line width=0.75pt,{\mixgray{#1}},fill={#1}]
  (-0.25ex,0.5ex)
  -- (0,1ex)
  -- (0.75ex,1ex)
  -- (0.5ex,0.5ex)
  -- (0.75ex,0)
  -- (0,0)
  -- cycle;%
\end{tikzpicture}%
}
\newcommand{\poshole}{%
\begin{tikzpicture}[baseline=0]%
\draw[line width=0.75pt,\groutcolor]
  (0,0.5ex)
  -- (-0.25ex,1ex)
  -- (0.5ex,1ex)
  -- (0.75ex,0.5ex)
  -- (0.5ex,0)
  -- (-0.25ex,0)
  -- cycle;%
\end{tikzpicture}%
}
\newcommand{\sposhole}[1][\expcolor]{%
\begin{tikzpicture}[baseline=0]%
\draw[line width=0.75pt,{\mixgray{#1}},fill={#1}]
  (0,0.5ex)
  -- (-0.25ex,1ex)
  -- (0.5ex,1ex)
  -- (0.75ex,0.5ex)
  -- (0.5ex,0)
  -- (-0.25ex,0)
  -- cycle;%
\end{tikzpicture}%
}
\newcommand{\binhole}{%
\begin{tikzpicture}[baseline=0]%
\draw[line width=0.75pt,\groutcolor]
  (0,0.5ex)
  -- (-0.25ex,1ex)
  -- (0.75ex,1ex)
  -- (0.5ex,0.5ex)
  -- (0.75ex,0)
  -- (-0.25ex,0)
  -- cycle;%
\end{tikzpicture}%
}
\newcommand{\sbinhole}[1][\expcolor]{%
\begin{tikzpicture}[baseline=0]%
\draw[line width=0.75pt,{\mixgray{#1}},fill={#1}]
  (0,0.5ex)
  -- (-0.25ex,1ex)
  -- (0.75ex,1ex)
  -- (0.5ex,0.5ex)
  -- (0.75ex,0)
  -- (-0.25ex,0)
  -- cycle;%
\end{tikzpicture}%
}
\newcommand{\tbinhole}{%
\settowidth{\tilelen}{$x$}%
\vcenter{\hbox{%
\begin{tikzpicture}%
\draw[line width=0.75pt,black!40!]
  (0,0)
  -- (-\tiplen,\theight)
  -- ($ (\tilelen+\tiplen,\theight) $)
  -- (\the\tilelen,0)
  -- ($ (\tilelen+\tiplen,-\theight) $)
  -- (-\tiplen,-\theight)
  -- cycle;%
\draw[line width=0.65pt,gray]
  ($ (\tilelen/2-\holewidth,0) $)
  -- ($ (\tilelen/2-\holewidth-\holetip,\holeheight) $)
  -- ($ (\tilelen/2+\holewidth+\holetip,\holeheight) $)
  -- ($ (\tilelen/2+\holewidth,0) $)
  -- ($ (\tilelen/2+\holewidth+\holetip,-\holeheight) $)
  -- ($ (\tilelen/2-\holewidth-\holetip,-\holeheight) $)
  -- cycle;%
\end{tikzpicture}%
}}%
}
\newcommand{\glangle}{\textcolor{gray}{\bigl\langle}}
\newcommand{\grangle}{\textcolor{gray}{\bigr\rangle}}
\newcommand{\camelplus}[2]{$#1$\texttt{ + }$#2$}
\newcommand{\camelap}[2]{#1\texttt{(}#2\texttt{)}}
\newcommand{\tylrseqimg}[2]{
\vcenter{\hbox{%
    \sbox0{\includegraphics[height=1.6em]{#1}}%
    \usebox0
    \kern-\wd0
    \makebox[\wd0]{%
      \hfil
      \smash{\raisebox{2.4em}{{#2}}}
      \hfil
    }%
  }}
}
\newcommand{\tylrseqimgtwosm}[2]{
\vcenter{\hbox{%
    \sbox0{\includegraphics[height=2.8em]{#1}}%
    \usebox0
    \kern-\wd0
    \makebox[\wd0]{%
      \hfil
      \smash{\raisebox{3.4em}{{#2}}}
      \hfil
    }%
  }}
}
\newcommand{\tylrseqimgtwo}[2]{
\vcenter{\hbox{%
    \sbox0{\includegraphics[height=3.2em]{#1}}%
    \usebox0
    \kern-\wd0
    \makebox[\wd0]{%
      \hfil
      \smash{\raisebox{3.8em}{{#2}}}
      \hfil
    }%
  }}
}
\newcommand{\tylrseqarrow}[1]{\hspace{5pt}\xrightarrow{#1}\hspace{5pt}}
\renewcommand{\sortname}[1]{\textsf{#1}}
\newcommand{\ghazel}{\gram_{\textsf{HZ}}}
\newcommand{\hhazel}{\cfg_{\textsf{HZ}}}
\newcommand{\mc}[1]{\textcolor{mpurp}{#1}}
\newcommand{\ec}[1]{\textcolor{eorange}{#1}}
\newcommand{\mlangle}{\mc{\bigl\langle}}
\newcommand{\mrangle}{\mc{\bigr\rangle}}
\newcommand{\imghole}{\includegraphics[height=\fontcharht\font`\A]{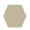}}
\newcommand{\imginfixhole}{
\includegraphics[height=\fontcharht\font`\A]{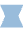}
}
\newcommand{\imgsortx}{\includegraphics[height=\fontcharht\font`\A]{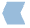}}
\newcommand{\tightencolor}{eorange}
\newcommand{\subsumecolor}{mpurp}
\begin{document}

\title{Syntactic Completions with Material Obligations}
\titlenote{A version of this paper with errata and a complete appendix is available at \url{https://arxiv.org/abs/2508.16848}.}

\author{David Moon}
\orcid{0000-0002-1081-2235}
\affiliation{%
  \institution{University of Michigan}
  \city{Ann Arbor}
  \country{USA}
}
\email{dmoo@umich.edu}

\author{Andrew Blinn}
\orcid{0000-0001-6938-7379}
\affiliation{%
  \institution{University of Michigan}
  \city{Ann Arbor}
  \country{USA}
}
\email{blinnand@umich.edu}

\author{Thomas J. Porter}
\orcid{0009-0000-1809-8382}
\affiliation{%
  \institution{University of Michigan}
  \city{Ann Arbor}
  \country{USA}
}
\email{thomasjp@umich.edu}

\author{Cyrus Omar}
\orcid{0000-0003-4502-7971}
\affiliation{%
  \institution{University of Michigan}
  \city{Ann Arbor}
  \country{USA}
}
\email{comar@umich.edu}


\begin{abstract}
Code editors provide essential services that help developers understand,
navigate, and modify programs. However, these services often fail in the
presence of syntax errors. Existing syntax error recovery techniques, like panic
mode and multi-option repairs, are either too coarse, \eg in deleting large
swathes of code, or lead to a proliferation of possible completions.
This paper introduces \tylr, an error-handling parser and editor generator
that completes malformed code with \emph{syntactic obligations}
that abstract over many possible completions.
These obligations generalize the familiar notion of holes in structure editors
to cover missing operands, operators, delimiters, and sort transitions.

\tylr is backed by a novel theory of tile-based parsing, conceptually organized
around a \emph{molder} that turns tokens into tiles and a \emph{melder}
that completes and parses tiles into terms using an error-handling
generalization of operator-precedence parsing.
We formalize melding as a parsing calculus, \tylrcore, that completes input tiles
with additional obligations such that it can be parsed into a well-formed term,
with success guaranteed over all inputs.
We further describe how \tylr implements molding and completion-ranking using
the principle of \emph{minimizing obligations}.

Obligations offer a useful way to scaffold internal program representations,
but in \tylr we go further to investigate the potential of
\emph{materializing} these obligations visually to the programmer.
We conduct a user study to evaluate the extent to which an editor like \tylr that materializes syntactic obligations might be usable and useful, finding both points of positivity and interesting new avenues for future work.
\end{abstract}

\begin{CCSXML}
<ccs2012>
   <concept>
       <concept_id>10011007.10011006.10011008</concept_id>
       <concept_desc>Software and its engineering~General programming languages</concept_desc>
       <concept_significance>300</concept_significance>
       </concept>
 </ccs2012>
\end{CCSXML}

\ccsdesc[300]{Software and its engineering~General programming languages}

\keywords{structure editing, error-handling parsing, operator precedence}

\maketitle

\section{Introduction} \label{sec:intro}

Programmers rely on editor services like syntax highlighting, code completion, and go-to-definition for help with comprehending, modifying, and navigating programs in various stages of completion.
These services require analyzing the syntactic structure of the program being edited.
The problem is that, for typical grammars, most textual edit states do not successfully parse \cite{yoonLongitudinalStudyProgrammers2014,omarHazelnutBidirectionallyTyped2017}. For example, consider the malformed program in \autoref{fig:syntax-error}A, written in an OCaml-like expression language (that uses parentheses for function application, rather than spaces).
A naively implemented parser would simply stop upon encountering the first
unexpected token, \code{p2}, at the end of the first line, leaving the program
unparsed and unanalyzable by downstream editor services.
To address this problem, modern parsers attempt to recover from and continue
parsing around malformed syntax, but often leave much to be desired.
Let us consider how contemporary methods might recover from the unexpected token \code{p2}.

A simple recovery method known as ``panic mode'' \cite{ahoCompilersPrinciplesTechniques2007,gruneParsingTechniquesPractical2008} drops tokens heuristically around the error until parsing can resume---in this case, as shown in \autoref{fig:syntax-error}B, a simple panicking parser might drop the first four lines of code because of the various parse errors on those lines, then perhaps recover more granularly on the final line by ignoring the dangling minus sign.
While better than nothing and relatively easy to implement, this approach is liable to ignore large windows around error locations, leaving the programmer with limited or incorrect assistance where they may need it most~\cite{diekmannDonPanicBetter2020}. For example, the dropped lines in \autoref{fig:syntax-error}B would lead
a type error reporting service to mark the uses of \code{x1}, \code{y1}, and \code{y2} unbound (in contrast to what a human would likely conclude).

More sophisticated error recovery methods consider a range of possible repairs around the error location.
\autoref{fig:syntax-error}C shows some possible repairs for the first line of code in \autoref{fig:syntax-error}A.
The first three repairs show how this method reduces dropped input (\ie deleted tokens) compared to a panicking parser.
On the other hand, the next four completion-only repairs show how this method must enumerate all tokens that play similar structural roles---in this case, infix operators on patterns---which can lead to a combinatorial explosion as additional insertions, deletions, and larger repair windows are considered
\cite{considineSyntaxRepairIdempotent,diekmannDonPanicBetter2020}.
To select from these possible repairs, most parsers use \emph{ad hoc} heuristics \cite{grahamPracticalSyntacticError1975,fischerLocallyLeastCostLRError1979}. The heuristically chosen repair is generally not communicated to the programmer, leaving them only indirect clues in the behavior of downstream editor services.
Having the programmer disambiguate between possible repairs interactively can cause information overload due to the number of possible completions \cite{koFrameworkMethodologyStudying2005}.


\begin{figure}[t]
  \begin{minipage}[t]{0.57\textwidth}
    \includegraphics[width=\textwidth]{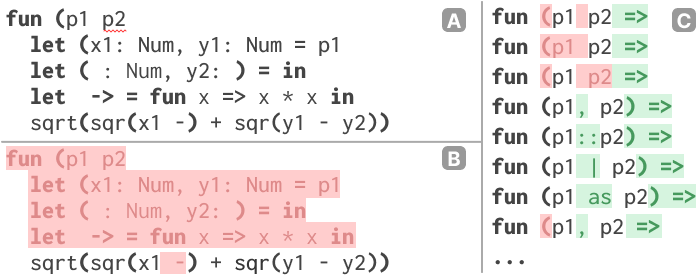}
  \end{minipage}\hfill
  \begin{minipage}[t]{0.4\textwidth}
    \vspace*{-3.2cm}
    \caption{(A) A malformed editor state, with the first unexpected token underlined in red. (B) Regions skipped by a simple panicking parser, highlighted in red. (C) Some possible textual repairs for the first line generated by a conventional error-correcting parser.}
    \label{fig:syntax-error}
  \end{minipage}
\end{figure}

This paper introduces \tylr, a parser and editor generator that performs syntax repair by \emph{syntactic completion} (but not deletion) in a grammar enriched with \emph{obligations}.
This approach can be used to determine the internal program representation within an editor or language server (and indeed we expect this to be a common application of these ideas), but in this paper we go further to investigate the potential of \emph{materializing} these obligations visually to the programmer.

\begin{wrapfigure}{r}{0.55\textwidth}
  \vskip-1em
  \centering
  \includegraphics[width=0.55\textwidth]{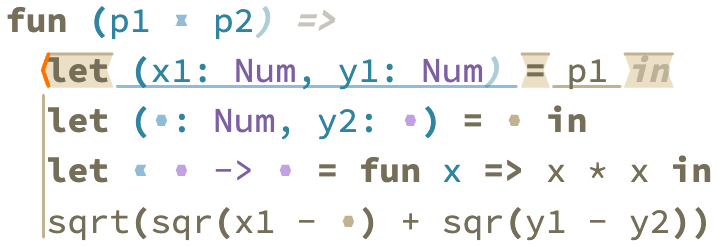}
  \vspace{-10pt}
    \caption{The syntactic completion of the program from \autoref{fig:syntax-error} in \tylr, our tile-based editor.
    }
    \vspace{-10pt}
    \label{fig:obligations}
  \end{wrapfigure}

The screenshot in \autoref{fig:obligations} shows how \tylr repairs the program from \autoref{fig:syntax-error}A.
On Line 1, \tylr completes the user-inserted tokens \code{fun (p1 p2} by materializing three obligations. The first obligation, \imginfixhole, is an \emph{infix obligation}, i.e. it ranges over many possible infix operators in pattern position. The remaining two obligations, \ghostcode{)} and \ghostcode{=>}, as well as those on Line 2, are \emph{ghost obligations} that serve to complete the partially written syntactic forms.
The user can accept these suggested locations by placing the cursor on these faded out tokens and pressing the \texttt{Tab} key or typing over them explicitly. If they wish to place them elsewhere, the ghost obligations can be ignored and the user can type these obligations elsewhere; the ghost obligations are removed when no longer necessary.
On Lines 3 and 5, the user has omitted operands of various syntactic sorts: one pattern, one type, and two expressions. \tylr materializes \emph{operand obligations} (a.k.a. \emph{holes}), written \imghole{} and colored by syntactic sort, to stand for the missing operands. Finally, on Line 4, \imgsortx, is a \emph{sort transition obligation} that indicates that there is a missing transition from the pattern sort (in blue) to the type sort (in purple), because in this language the token \code{->} can appear only in types. This collection of obligations captures the different ways a program may be incomplete. We expand on this visual taxonomy with additional examples from the user's perspective in \autoref{sec:design-overview}.

Each token and obligation in \tylr has a color and a shape, collectively a
\emph{mold}. We refer to a token or obligation equipped with a mold as a
\emph{tile}. In particular, the color indicates the syntactic sort of term being
considered. The shape indicates the hierarchical relationship between a token
and its neighbors. For example, the convex tip on the left of the \code{let}
token in \autoref{fig:obligations} indicates that it is the beginning of a term,
and the concave tip on its right indicates that a child term is expected to its
right. Tips are visualized only for the term at the cursor (which is shown as a
red angle in \autoref{fig:obligations} to conform to the shape of its adjacent
token) to avoid the visual clutter associated with nested block-based
visualizations like those in systems like Scratch
\cite{maloneyScratchProgrammingLanguage2010,cohenCodeStyleSheets2025}.




Underlying this visual taxonomy is a novel theory of parsing that we dub
\emph{tile-based parsing}. Tile-based parsing departs from the predominant
item-based approach of the LL/LR methods and instead builds on
the token-based perspective of operator-precedence (OP) parsing as first
described by \citet{floydSyntacticAnalysisOperator1963}.
OP parsing enjoys the \emph{bounded context} property
\cite{gruneParsingTechniquesPractical2008} that makes it possible to
maximally parse any subrange of input knowing only its single-token delimiters,
an attractive property for modeling and analyzing program edit states.
On the other hand, OP parsing is also known for its limited grammar class, owing
to difficulties reusing the same token in different structural roles
(\eg \code{-} for both infix subtraction and unary negation).
With tile-based parsing, we propose splitting the overall problem of parsing
into a top-down, context-dependent \emph{molder} that molds tokens into tiles,
thereby distinguishing one structural use of a token from another;
and a bottom-up, bounded-context \emph{melder} of tiles that
extends OP parsing with obligation-based syntactic completions.

Where error handling is typically an afterthought in existing parsing methods,
it emerges in tile-based parsing as a natural generalization of the core
OP parsing method.
In particular, we generalize the single-step precedence comparisons
in OP parsing to multi-step precedence \emph{walks} in melding, where the
intermediate steps between the comparands constitute possible completions
between them.
In \autoref{sec:tylrcore}, we precisely specify melding as a parsing calculus
called \tylrcore.
In addition to precedence walks, we describe in \autoref{sec:injecting-grout} how \tylrcore ``injects''
the given grammar with additional grout forms that buffer
the various inconsistencies that may arise between bottom-up reductions
and top-down expectations.
Along the way, we present in \autoref{sec:elab-prec} a new parser-independent
semantics for precedence annotations that generalizes and unifies prior
accounts.

\tylrcore describes a nondeterministic parser of tiles, leaving many decisions
up to the implementation regarding how tiles are molded and completions
are chosen.
In \autoref{sec:implementation}, we describe the principle of \emph{minimizing
obligations} and additional heuristics that guide these decisions in \talltylr.
Finally, we evaluate our overall design with a user study in
\autoref{sec:user-study}, investigating both code insertion and code
modification tasks.
We discover our design of materialized obligations has both promise and demand,
but more design work is needed to give the programmer more control over their
placement and removal, especially when modifying existing code.

\section{Design Overview} \label{sec:design-overview}
We begin with a user-facing summary of how \tylr operates in various common editing scenarios that demonstrate each form of obligation and how it is materialized to the user.

\tylr is a parser and editor generator, i.e. it can be instantiated with various
grammars. In this section, we will write programs in a simple
expression-oriented programming language, Hazel \cite{omarSemanticFoundationsProgram2017}. Hazel is a near-subset of
OCaml. One notable deviation is the use of postfix parentheses,
$\camelap{e}{e}$, instead of infix space, $e~e$, for function application. This
allows us to demonstrate how \tylr handles adjacency when whitespace is not
accepted by the grammar as an infix operator.

\subsection{Operand Obligations}
\begin{figure}[h]
\vspace{-0.4em}
\[
  \tylrseqimg{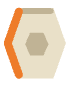}{(a)}
    \tylrseqarrow{\keypress{2}}
  \tylrseqimg{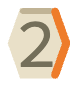}{(b)}
    \tylrseqarrow{\keypress{\textvisiblespace}}
  \tylrseqimg{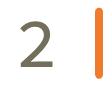}{(c)}
    \tylrseqarrow{\keypress{+}}
  \tylrseqimg{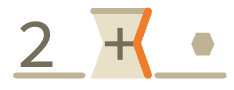}{(d)}
    \tylrseqarrow{\keypress{\textvisiblespace}}
  \tylrseqimg{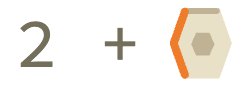}{(e)}
    \tylrseqarrow{\keypress{3}\keypress{\textvisiblespace}\keypress{*}}
  \tylrseqimg{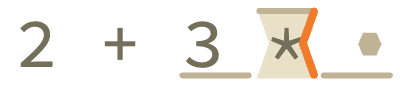}{(f)}
\]
\vspace{-2em}
\caption{Basic expression insertion in \tylr, demonstrating operand obligations and term decorations.}
\label{fig:operand-obligations}
\end{figure}
\vspace{-.8em}

\noindent
We begin with an empty editor buffer in \autoref{fig:operand-obligations}(a). The root sort of Hazel is \sortname{expression}, and no expression has been entered, so \tylr repairs the empty buffer to a single \emph{operand obligation}, or simply \emph{hole}, of that sort. Holes have convex tips on both sides,  and the user's caret (in red) appears angled when on either side of the hole to emphasize its shape.

We next type the character \keypress{2}, which causes the hole to be ``filled''
with the number literal \code{2} in \autoref{fig:operand-obligations}(b). Atomic
operands also have convex tips on both sides. The sort (here,
\sortname{expression}) and the shape, i.e. the convexity of the tips on either
side, are collectively called a \emph{mold} and a token or obligation equipped
with a mold is called a \emph{tile}. Visually, the editor indicates the sort of
a tile using color (here, expressions are grey)
and the shape as shown above when the caret is on the tile.
Next, we type a space, \keypress{\textvisiblespace}, causing a space character to be inserted and the caret to shift right in \autoref{fig:operand-obligations}(c). The caret is no longer on a tile, so no visual indications appear and the caret straightens out. Note that \tylr only supports whitespace-insensitive grammars as of this writing.

Next, we type \keypress{+}. In the Hazel grammar, the \code{+} token is only used as an infix operator, so the molder assigns it a shape with concave tips on both sides, as shown visually in \autoref{fig:operand-obligations}(d).
\tylr must then perform syntax repair, because \code{2 +} is not accepted by the grammar. To do so, \tylr performs a \emph{precedence walk}. We will describe precedence walks precisely later in the paper, but for now, let us develop some intuition. The idea is that we need to walk from the current token, \code{+}, to the following token, which in this case is an implicitly included end-of-buffer token. The only walk which allows this is one that traverses the right operand of the form \camelplus{e}{e}. Consequently, \tylr repairs the syntax by inserting an expression-sorted operand obligation, i.e. hole, as shown. For convenience, \tylr also automatically inserts the space between the operator and the hole. When we subsequently type \keypress{\textvisiblespace}, this automatically inserted space is ``consumed'' rather than causing the insertion of a second space, as shown in \autoref{fig:operand-obligations}(e).

Finally, we continue typing as shown, resulting in \autoref{fig:operand-obligations}(f). Operator sequences are parsed according to Hazel's precedences and associativities. Notice in both \autoref{fig:operand-obligations}(d) and \autoref{fig:operand-obligations}(f) that \tylr underlines the associated operands when the caret is on a tile to visually communicate the structure of the overall term. Notice also that completed terms are always convex on both sides. Indeed, a user's mental model can simply be that \tylr inserts obligations to maintain visual convexity.

\subsection{Infix Obligations}
\noindent
Starting from the editor state in \autoref{fig:infix-obligations}(a),
we press backspace, $\backdel$, deleting the \code{+} tile. Textually, this would result in the operands \code{2} and \code{3} appearing adjacent to one another, which is not accepted by the Hazel grammar. There are many possible walks between adjacent terms---one for each of the infix operators---so \tylr abstracts over them by inserting an \emph{infix obligation}, a.k.a. an \emph{operator hole}, as shown in \autoref{fig:infix-obligations}(b). Infix obligations have the lowest precedence.

\begin{wrapfigure}[6]{r}{0.47\textwidth}
  \centering
  \vskip-.4em
  $\tylrseqimg{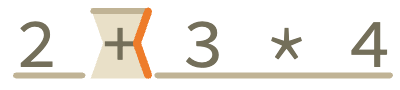}{(a)}
  \tylrseqarrow{\keypress{\backdel}}
  \tylrseqimg{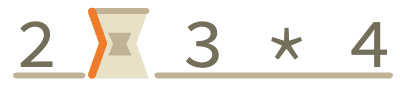}{(b)}$
  \vspace{-.8em}
  \caption{Adjacent operands are connected by infix obligations in \tylr.}
  \label{fig:infix-obligations}
\end{wrapfigure}
One way to think about this mechanism is that operand obligations arise when one term is expected but zero terms appear, whereas infix obligations arise when one term is expected but many adjacent terms appear, as is often the case transiently during edits.

\subsection{Molding Ambiguity}
\noindent
\begin{wrapfigure}[6]{r}{0.4\textwidth}
  \vskip-.4em
  \centering
      $\tylrseqimg{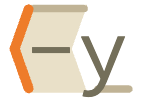}{(a)}
      \tylrseqarrow{\keypress{x}\keypress{\textvisiblespace}}
      \tylrseqimg{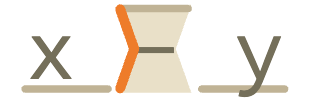}{(b)}$
  \vspace{-.6em}
  \caption{The minus sign has multiple molds. The mold is chosen to minimize obligations.}
  \label{fig:molding-ambiguity}
\end{wrapfigure}
The situation becomes more interesting if we use the \code{-} token, because in the Hazel grammar this token can appear both as an infix operator (subtraction) and as a prefix operator (negation). These correspond to different molds. For example, in \autoref{fig:molding-ambiguity}(a), the \code{-} token before \code{y} is molded as a prefix operator, visualized with a convex tip on the left and a concave tip on the right as shown.

If in this position, we type \keypress{x} followed by \keypress{\textvisiblespace} (to move the caret over the automatically inserted space), \tylr remolds the token into the corresponding infix operator as shown in \autoref{fig:molding-ambiguity}(b). The reason is \tylr's novel  approach to disambiguation: \tylr always selects the mold which locally minimizes the number of obligations that must be inserted. Retaining the prefix mold would have required also inserting an infix obligation, whereas the infix mold requires no obligations.

Although this approach requires considering alternative token moldings as tokens are encountered, we note that there are generally only a few tokens in a typical grammar which can have multiple possible moldings. In the Hazel grammar, only \code{-} and \code{(} have this property. Traditional operator precedence parsing cannot handle such grammars, but using a molder separate from the core parsing algorithm that makes decisions based on repair costs allows us to overcome this expressiveness limitation while retaining a relatively simple core parsing algorithm.

Note that formally, a mold is not simply a shape and sort, but rather a zipper into the grammar, so the molder is also responsible for resolving other parsing ambiguities that might arise as well, e.g. the famous ``dangling else'' problem in imperative languages. This is in contrast to approaches where the parser resolves these ambiguities, e.g. by favoring shifts over reduces. We leave to future work the problem of declaratively specifying disambiguation policies in this setting. We only work with unambiguous grammars in the remainder of the paper.



\subsection{Ghost Obligations} \label{sec:ghost-obligations}
\begin{figure}[h]
\vskip0pt
\[
    \tylrseqimgtwo{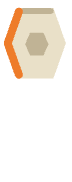}{(a)}
    \tylrseqarrow{\keypress{l}\keypress{e}\keypress{t}\keypress{\textvisiblespace}}
    \tylrseqimgtwo{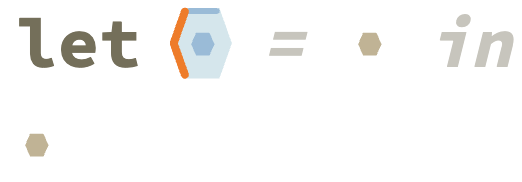}{(b)}
    \tylrseqarrow{\keypress{x}\keypress{\textvisiblespace}\keypress{= \textrm{or} \tab}}
    \tylrseqimgtwo{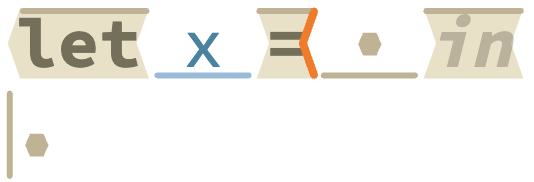}{(c)}
\]
\vspace{-1.4em}
\caption{Ghost obligations are inserted for mixfix forms in \tylr.}
\label{fig:ghost-obligations-let-insertion}
\end{figure}
\noindent
So far, our examples have only used infix operators. Introducing \emph{mixfix operators} requires enriching our language of obligations to handle mixfix delimiters that have not yet been inserted.

For example, starting from an empty buffer in \autoref{fig:ghost-obligations-let-insertion}(a), we can insert a let expression by  typing $\keypress{l}\keypress{e}\keypress{t}\keypress{\textvisiblespace}$. This causes insertion of \emph{ghost obligations}, shown in gray in \autoref{fig:ghost-obligations-let-insertion}(b). These obligations are again determined by computing a precedence walk from the inserted token to the next token. When walking over a token that is not explicitly in the editor state, we can include it as a ghost obligation in the corresponding completion. We again choose the completion that minimizes obligations.
The user can continue by entering a variable to fill the pattern hole at the caret and, when they reach the \code{=}, they can either press tab, \keypress{\tab}, or type over the ghost character(s), here by entering \keypress{=}. Either choice will result in the state shown in \autoref{fig:ghost-obligations-let-insertion}(c). Note that the term structure is visualized despite the missing delimiter.

When inserting a let expression in the middle of an existing program, \tylr
needs to heuristically decide where to place the ghost obligations. The
heuristic we use is is based primarily on newline placement in the buffer,
summarized by the example in
\autoref{fig:ghost-obligations-obligation-placement}.
If we insert the let expression
\begin{wrapfigure}[16]{r}{0.56\textwidth}
  \centering
  \vskip0.4em
      $\tylrseqimgtwosm{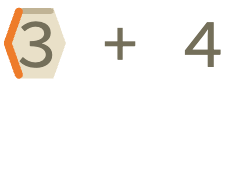}{(a)}
      \tylrseqarrow{\keypress{l}\keypress{e}\keypress{t}\keypress{\textvisiblespace}}
      \tylrseqimgtwosm{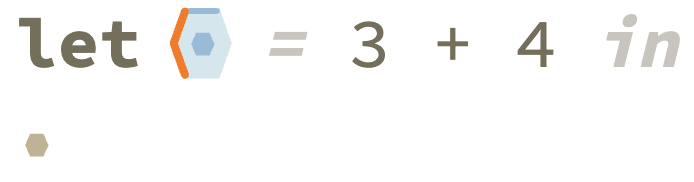}{(b)}$
    \\[1.5em]
        $\tylrseqimgtwosm{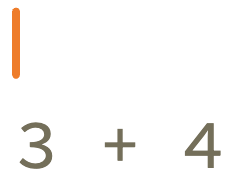}{(c)}
        \tylrseqarrow{\keypress{l}\keypress{e}\keypress{t}\keypress{\textvisiblespace}}
        \tylrseqimgtwosm{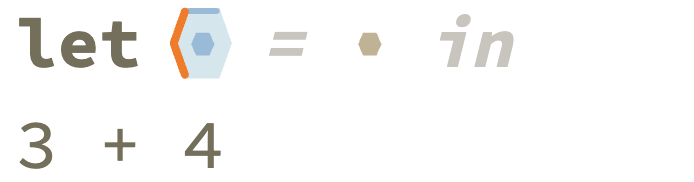}{(d)}$
    \vspace{-0.5em}
    \caption{Ghost obligation placement is chosen heuristically, here based on newline locations.}
    \label{fig:ghost-obligations-obligation-placement}
  \vspace{2em}
      $\tylrseqimgtwosm{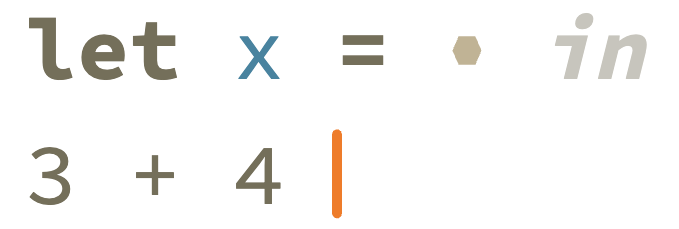}{(a)}
      \hspace{-5pt}
      \tylrseqarrow{\keypress{i}\keypress{n}\keypress{\textvisiblespace}}
      \hspace{-5pt}
      \tylrseqimgtwosm{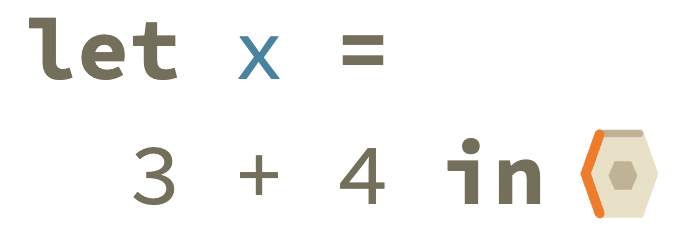}{(b)}$
  \vspace{-0.5em}
  \caption{Ghost obligations can be ignored and are cleaned up if entered elsewhere.}
  \label{fig:ghost-obligations-in}
\end{wrapfigure}
immediately before existing code on the same
line, that code is placed in the first child position of the same sort as shown in \autoref{fig:ghost-obligations-obligation-placement}(a-b). If instead we enter the let expression on a blank line, subsequent lines are placed in the last child position of the same short as shown in \autoref{fig:ghost-obligations-obligation-placement}(c-d).

If the user's intent differs from this heuristic placement, they can ignore the ghost obligations and insert the obligation explicitly where they intend. Given the state in \autoref{fig:ghost-obligations-in}(a), if the user enters \code{in} at the end of the buffer, \tylr would clean up the ghost \code{in} and restructure the code as shown in \autoref{fig:ghost-obligations-in}(b). Note that \tylr automatically manages indentation.


\subsection{Sort Transition Obligations}
\noindent
Some syntactic forms are legal only when entering terms of a particular sort. For example, in Hazel, the arrow operator, \code{->}, can only appear in types. If we enter the arrow in pattern position, as shown in \autoref{fig:sort-transition-obligations}(a), \tylr inserts obligations  indicating that there needs to be a sort transition from the pattern sort to the type sort, as shown in \autoref{fig:sort-transition-obligations}(b). If there were additional text on the right that could be parsed as a pattern, a sort transition ``back'' on the right side would also appear.

\begin{figure}[h]
\vspace{.4em}
\[
    \tylrseqimgtwo{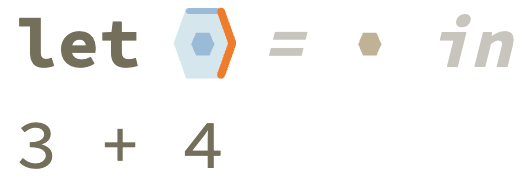}{(a)}
    \tylrseqarrow{\keypress{-}\keypress{>}\keypress{\textvisiblespace}}
    \tylrseqimgtwo{img/overview-let-arrow-transition}{(b)}
\]
\vspace{-1.2em}
\caption{Sort transition obligations are needed when entering forms that are not sort-correct.}
\label{fig:sort-transition-obligations}
\end{figure}

\subsection{Unmolded Tokens}
\begin{wrapfigure}[3]{r}{0.52\textwidth}
\vskip-1em
\begin{minipage}[t]{0.075\textwidth}
\vskip0pt
$\tylrseqimg{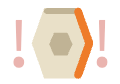}{}$
\end{minipage}
\hfill
\begin{minipage}[t]{0.45\textwidth}
\vskip-12pt
\caption{Unrecognized tokens are left unmolded, and therefore cannot fulfill obligations.}
\label{fig:unmolded-tokens}
\end{minipage}
\end{wrapfigure}
Finally, some tokens are not recognized by the grammar at all. To handle these, the molder marks them as unmolded tokens and treats them like whitespace or comments, i.e. they do not have a shape and so do not participate in obligation insertion. For example, in Hazel, there is no \code{!} token, so \tylr simply marks it in red and ignores it as shown in \autoref{fig:unmolded-tokens}. Notice that no matter where the \code{!} token appears, the operand obligation remains unfilled.

\newcommand{\chpath}{tylrcore}

\newtheorem{assumption}{Assumption}

\newcommand{\ggtrdot}{\textcolor{gray}{\gtrdot}}
\newcommand{\gvert}{\textcolor{gray}{\vert}}
\newcommand{\glp}{\textcolor{gray}{(}}
\newcommand{\glr}{\textcolor{gray}{)}}
\newcommand{\gplus}[1]{\glp#1\textcolor{gray}{\glr^+}}
\newcommand{\gstr}[1]{\glp#1\textcolor{gray}{\glr^*}}
\newcommand{\gopt}[1]{\glp#1\textcolor{gray}{\glr ?}}
\newcommand{\g}[1]{\textcolor{gray}{#1}}

\newcommand{\gbl}{\textcolor{gray}{\{}}
\newcommand{\gbr}{\textcolor{gray}{\}}}
\newcommand{\gsc}{\textcolor{gray}{;}}

\section{\tylrcore} \label{sec:tylrcore}

We now present an error-handling parser calculus, called \tylrcore,
that describes how to complete token sequences with additional tokens
such that they can be parsed into grammatical terms.
Given a language grammar,
whose terminal symbols we call \emph{tiles},
we ``inject'' it with additional \emph{grout} forms
that either stand in for missing terms
or else wrap sort-inconsistent and extraneous terms.
From this grout-injected grammar,
we generate an error-handling parser of tile sequences
that completes its input with grout and additional requisite tiles
(manifesting as ghosts in \talltylr (\autoref{sec:ghost-obligations}))
such that it can be parsed.
By first relaxing grammaticality with grout,
we ensure that the generated tile parser is total over all inputs
(\autoref{theorem:totality}).

As a substrate for these ideas, we generalize and unify two prior accounts
of operator precedence:
Aasa's semantics for precedence annotations in grammars
\cite{aasaPrecedencesSpecificationsImplementations1995}
and Floyd's seminal introduction of operator-precedence parsing (OP parsing)
\cite{floydSyntacticAnalysisOperator1963}.
The two works have related but complementary scopes:
Aasa describes how \emph{precedence annotations} act as filters
on the set of valid derivation trees of the underlying grammar,
as well as how to elaborate the annotated grammar into an unannotated one;
meanwhile, Floyd begins with an unannotated grammar and describes
how to derive a set of \emph{precedence relations}
between terminal symbols that can be used to steer a bottom-up parser.
In \autoref{sec:elab-prec},
we specify a new elaboration
from annotated grammars $\gram$ to unannotated grammars $\cfg$
that simplifies Aasa's version
and generalizes it to allow for arbitrary mixfix forms of varying sorts
in $\gram$.
To demonstrate correctness, we show in \autoref{sec:derive-prec}
that Floyd's precedence relations derived from $\cfg$
cohere as expected with the annotations in $\gram$
(\autoref{theorem:prec-cohere}).

Next, we generalize OP parsing to handle errors using completion-only repair.
After reviewing Floyd's original parsing method
and noting the various ways that it can ``go wrong''
in \autoref{sec:op-parsing-errors},
we present our error-handling variation in
\autoref{sec:op-parsing-error-handling}.
Among other things, our approach generalizes
the single-step precedence comparisons between neighboring input tokens
that steer an OP parser to multi-step precedence \emph{walks},
where the intermediate steps constitute a possible completion between the
tokens.

This approach alone is not quite sufficient to guarantee a successful parse
across all grammars and inputs---moreover, in practice, it would require the
parser to make many heuristic choices between structurally identical tile
completions.
To remedy these issues, we describe in \autoref{sec:injecting-grout}
how to inject grout forms into the translated grammar $\cfg$,
which serve as fallbacks when no tile-only completion
exists, and also as natural defaults when there are many possible choices.
Given these fallbacks, we show that the generated parser is sound and total over
all inputs (\autoref{theorem:totality}).

\begin{wrapfigure}[3]{r}{0.3\textwidth}
  \vskip-1em
  \begin{tabular}{rlcl}
    \text{\textcolor{gray}{option}} & $\o{\alpha}$ & \textcolor{gray}{::=}
    & $\none ~\gvert~ \some{\alpha}$ \\
    \text{\textcolor{gray}{sequence}} & $\s{\alpha}$ & \textcolor{gray}{::=}
    & $\snil ~\gvert~ \alpha\s{\alpha}$
  \end{tabular}
\end{wrapfigure}
\paragraph{Notation}
Throughout this section, we will use
the notation on the right for options and sequences given an element type $\alpha$.
Given a judgment form $J~\alpha$,
we will write $J~\o{\alpha}$ to mean
either $\o{\alpha} = \none$
or else $\o{\alpha} = \some{\alpha}$
such that $J~\alpha$ holds.
Similarly, given $J~\alpha~\beta$,
we will write $J~\o{\alpha}~\o{\beta}$ to mean
either $\o{\alpha} = \none$ and $\o{\beta} = \none$
or else $\o{\alpha} = \some{\alpha}$ and $\o{\beta} = \some{\beta}$
such that $J~\alpha~\beta$ holds.




\subsection{Elaborating Precedence Annotations} \label{sec:elab-prec}

\subsubsection{Precedence-Bounded Grammars}
Our calculus is parametrized by a language grammar $\gram$ in EBNF form with
precedence annotations, what we call in this work
a \emph{precedence-bounded grammar (PBG)}.
Compared to ordinary context-free grammars,
where precedence must be encoded
in tedious towers of dependent production rules,
PBGs allow language forms of the same semantic \emph{sort}
(\eg expressions vs patterns vs types)
to be organized under a single named entity,
leading to more natural and concise grammar definitions.
By having the author explicitly specify the language's sorts, PBGs also help us
generate a minimal set of semantically meaningful grout forms.

\begin{figure}
  \centering
  ${}$\hfill
  \begin{minipage}[b]{0.54\textwidth}
  \[\arraycolsep=3pt\begin{array}{rlcl}
    \text{\textcolor{gray}{tile}} & \labl & \textcolor{gray}{\in} & \lbls \\
    \text{\textcolor{gray}{sort}} & \sortr,\sort & \textcolor{gray}{\in} & \sorts\supseteq\{\startSym\}  \\
    \text{\textcolor{gray}{symbol}} & \sym,\symy & \textcolor{gray}{::=} & \labl ~\gvert~ \sort \\
    \text{\textcolor{gray}{regex}} & \rgx & \textcolor{gray}{::=} & \rone ~\gvert~ \sym  ~\gvert~ \ralttight{\rgx}{\rgx} ~\gvert~ \rseq{\rgx}{\rgx} ~\gvert~ \rstar{\rgx} \\
    \text{\textcolor{gray}{precedence}} & \precm,\precn,\precp,\precq &
    \textcolor{gray}{\in} & \precs =~ \mathbb{N}\sqcup\{\bot,\top\} \\
    \text{\textcolor{gray}{PBG}} & \gram & \textcolor{gray}{\in} & \sorts \rightarrow \precs \rightarrow \rgx
  \end{array}\]
  \vspace{-0.25cm}
  \caption{Syntax of precedence-bounded grammars}\label{fig:pbg}
  \end{minipage}
  \hfill
  \begin{minipage}[b]{0.365\textwidth}
    \captionsetup{justification=raggedleft}
    \[\arraycolsep=3pt\begin{array}{rlcl}
      \text{\textcolor{gray}{terminal}} & \cft & \textcolor{gray}{::=} & \dlroot~\gvert~
      \labl
      ~\gvert~ \drroot \\
      \text{\textcolor{gray}{nonterminal}} & \cfsr,\cfs & \textcolor{gray}{::=} & \psq \\
      \text{\textcolor{gray}{symbol}} & \cfx & \textcolor{gray}{::=} & \cft ~\gvert~ \cfs \\
      \text{\textcolor{gray}{CFG}} & \cfg & \textcolor{gray}{\in} & \{\cfs\pruduce\s{\cfx}\}
    \end{array}\]
    \vspace{-0.25cm}
    \caption{Syntax of elaborated \\ context-free grammars}\label{fig:cfg}
  \end{minipage}
  \hfill${}$
\end{figure}

\newcommand{\thazel}{\lbls_{\textsf{HZ}}}
\newcommand{\shazel}{\sorts_{\textsf{HZ}}}

\newcommand{\leftarg}{\textcolor[HTML]{67AFA2}{\bddsort{\bot}{\expsort}{2}}}
\newcommand{\oleftarg}{\textcolor[HTML]{67AFA2}{\bddsort{\overline{\bot}}{\expsort}{\overline{2}}}}
\newcommand{\rightarg}{\textcolor[HTML]{9B5377}{\bddsort{3}{\expsort}{\bot}}}
\newcommand{\orightarg}{\textcolor[HTML]{9B5377}{\bddsort{\overline{3}}{\expsort}{\overline{\bot}}}}

\newcommand{\phnt}{\vphantom{\mathstrut}}

\begin{figure}[h]
  \begin{minipage}[b]{0.5\textwidth}
  \vskip 1pt
  \setlength{\tabcolsep}{3pt}
  \setlength{\arraycolsep}{2pt}
  \[\begin{array}{rcl}
    \thazel &=& \left\{
      \spretile{let}\,\sbintile{=}\,\sbintile{in}
      \,...\,
      \spretile{(}\,\sbintile{,}\,\sposttile{)}
    \right\} \\[1mm]
    &\sqcup& \left\{
      \sbintile[\pc]{:}\,
      \vartile[\pc]{var}\,
      \spretile[\pc]{(}\,\sbintile[\pc]{,}\,\sposttile[\pc]{)}
    \right\} \\[1mm]
    &\sqcup& \left\{
      \soptile[\yc]{Num}\,
      \spretile[\yc]{(}\,\sbintile[\yc]{,}\,\sposttile[\yc]{)}
    \right\}
    \hspace{.9cm}
    \shazel = \{\startsort{\expsort}, \patsort, \typsort\}
  \end{array}\]
  \vspace{8pt}

  $\ghazel = \left\{
    \begin{array}{rclrl}
      \expsort & \mapsto & 0 & \mapsto & \spretile{let}
      {\rand}\,\patsort\,{\rand} \sbintile{=} {\rand}\,\expsort\,{\rand}
      \sbintile{in} {\rand}\,\expsort \\
      & & 1^{\succ} & \mapsto &
      \expsort\,{\rand}\,(\sbintile{+} \ror \sbintile{-})\, {\rand}\,\expsort \\
      & & 2^{\succ} & \mapsto &
      \expsort\,{\rand}\,(\sbintile{*} \ror \sbintile{/})\, {\rand}\,\expsort \\
      & & 3 & \mapsto &
      \vartile{num} \ror \vartile{var} \\
      & & & & \ror
      \spretile{(} {\rand}\,\expsort\,{\rand}\, ( \sbintile{,} {\rand}\,\expsort
      )^* {\rand} \sposttile{)} \\[.5em]
      \patsort & \mapsto & 0 & \mapsto & \patsort\,{\rand}\sbintile[\patcolor]{:}{\rand}\,\typsort \\
      & & 1 & \mapsto & \vartile[\patcolor]{var} \ror \spretile[\patcolor]{(} {\rand}\,\patsort\,{\rand}\, ( \sbintile[\patcolor]{,} {\rand}\,\patsort )^* {\rand} \sposttile[\patcolor]{)} \\[.5em]
      \typsort & \mapsto & 0 & \mapsto & \soptile[\typcolor]{Num} \ror \spretile[\typcolor]{(} {\rand}\,\typsort\,{\rand}\, ( \sbintile[\typcolor]{,} {\rand}\,\typsort )^* {\rand} \sposttile[\typcolor]{)}
    \end{array}
  \right\}$
  \vskip 1pt
  \caption{A PBG $\ghazel$ for a small expression-oriented language.
  Sorts consist of expressions ($\expsort$) in grey, patterns
  ($\patsort$) in blue, and types ($\typsort$) in purple.
  Tiles are distinguished by text, shape, and color-coded sort.}
  \label{fig:pbg-example}
  \end{minipage}
  \hfill
  \begin{minipage}[b]{0.44\textwidth}
    \vskip -2pt
    \tikzset{nt/.style={
      inner sep=2,
      outer sep=0,
    }}
    \begin{tikzpicture}[
      >={Classical TikZ Rightarrow[]},
    ]
      \node (botbot) [nt]
      at (0, 0) {
        \phnt
        $\botsort{\expsort}$
      };
      \node (botbot') [right=1.1 of botbot] {
        \phnt
        $\bddsort{\bot}{\expsort}{1}\sbintile{+}\bddsort{2}{\expsort}{\bot} ~\gvert~
        \bddsort{\bot}{\expsort}{2}\sbintile{*}\bddsort{3}{\expsort}{\bot}$};

      \node (botone) [nt, below=.6 of botbot] {
        \phnt
        $\bddsort{\bot}{\expsort}{1}$
      };
      \node (botone') [right=1.1 of botone] {
        \phnt
        $\bddsort{\bot}{\expsort}{1}\sbintile{+}\bddsort{2}{\expsort}{1}
        ~\gvert~
        \bddsort{\bot}{\expsort}{2}\sbintile{*}\bddsort{3}{\expsort}{1}$
      };

      \node (bottwo) [nt, below=.6 of botone] {
        \phnt
        $\bddsort{\bot}{\expsort}{2}$
      };
      \node (bottwo') [right=1.1 of bottwo] {
        \phnt
        $\bddsort{\bot}{\expsort}{2}\sbintile{*}\bddsort{3}{\expsort}{2}$
      };

      \node (bottop) [nt, below=.6 of bottwo] {
        \phnt
        $\bddsort{\bot}{\expsort}{\top}$
      };

      \node (twobot) [nt, below right=0 and .35 of botbot]
      {\vphantom{\mathstrut}$\bddsort{2}{\expsort}{\bot}$};
      \node (twobot') [right=.85 of twobot] {
         \phnt
         $\bddsort{2}{\expsort}{2}\sbintile{*}\bddsort{3}{\expsort}{\bot}$
      };

      \node (threebot) [nt, below=.6 of twobot] {
        \phnt
        $\bddsort{3}{\expsort}{\bot}$
      };

      \node (topbot) [nt, below=.6 of threebot] {
        \phnt
        $\bddsort{\top}{\expsort}{\bot}$
      };
      \node (topbot') [right=.85 of topbot] {
        \phnt
        $\spretile{let}\bddsort{\bot}{\patsort}{\bot}\sbintile{=}\bddsort{\bot}{\expsort}{\bot}\sbintile{in}\bddsort{1}{\expsort}{\bot}$
      };

      \node (toptop) [nt, below=.6 of topbot] {
        \phnt
        $\topsort{\expsort}$
      };
      \node (toptop') [right=.85 of toptop] {};
      \node (toptop'') [above=.15 of toptop', anchor=west] {
        \phnt
        $\vartile{num}
        ~\gvert~
        \spretile{(}\botsort{\expsort}\sposttile{)}$
      };
      \node (toptop''') [below=.1 of toptop', anchor=west] {
        \phnt
        $\spretile{(}\botsort{\expsort}\sbintile{,}\botsort{\expsort}\sposttile{)}
        ~\gvert~ \dots$
      };

      \node (pbotbot) [nt, below=0.85 of bottop] {
        \phnt
        $\bddsort{\bot}{\patsort}{\top}$
      };
      \node (pbotbot') [right=1.1 of pbotbot] {
        \phnt
        $\bddsort{\bot}{\patsort}{1}\,\sbintile[\patcolor]{:}\,\botsort{\typsort}$};

      \node (ptoptop) [nt, below right=0 and .35 of pbotbot]
      {\vphantom{\mathstrut}$\topsort{\patsort}$};
      \node (ptoptop') [right=.85 of ptoptop] {
          \phnt
          $\vartile[\patcolor]{var}
          ~\gvert~
          \spretile[\patcolor]{(} \botsort{\patsort} \sposttile[\patcolor]{)}
          ~\gvert~ \dots$
      };


      \node (ttoptop) [nt, below=.15 of ptoptop]
      {\vphantom{\mathstrut}$\topsort{\typsort}$};
      \node (ttoptop') [right=.85 of ttoptop] {
          \phnt
          $\soptile[\typcolor]{Num}
          ~\gvert~
          \spretile[\typcolor]{(} \botsort{\typsort} \sposttile[\typcolor]{)}
          ~\gvert~ \dots$
      };

      \draw[-implies, double equal sign distance, \subsumecolor, line width=0.7pt] (botbot) -- (botbot');
      \draw[-implies, double equal sign distance, \subsumecolor, line width=0.7pt] (botone) -- (botone');
      \draw[-implies, double equal sign distance, \subsumecolor, line width=0.7pt] (bottwo) -- (bottwo');
      \draw[-implies, double equal sign distance, \subsumecolor, line width=0.7pt] (twobot) -- (twobot');
      \draw[-implies, double equal sign distance, \subsumecolor, line width=0.7pt] (topbot) -- (topbot');
      \draw[-implies, double equal sign distance, \subsumecolor, line
      width=0.7pt] (toptop) -- (toptop');
      \draw[-implies, double equal sign distance, \subsumecolor, line width=0.7pt] (pbotbot) -- (pbotbot');
      \draw[-implies, double equal sign distance, \subsumecolor, line width=0.7pt] (ptoptop) -- (ptoptop');
      \draw[-implies, double equal sign distance, \subsumecolor, line width=0.7pt] (ttoptop) -- (ttoptop');

      \draw[-implies, double equal sign distance, \tightencolor, line width=0.7pt] (botbot) -- (botone);
      \draw[-implies, double equal sign distance, \tightencolor, line width=0.7pt] (botone) -- (bottwo);
      \draw[-implies, double equal sign distance, \tightencolor, line width=0.7pt] (bottwo) -- (bottop);
      \draw[-implies, double equal sign distance, \tightencolor, line width=0.7pt] (bottop) -- (toptop.west);

      \draw[-implies, double equal sign distance, \tightencolor, line width=0.7pt] (botbot) -- (twobot.west);
      \draw[-implies, double equal sign distance, \tightencolor, line width=0.7pt] (twobot) -- (threebot);
      \draw[-implies, double equal sign distance, \tightencolor, line width=0.7pt] (threebot) -- (topbot);
      \draw[-implies, double equal sign distance, \tightencolor, line
      width=0.7pt] (topbot) -- (toptop);

      \draw[-implies, double equal sign distance, \tightencolor, line width=0.7pt] (pbotbot) -- (ptoptop.west);
    \end{tikzpicture}
    \vskip -7pt
    \caption{
          An excerpt of the CFG $\hhazel$
          elaborated (\autoref{fig:compile-precedence})
          from $\ghazel$ (\autoref{fig:pbg-example}).
          The production rules are arranged and color-coded
          by whether each is elaborated by \textcolor{\subsumecolor}{reduction}
          or by \textcolor{\tightencolor}{tightening}.
        \label{fig:cfg-example}}
  \end{minipage}
\end{figure}

The syntax of PBGs is given in \autoref{fig:pbg}.
A PBG $\gram$ is a partial function mapping a sort $\sort\in\sorts$ and precedence
$\precp\in\precs$ to a regex $\rgx$ over symbols $\sym$, each either a tile
$\labl\in\lbls$ or a sort $\sort\in\sorts$.
We assume that $\sorts$ includes a designated start sort $\startSym$.
We assume $\precs = \mathbb{N} \sqcup \{\bot,\top\}$ includes
the natural numbers $\mathbb{N}$ for precedence levels assigned in $\gram$
as well as minimum $\bot$ and maximum $\top$ precedence levels that are reserved
for internal use.
We further assume that $\precs$
is equipped with ordering relations $\plt_\sort, \pgt_\sort$
that abstract the details of associativity for each sort $\sort\in\sorts$.
For example, $5 \plt_\expsort 5$ would encode that infix operators at precedence
level $5$ of sort $\expsort$ are right-associative---otherwise, outside
of reflexive pairs, these relations coincide with the usual ordering
relations on natural numbers.
For all sorts $\sort\in\sorts$, we assume
$\precp \prec_\sort \top \succ_\sort \precp$ and
$\precp \succ_\sort \bot \prec_\sort \precp$
for all other $\precp\in\mathbb{N}$.


\autoref{fig:pbg-example} gives a concrete PBG $\ghazel$ encoding an excerpt of
Hazel expressions, patterns, and types, which we will use as a running example
throughout the rest of the paper.
Here, each precedence level $\precp\in\mathbb{N}$ of sort $\sort$ is optionally
tagged with the relation $\otimes\in\{\plt_\sort, \pgt_\sort\}$
that applies to the reflexive pair $\precp \otimes \precp$, if any---in this
case, levels $1^{\pgt}$ and $2^{\pgt}$ of sort $\expsort$ are marked as
left-associative, \ie $1 \pgt_\expsort 1$ and $2 \pgt_\expsort 2$.
Meanwhile, the tiles are distinguished by their shape and color (gray for expressions,
blue for patterns, purple for types) in addition to their text.

A regex $\rgx$ is either $\rone$, matching the empty string;
a symbol $\sym$;
a choice $\ralt{\rgx_L}{\rgx_R}$;
a concatenation $\rseq{\rgx_L}{\rgx_R}$; or
a Kleene star $\rstar{\rgx}$.
\newcommand{\rgxsem}[1]{\llbracket #1 \rrbracket}
Its language $\rgxsem{\rgx}$ of matching symbol strings
$\s{\sym}$ is defined as follows: \\[-1ex]
\begin{minipage}[t]{0.15\textwidth}
  \vskip 0pt
\begin{align*}
  \rgxsem{\rone} ~&=~ \{\snil\} \\
  \rgxsem{\sym} ~&=~ \{\sym\} \\
\end{align*}
\end{minipage}
\hfill
\begin{minipage}[t]{0.45\textwidth}
  \vskip 0pt
  \begin{align*}
    \rgxsem{\ralt{\rgx_L}{\rgx_R}} ~&=~ \rgxsem{\rgx_L} \cup \rgxsem{\rgx_R} \\
    \rgxsem{\rseq{\rgx_L}{\rgx_R}} ~&=~ \{\s{\sym}_L\s{\sym}_R \mid \s{\sym}_L \in \rgxsem{\rgx_L},\, \s{\sym}_R \in \rgxsem{\rgx_R}\} \\
\end{align*}
\end{minipage}
\hfill
\begin{minipage}[t]{0.35\textwidth}
  \vskip 0pt
  \begin{minipage}[t]{0.45\textwidth}
  \vskip 0pt
  \begin{align*}
    \rgxsem{\rstar{\rgx}} ~&=~ \bigcup_{k\in\mathbb{N}} \rgxsem{\rgx^k}
  \end{align*}
  \end{minipage}%
  \hfill%
  \begin{minipage}[t]{0.48\textwidth}
  \vskip 8.5pt
  \vphantom{$\mathstrut$}\text{where $\rgx^0 = \rone$} \\
  \text{and $\rgx^{k+1} = \rseq{\rgx}{\rgx^k$}}
  \end{minipage}
\end{minipage} \\
A \emph{$\gram$-form} is a symbol string $\s{\sym} \in
\rgxsem{\gram(\sort,\precp)}$ for any $\sort\in\sorts, \precp\in\precs$.
To make use of operator-precedence parsing techniques,
we assume that every $\gram$-form is in \emph{operator form}
\cite{floydSyntacticAnalysisOperator1963}:
\begin{assumption}[Operator Form] \label{assum:operator-form}
  There exist no sorts $\sort_L, \sort_R \in \sorts$
  and regex $\gram(\sort,\precp)$
  such that ${\dots\sort_L\sort_R\dots} \in \rgxsem{\gram(\sort,\precp)}$.
\end{assumption}
\noindent In other words, every $\gram$-form may be written in the form
$\o{\sort}_0 \iseq{\labl_i\o{\sort}_{i+1}}{0\leq i \leq k}$.
\citet{greibachNewNormalFormTheorem1965} showed that every CFG
can be normalized to a strongly equivalent one in operator form.

\subsubsection{Context-Free Grammars}
We assign meaning to the precedence-annotated grammar $\gram$
by elaborating it to an unannotated context-free grammar (CFG) $\cfg$,
whose syntax is given in \autoref{fig:cfg}.
An elaborated CFG $\cfg$ is a (possibly infinite) set
of production rules $\cfs \pruduce \s{\cfx}$,
each mapping a nonterminal $\cfs$
to a finite sequence $\s{\cfx}$ of symbols---we will refer
to $\cfs$ and $\s{\cfx}$ as the rule's \emph{producer} and \emph{product}.
Each symbol $\cfx$ is either a terminal $\cft$ or a nonterminal $\cfs$.
A terminal symbol $\cft$ is either a tile $\labl$ or a root delimiter, $\dlroot$ or $\drroot$,
marking the start or end of input.
Meanwhile, a nonterminal is a \emph{precedence-bounded sort} $\psq$,
where $\precp,\precq\in\precs$ will serve as constraints from the left and right
sides of the nonterminal in the overall production tree.
\autoref{fig:cfg-example} shows an excerpt of the CFG $\hhazel$
elaborated from $\ghazel$, the process of which we discuss in \autoref{sec:derive-prec}.

We have specialized the symbols here to serve as our elaboration outputs,
but their rewriting semantics are standard:
given a production rule ${\cfs}\pruduce{\s{\cfx}}$,
we say that the symbol string $\s{\cfx}_L \cfs \s{\cfx}_R$ \emph{produces}
the string $\s{\cfx}_L \s{\cfx}\,\s{\cfx}_R$,
written ${\s{\cfx}_L \cfs \s{\cfx}_R}\produce{\s{\cfx}_L \s{\cfx}\,\s{\cfx}_R}$
reusing the production rule syntax.
A production sequence $\s\cfx_0 \produce \s\cfx_1 \produce \dots$ from the
designated start string
$\s\cfx_0 = \dlroot \botsort{\startSym} \drroot$
is called a \emph{derivation};
the language of a CFG collects all of its derivable strings.

\begin{figure}[b]
  \begin{minipage}[t]{.34\textwidth}
  \vskip 2pt
$\text{\textcolor{gray}{comparison}} ~ \cmpr ~ \textcolor{gray}{::=} ~ \lessdot ~\gvert~ \doteq ~\gvert~ \gtrdot$\\[3ex]
\judgboxy[\hfill]{\cft_L \cmpr_{\o{\cfsr}} \cft_R}{$\cft_L$ compares with
$\cft_R$ \\ (over $\o{\cfsr}$)}
\begin{mathpar}
  \inferrule[Prec-LT]{
    \adj{\cft_L}{\cfs} \\
    \cfs \produce^* \o{\cfsr}\cft_R\dots
  }{\ltrel{\cft_L}{\dsty{\o{\cfsr}}}{\cft_R}} \\

  \inferrule[Prec-EQ]{
    \cfs \produce \dots\cft_L\o{\cfsr}\cft_R\dots
  }{\eqrel{\cft_L}{\dsty{\o{\cfsr}}}{\cft_R}} \\

  \inferrule[Prec-GT]{
    \cfs \produce^* \dots\cft_L\o{\cfsr} \\
    \adj{\cfs}{\cft_R}
  }{\gtrel{\cft_L}{\dsty{\o{\cfsr}}}{\cft_R}}
\end{mathpar}
\caption{Precedence comparisons}
\label{fig:prec-rel-2}
\end{minipage}
\hfill
\begin{minipage}[t]{.64\textwidth}
  \vskip 0pt
  \setlength{\tabcolsep}{1.5pt}
  \begin{tabular}{c|c@{\hskip3pt}cccccccccccc|}
  \cline{1-1}
  \multicolumn{1}{|c|}{\diagbox[leftsep=2.5pt,rightsep=2.4pt]{$\cft_L$}{$\cft_R$}} &
  ${}$\hspace{1.5pt}$\drroot$ & \spretile{let} & \sbintile{=} & \sbintile{in} & \sbintile{+} & \sbintile{*} &
  \spretile{(} & \sposttile{)} & \soptile{$num$} &\sbintile[\patcolor]{:} &  \spretile[\patcolor]{(} &
  \sposttile[\patcolor]{)} & \multicolumn{1}{c}{$\soptile[\typcolor]{Num}$}\\
  \hline
  $\dlroot$ & $\,\doteq$ & $\lessdot$ & & & $\lessdot$ & $\lessdot$ & $\lessdot$ &
  & $\lessdot$ & & & & \\
  \spretile{let} & & & $\doteq$ & & & & & & & $\lessdot$ & $\lessdot$ & & \\
  \sbintile{=} & & $\lessdot$ & & $\doteq$ & $\lessdot$ & $\lessdot$ &
  $\lessdot$ & & $\lessdot$ & & & & \\
  \sbintile{in} & $\gtrdot$ & $\lessdot$ & & $\gtrdot$ & $\lessdot$ & $\lessdot$ &
  $\lessdot$ & $\gtrdot$ & $\lessdot$ & & & & \\
  \sbintile{+} & $\,\gtrdot$ & $\lessdot$ & & $\gtrdot$ & $\gtrdot$ & $\lessdot$ &
  $\lessdot$ & $\,\gtrdot$ & $\lessdot$ & & & & \\
  \sbintile{*} & $\,\gtrdot$ & $\lessdot$ & & $\gtrdot$ & $\gtrdot$ & $\gtrdot$ &
  $\lessdot$ & $\gtrdot$ & $\lessdot$ & & & & \\
  \spretile{(} & & $\lessdot$ & & & $\lessdot$ & $\lessdot$ & $\lessdot$ &
  $\doteq$ & $\lessdot$ & & & & \\
  \sposttile{)} & $\,\gtrdot$ & & & $\gtrdot$ & $\gtrdot$ & $\gtrdot$ & &
  $\gtrdot$ & & & & & \\
  \soptile{$num$} & $\,\gtrdot$ & & & $\gtrdot$ & $\gtrdot$ & $\gtrdot$ & &
  $\gtrdot$ & & & & & \\
  \sbintile[\patcolor]{:} & & & $\gtrdot$ & & & & & & & $\gtrdot$ & & $\gtrdot$ & $\lessdot$ \\
  \spretile[\patcolor]{(} & & & & & & & & & & $\lessdot$ & $\lessdot$ & $\doteq$ &  \\
  \sposttile[\patcolor]{)} & & & $\gtrdot$ & & & & & & & $\gtrdot$ &  & $\gtrdot$ & \\
  \soptile[\typcolor]{Num} & & & $\gtrdot$ & &  &  &  &  &  &  &  & $\gtrdot$ & \\
  \cline{2-14}
  \end{tabular}
\caption{An excerpt of precedence comparisons $\cft_L \cmpr \cft_R$
for $\hhazel$ (\autoref{fig:cfg-example})}
\label{fig:hazel-prec}
\end{minipage}
\end{figure}






\subsubsection{Precedence Comparisons} \label{sec:precedence-comparisons}
Given a CFG, we may generate
a collection of \emph{precedence comparisons}
that classify derivation patterns between neighboring terminals.
Each comparison $\cft_L \odot_{\o{\cfsr}} \cft_R$ means there exists
a derivable string with the substring $\cft_L\o{\cfsr}\cft_R$,
which consists of neighbors $\cft_L, \cft_R$
that are either adjacent ($\o\cfsr = \none$)
or separated ($\o\cfsr = \some{\cfsr}$)
by a nonterminal $\cfsr$.
Floyd's original definition \cite{floydSyntacticAnalysisOperator1963}
does not surface the operator index $\o\cfsr$, whose omission we will later show
contributes to an unsound parsing method (\autoref{sec:invalid-reduction}),
and whose use in our resolution we describe in
\autoref{sec:op-parsing-error-handling}.
Until then, we will similarly
omit it from the notation $\cft_L \odot \cft_R$---%
note this is different from assuming $\o\cfsr = \none$,
which we will always notate explictly $\cft_L \odot_{\none} \cft_R$.

The comparison operator $\odot\in\{\lessdot, \doteq, \gtrdot\}$ indicates
in what relative order the neighbors $\cft_L, \cft_R$
were first produced in the derivation.
\autoref{fig:hazel-prec} shows an excerpt of the precedence comparisons for
$\hhazel$ (\autoref{fig:cfg-example}).
The derivation
$\botsort{\expsort}
\produce \bddsort{\bot}{\expsort}{1} \sbintile{+} \bddsort{2}{\expsort}{\bot}
\produce \bddsort{\bot}{\expsort}{1} \sbintile{+}
  \bddsort{2}{\expsort}{2}\sbintile{*}\bddsort{3}{\expsort}{\bot}$
tells us $\sbintile{+} \lessdot \sbintile{*}$
\big(``$\sbintile{+}$ binds less tightly than $\sbintile{*}$''\big)
since $\sbintile{+}$ is produced before its neighbor $\sbintile{*}$.
Meanwhile, the derivation
$\botsort{\expsort}
\produce \topsort{\expsort}
\produce \spretile{(} \botsort{\expsort} \sposttile{)}$
tells us $\spretile{(} \doteq \sposttile{)}$
\big(``$\spretile{(}$ matches $\sposttile{)}$''\big)
since neighbors $\spretile{(}$ and $\sposttile{)}$ are produced together.
Keep in mind that the written order of arguments $\cft_L, \cft_R$
in each comparison $\cft_L \odot \cft_R$
reflects their sequential order in the derived string,
so we should not generally expect $\cft_L \lessdot \cft_R$
to be equivalent to $\cft_R \gtrdot \cft_L$, nor for $\doteq$ to be symmetric.




\subsubsection{Precedence Elaboration}
\label{sec:derive-prec}
Elaboration turns an annotated PBG $\gram$
into an unannotated CFG $\cfg$
whose nonterminals internalize relevant bounding annotations.
Governing its design is the expectation that
precedence comparisons $\labl_L \odot \labl_R$ between tiles in $\cfg$
mirror, when relevant, numeric comparisons between the tiles' backing
annotations in $\gram$ (\autoref{theorem:prec-cohere}).

\begin{theorem}[Annotation-Comparison Coherence] \label{theorem:prec-cohere}
  For all sorts $\sort$, precedence levels $\precp_L,\precp_R$,
  and tiles $\labl_L, \labl_R$
  such that $\dots\labl_L s \in \rgxsem{\gram(\sort,\precp_L)}$
  and ${\sort\labl_R\dots}\in\rgxsem{\gram(\sort,\precp_R)}$,
  the following equivalences hold:
  \[
    \labl_L \lessdot \labl_R \iff \precp_L \plt_\sort \precp_R
    \hspace{3em}
    \labl_L \gtrdot \labl_R \iff \precp_L \pgt_\sort \precp_R
  \]
\end{theorem}

To motivate our design (\autoref{fig:compile-precedence}),
it is instructive first to consider the issues
with simpler alternatives---in particular, elaborating to nonterminals with
fewer than two bounds.
If we elaborated the PBG $\gram$ to a CFG $\cfg_0$
in which the nonterminals were plain unbounded sorts $\sort$,
the best we could do is a trivial elaboration that
simply ignores the precedence annotations in $\gram$: \\
${}$\hfill
$\cfg_0 \triangleq \{ \sort \pruduce \s{\sym} \mid \s{\sym} \in
\rgxsem{\gram(\sort,\precp)}, \precp \in \precs, \sort\in\sorts\}$
\hfill${}$ \\
$\cfg_0$ allows for problematic derivations like
$\expsort
\produce \glangle \mc{\expsort} \sbintile{*} \expsort \grangle
\produce \glangle \mlangle \expsort \sbintile{+} \expsort \mrangle \sbintile{*}
\expsort \grangle$, problematic because it witnesses the unwanted comparison
$\sbintile{+} \gtrdot \sbintile{*}$ \big(``$\sbintile{+}$ binds more tightly than $\sbintile{*}$''\big).

A better approach---similar in effect to
that of \citet{danielssonParsingMixfixOperators2011}
and of \citet{klintUsingFiltersDisambiguation}---would
use singly-bounded nonterminals $\sort^\precp$
and limit their productions
to $\gram$-forms of equal or stronger precedence:
\hfill$\vphantom{\mathstrut}
\cfg_1 \triangleq \{ \sort^\precp \pruduce \lceil\s{\sym}\rceil^\precp \mid \s{\sym} \in \rgxsem{\gram(\sort,\precq)}, \precp \preceq_\sort \precq, \sort\in\sorts\}
$\hfill{} \\
where $\lceil \s{\sym} \rceil^\precp$ lifts each sort symbol
$\sort\in\s{\sym}$ to some suitably bounded nonterminal.
\begin{wrapfigure}[5]{l}{0.525\textwidth}
(a) $\expsort^\bot
\produce \glangle \mc{\expsort^2} \sbintile{*} \expsort^3 \grangle
\produce \glangle \mlangle \expsort^1 \sbintile{+} \ec{\expsort^2} \mrangle
\sbintile{*} \expsort^3 \grangle$ \\
(b) $\expsort^\bot
\produce \glangle \mc{\expsort^2} \sbintile{*} \expsort^3 \grangle
\produce \glangle \mlangle
  \spretile{let} \patsort^\bot \sbintile{=} \expsort^\bot \sbintile{in} \ec{\expsort^0}
\mrangle \sbintile{*} \expsort^3 \grangle$ \\
(c) $\expsort^\bot
\produce \glangle \expsort^2 \sbintile{*} \mc{\expsort^3} \grangle
\produce \glangle \expsort^2 \sbintile{*} \mlangle
  \spretile{let} \patsort^\bot \sbintile{=} \expsort^\bot \sbintile{in} \expsort^0
\mrangle\grangle$
\end{wrapfigure}%
This approach properly rules out unwanted derivations on the left like
(a) \big(for witnessing $\sbintile{+} \gtrdot \sbintile{*}$\big) and
(b) \big($\sbintile{in} \gtrdot \sbintile{*}$\big),
since the left argument $\mc{\expsort^2}$ of $\sbintile{*}$
cannot produce the $\sbintile{+}$- and $\spretile{let}$-forms
of weaker precedence levels $1$ and $0$.
However, $\cfg_1$ is overly conservative:
it also rules out acceptable derivations like (c) \big($\sbintile{*} \lessdot \spretile{let}$\big).
Ultimately the purpose of precedence annotations is to resolve choices
between different possible reduction orders: given a reduced
child, which of the operators on either side of it should be reduced next as
part of its parent?
Derivations (a) and (b) represent disfavored choices of reducing the left parent
\big($\sbintile{+}$ and $\sbintile{in}$\big) before the right \big($\sbintile{*}$\big)
over the reduced children $\ec{\expsort^2}$ and $\ec{\expsort^0}$, respectively.
On the other hand, (c) has no viable alternative reduction order, since
$\spretile{let}$ cannot parent a child to its left.
In such cases, the precedence annotations need not be consulted.

\begin{figure}
\centering
\begin{minipage}[t]{0.4\textwidth}
  \judgboxy{\cfx\consistent\sym}{CFG symbol $\cfx$ is consistent\\ with PBG symbol $\sym$}
  \begin{mathpar}
    \inferrule{}{\labl\consistent\labl}

    \inferrule{}{\psq\consistent\sort}
  \end{mathpar}
\end{minipage}%
\hfill%
\begin{minipage}[t]{0.52\textwidth}
  \judgbox{{\cfs}\produce{\s{\cfx}}}{Nonterminal $\cfs$ produces
  symbols $\s{\cfx}$}
  \begin{mathpar}
    \inferrule[Produce-Subsume]{
      {\cfs}\reduce{\s{\cfx}}
    }{{\cfs}\produce{\s{\cfx}}}

    \inferrule[Produce-Tighten]{
      \precp \pleq_\sort \precm \\
      \precn \pgeq_\sort \precq
    }{{\psq}\produce{\msn}}
  \end{mathpar}
\end{minipage}

\vspace{1em}
\judgbox{\synelab{\gram}{\s{\cfx}}{\cfs}}{Symbol sequence $\s{\cfx}$ reduces to
nonterminal $\cfs$}
\begin{mathpar}
  \inferrule[PElab-Operand$\;\glp\textcolor{gray}{k\geq0}\glr$]{
    \iseq{\sym_i}{0\leq i\leq k}\in\rgxsem{\gram(\sort,\medsquare)} \\
    \iseq{\cfx_i\consistent\sym_i}{0\leq i\leq k} \\\\
    \sym_0 \neq \sort \neq \sym_k
  }{
    \synelab{\gram}{\iseq{\cfx_i}{0\leq i\leq k}}{\bddsort{\top}{\sort}{\top}}
  }

  \inferrule[PElab-Infix$\;\glp\textcolor{gray}{k\geq0}\glr$]{
    \iseq{\sym_i}{0\leq i\leq k}\in\rgxsem{\gram(\sort,\precm)} \\
    \iseq{\cfx_i\consistent\sym_i}{0\leq i\leq k} \\\\
    \cfx_0 = \bddsort{\precp}{\sort}{\precn_L} \\
    \bddsort{\precn_R}{\sort}{\precq} = \cfx_k \\\\
    \precn_L \pgt_\sort \precm \plt_\sort \precn_R
  }{
    \synelab{\gram}{\iseq{\cfx_i}{0\leq i\leq k}}{\bddsort{\min(\precp,\precm)}{\sort}{\min(\precm,\precq)}}
  }

  \inferrule[PElab-Prefix$\;\glp\textcolor{gray}{k\geq1}\glr$]{
    \iseq{\sym_i}{0\leq i\leq k}\in\rgxsem{\gram(\sort,\precm)} \\
    \iseq{\cfx_i\consistent\sym_i}{0\leq i\leq k} \\\\
    \sym_0 \neq \sort \\
    \bddsort{\precn_R}{\sort}{\precq} = \cfx_k \\\\
    \precm \plt_\sort \precn_R
  }{
    \synelab{\gram}{\iseq{\cfx_i}{0\leq i\leq k}}{\bddsort{\top}{\sort}{\min(\precm,\precq)}}
  }

  \inferrule[PElab-Postfix$\;\glp\textcolor{gray}{k\geq1}\glr$]{
    \iseq{\sym_i}{0\leq i\leq k}\in\rgxsem{\gram(\sort,\precm)} \\
    \iseq{\cfx_i\consistent\sym_i}{0\leq i\leq k} \\\\
    \cfx_0 = \bddsort{\precp}{\sort}{\precn_L} \\
    \sort \neq \sym_k \\\\
    \precn_L \pgt_\sort \precm
  }{
    \synelab{\gram}{\iseq{\cfx_i}{0\leq i\leq k}}{\bddsort{\min(\precp,\precm)}{\sort}{\top}}
  }
\end{mathpar}
\caption{Bidirectional elaboration of production $\cfs\produce\s{\cfx}$ and
reduction $\cfs\reduce\s{\cfx}$ rules for CFG $\cfg$ from PBG $\gram$} \label{fig:compile-precedence}
\end{figure}

To account properly for these left- and right-sided concerns,
our elaborated grammar $\cfg$ features nonterminals $\cfs = \psq$
with separate precedence bounds $\precp$ and $\precq$ on either side.
Uniquely to this work, we interpret these bounds in a \emph{bidirectional} fashion:
either $\precp$ and $\precq$ are bounds imposed by the surrounding
derivation tree producing $\cfs$, limiting the terms $\cfs$ \emph{produces};
or they are bound-requirements synthesized from the term that \emph{reduces}
to $\cfs$.
Our definition of elaboration in \autoref{fig:compile-precedence}
is organized accordingly.
A production rule $\cfs \produce \s{\cfx}$ is introduced either
by \emph{tightening} the bounds on $\cfs$ (\rulename{Produce-Tighten})
or by \emph{subsuming} the corresponding reduction $\cfs \reduce \s{\cfx}$
(\rulename{Produce-Subsume}),
as illustrated for $\hhazel$ in \autoref{fig:cfg-example}.

\begin{wrapfigure}[6]{l}{0.4\textwidth}
  \vspace{-1em}
  \begin{minipage}{0.4\textwidth}
  \begin{mathpar}
  \inferrule*{
    \spretile{let} \patsort \sbintile{=} \expsort \sbintile{in} \underline{\expsort} \in
    \rgxsem{\gram(\expsort,0)} \\\\
    \spretile{let} \neq \expsort \\
    0 \plt_\expsort 1 \\
    \bddsort{1}{\expsort}{\underline{\bot}} \consistent \underline{\expsort}
  }{
    \bddsort{\top}{\expsort}{\min(0,\underline{\bot})}
    \reduce
    \spretile{let}
      \botsort{\patsort}
    \sbintile{=}
      \botsort{\expsort}
    \sbintile{in}
      \bddsort{1}{\expsort}{\underline{\bot}}
  }
  \end{mathpar}
  \end{minipage}
\end{wrapfigure}
Meanwhile, a reduction rule $\cfs \reduce \s{\cfx}$
synthesizes the tightest possible bounds on $\cfs$
that can accommodate $\s{\cfx}$.
These correspond to Aasa's notion of \emph{precedence weights}
\cite{aasaPrecedencesSpecificationsImplementations1995}
that aggregate the precedence levels of operators exposed along
the left and right spines of a syntax tree.
Whether an operator contributes its annotated precedence level
to its left and right weights depends on its shape---either operand, prefix,
postfix, or infix.
For example, in the derivation on the left using rule \rulename{PElab-Prefix}
for $\hhazel$, the prefix-shaped $\spretile{let}$-form
synthesizes left weight $\top$
and right weight $\min(0,\underline{\bot})$,
where the latter folds in the annotated level $0$ into
the subweight $\underline{\bot}$ (underlined to distinguish it from
other $\bot$ values in the derivation) already computed for
the rightmost child $\bddsort{1}{\expsort}{\underline{\bot}}$.
Our bidirectional presentation reorganizes and generalizes Aasa's
to multi-sorted grammars of arbitrary mixfix forms,
which we discuss further in \autoref{sec:related-work}.

\subsection{OP Parsing Errors} \label{sec:op-parsing-errors}

In this section, we review Floyd's original method
for operator-precedence (OP) parsing
\cite{floydSyntacticAnalysisOperator1963}.
To motivate our error-handling generalization in
\autoref{sec:op-parsing-error-handling},
we consider in particular the different ways an OP parser can fail.

Parsing is the task of organizing token sequences into
grammatically well-formed terms.
\autoref{fig:terms-syntax} gives the syntax of terms:
a term $\term = \snode{\s{\nod}}$
demarcates a sequence $\s{\nod}$ of child nodes,
each either a token $\cft$ or a subterm.
We consider $\term$ to be well-formed if it is reducible to or producible
from a nonterminal $\cfs$, \ie $\cfs\Reduce \term$ or $\cfs\Produce\term$
as defined in \autoref{fig:terms-wf}.

\begin{figure}
  \hfill
  \begin{minipage}[b]{0.22\textwidth}
    \[\arraycolsep=3pt\begin{array}{rlrl}
      \text{\textcolor{gray}{node}} & \nod & \textcolor{gray}{::=} & \cft ~\gvert~ \term \\
      \text{\textcolor{gray}{term}} & \termr,\term & \textcolor{gray}{::=} & \snode{\s{\nod}}
    \end{array}\]
    \caption{Syntax of terms}\label{fig:terms-syntax}
  \end{minipage}\hfill%
  \begin{minipage}[b]{0.65\textwidth}
    \begin{minipage}[b]{0.55\textwidth}
      \centering
      \[\arraycolsep=3pt\begin{array}{rlrl}
        \text{\textcolor{gray}{leq}} & \cmprleq & \textcolor{gray}{::=} & \lessdot ~\gvert~ \doteq\ \ \textcolor{gray}{<:}\ \ \cmpr \\[0.5ex]
        \text{\textcolor{gray}{stack}} & \stack & \textcolor{gray}{::=} & \dlroot ~\ \gvert~\  \stack~\cmprleq_{\o{\term}}\cft
      \end{array}\]
    \end{minipage}\hfill%
    \begin{minipage}[b]{0.4\textwidth}
      $\arraycolsep=3pt\begin{array}{lcl}
          \head{\dlroot} & = & \dlroot \\
          \head{\stack\cmprleq_{\o{\term}}\cft} & = & \cft
      \end{array}$
    \end{minipage}
    \caption{Syntax of stacks}\label{fig:stack-syntax}
  \end{minipage}
\end{figure}


\begin{figure}
  \begin{minipage}[b]{0.72\textwidth}
    \begin{minipage}[b]{0.47\textwidth}
      \judgboxy{\cfx\Reduce\nod}{Node $\nod$ reduces\\ to symbol $\cfx$}
      \begin{mathpar}
        \inferrule*[left=\texttt{Reduce-Token}]{
        }{
          \cft\Reduce\cft
        } \\

        \inferrule[Reduce-Term$\;\glp\textcolor{gray}{k\geq0}\glr$]{
          \cfs\reduce\iseq{\cfx_i}{0\leq i\leq k} \\\\
          \iseq{\cfx_i\Reduce\nod_i}{0\leq i\leq k}
        }{
          \cfs\Reduce\snode{\iseq{\nod_i}{0\leq i\leq k}}
        }
      \end{mathpar}
    \end{minipage}\hfill%
    \begin{minipage}[b]{0.52\textwidth}
      \judgboxy[\ ]{\cfx\Produce\nod}{Symbol $\cfx$ produces\\ node $\nod$}
      \begin{mathpar}
        \inferrule*[left=\texttt{Produce-Token}]{
        }{
          \cft\Produce\cft
        } \\

        \inferrule[Produce-Term$\;\glp\textcolor{gray}{k\geq0}\glr$]{
          \cfs \produce \iseq{\cfx_i}{0\leq i\leq k} \\\\
          \iseq{\cfx_i\Produce\nod_i}{0\leq i\leq k}
        }{
          \cfs\Produce\snode{\iseq{\nod_i}{0\leq i\leq k}}
        }
      \end{mathpar}
    \end{minipage}
    \caption{A node is well-formed if it is reducible to or producible
    from a symbol.} \label{fig:terms-wf}
  \end{minipage}\hfill%
  \begin{minipage}[b]{0.25\textwidth}
    \judgboxy[\ \ \ \ ]{\wf{\stack}}{Stack $\stack$ is\\ well-formed}
    \begin{mathpar}
      \inferrule*[left=\texttt{WFStack-Nil}]{
      }{
        \wf{\dlroot}
      }

      \inferrule[WFStack-Cons]{
        \wf{\stack}\\
        \o{\cfs} \Produce \o{\term} \\\\
        \head{\stack}\cmprleq_{\o{\cfs}}\cft
      }{
        \wf{\stack~\cmprleq_{\o{\term}}\tnode{\cft}}
      }
    \end{mathpar}
    \caption{Well-formed stacks}
    \label{fig:stack-wf}
  \end{minipage}
  \vspace{-1em}
\end{figure}

OP parsing is a simple form of shift-reduce parsing:
input tokens are ingested one at a time, left-to-right,
and kept organized in a maximally reduced stack $\stack$
whose contents form prefixes of terms under construction.
\autoref{fig:stack-syntax} gives the syntax of OP parsing stacks:
a stack $\stack$ is either empty, the start delimiter $\dlroot$ affixed at its base;
or it is nonempty $\stack \cmprleq_{\o{\term}} \cft$, linking
a token $\cft$ to the rest of the stack $\stack$ with two pieces of information:
a comparison operator $\cmprleq$ recording how the head of $\stack$
precedence-relates to $\cft$, and an optional term $\o\term$
recording what was first reduced between them.
A stack $\stack$ is considered well-formed,
as specified in \autoref{fig:stack-wf},
when each of its links $\cft_L \cmprleq_{\o{\term}} \cft_R$
reflects a valid precedence relation $\cft_L \cmprleq_{\o{\cfs}} \cft_R$
such that $\o\cfs \Produce \o\term$.
For brevity, we will call optional nonterminals \emph{slots}
and optional terms \emph{cells}.

The \emph{height} of a stack is the number of $\lessdot$-operators it contains.
We can decompose any stack of height $h$
into a sequence of $h$ height-$1$ stacks, each of the form
${}$\hfill
$\cft \lessdot_{\o{\term}_0} \cft_0 \iseq{\doteq_{\o{\term}_i} \cft_i}{0<i\leq
k}$
$\glp\textcolor{gray}{k\geq 0}\glr$
\hfill${}$\\
We may interpret each such stack as a term under construction
${}$\hfill
$\lessdot_{\o{\term}_0} \cft_0 \iseq{\doteq_{\o{\term}_i} \cft_i}{0<i\leq k}$
\hfill${}$ \\
delimited on its left by $\cft$, which is
either the start of input $\dlroot$
or the head of the preceding stack in the decomposition.

\newcommand{\slabel}{\textcolor{\slabelcolor}{\ttfamily\bfseries\footnotesize Shift}\@\xspace}
\newcommand{\rlabel}{\textcolor{\rlabelcolor}{\ttfamily\bfseries\footnotesize Reduce}\@\xspace}

\autoref{fig:simple-push} shows Floyd's original algorithm,
presented here as a push operation $\pushl{\stack}{\o\termr}{\cft} = {\stack'}$
that pushes the next input token $\cft$ onto stack $\stack$
over the current reduction-in-progress $\o\termr$ to yield a new stack
$\stack'$.
\autoref{fig:simple-incomplete} and \autoref{fig:simple-unsound}
illustrate concrete OP parsing traces for $\hhazel$
using the precedence table in \autoref{fig:hazel-prec}---%
each colored box applies one of the rules in \autoref{fig:simple-push},
enumerating within it the satisfied premises, and sends
the push-inputs above it to the output stack below it.
Every push begins by consulting how the stack head $\head{\stack}$
precedence-compares with the pushed token $\cft$
to decide whether to \slabel or \rlabel.
If $\head{\stack} \cmprleq \cft$,
then the parser shifts $\cft$ onto $\stack$
and ``finalizes'' the reduction $\o\termr$ between them.
Else, if $\head{\stack} \gtrdot \cft$,
and $\stack$
has height $h\geq 1$,
then the parser has identified its next \emph{handle} (\ie reduction target)
of the form
${}$\hfill$\head{\stack}
\lessdot_{\dsty{\o{\termr}_0}}\! \cft_0
\iseq{\doteq_{\dsty{\o{\termr}_i}} \cft_i}{0<i\leq k}
\gtrdot_{\dsty{\o{\termr}_{k+1}}} \cft$
\hfill${}$ \\
where $\head{\stack}$ and $\cft$ delimit the handle
$\snode{\o{\termr}_0\iseq{\cft_i\o{\termr}_{i+1}}{0\leq i\leq k}}$
to be reduced and propagated up the stack.

\begin{figure}[h]
  \raggedright
  \begin{minipage}[t]{0.7\textwidth}
  \judgboxy{\push{\stack}{\dsty{\o\termr}}{\cft}{\stack'}}{
    Pushing token $\cft$ onto stack $\stack$
    over reduction $\o\termr$\\ \indent returns stack $\stack'$
  }
  \begin{mathpar}
  \mprset{vskip=1ex}
  \inferrule[OP-Shift]{
      \head{\stack} \cmprleq \cft
  }{
      \push{\stack}{\dsty{\o\termr}}{\cft}{\stack\cmprleq_{\dsty{\o\termr}}\cft}
  }
  \hspace{-1em}

  \inferrule[OP-Reduce$\;\glp\textcolor{gray}{k\geq0}\glr$]{
    \gtrel{\cft_k}{}{\cft} \\
    \push{\stack_0}{\dsty{\snode{\o{\termr}_0\iseq{\cft_i\o{\termr}_{i+1}}{0\leq i\leq k}}}}{\cft}{\stack'}
  }{
    \push{
      \stack_0
      \lessdot_{\dsty{\o{\termr}_0}}\cft_0
      ~\iseq{\doteq_{\dsty{\o{\termr}_i}}\cft_i}{0< i\leq k}
    }{\dsty{\o{\termr}_{k+1}}}{\cft}{\stack'}
  }
  \end{mathpar}
  \caption{OP parsing}\label{fig:simple-push}
  \end{minipage}
  \hfill
  \begin{minipage}[t]{0.26\textwidth}
    \vspace{-1em}
    \begin{tikzpicture}[
      box/.style={rectangle, inner sep=0pt, text width=2.75cm},
    ]
    \node (pushTwo) [box] {
      $\pushj{\dlroot}{\none}{\soptile{2}}$
    };
    \node (shiftTwo) [box, xshift=\pindent, below=of pushTwo] {
      $\dlroot \ltop{} \soptile{2}$
    };
    \node (pushLet) [box, xshift=-\pindent, below=of shiftTwo] {
      $\pushj[\ \ ]{\dlroot \ltop{\none}\soptile{2}}{\none}{\spretile{let}}$
    };
    \node (stuckLet) [box, xshift=\pindent, below=of pushLet] {
      $\soptile{2} ~?~ \spretile{let}$
    };
    \begin{scope}[on background layer]
      \shiftbox{shift2}{(shiftTwo)}
      \node (stucklet) [STUCK=(stuckLet)];
      \node [STLABEL] at (stucklet.north east) {Stuck};
    \end{scope}
    \end{tikzpicture}
    \caption{An OP parsing trace for $\hhazel$
    (\autoref{fig:hazel-prec}) that gets stuck trying to compare
    neighbors $\soptile{2}$ and $\spretile{let}$}\label{fig:simple-incomplete}
  \end{minipage}
\end{figure}

\begin{wrapfigure}[19]{R}{0.28\textwidth}
  \vspace{-1.5em}
  ${}$
  \hfill
  \begin{minipage}{0.26\textwidth}
  \newcommand{\termtwo}{\dsty{\snode{\soptile{2}}}}
  \newcommand{\termmult}{\dsty{\snode{\termtwo \sbintile{*}}}}
  \newcommand{\twidth}{2.5cm}
  \begin{tikzpicture}[
    box/.style={rectangle, inner sep=0pt, text width=\twidth},
  ]
  \node (pushTwo) [box] {
    $\pushj{\dlroot}{\none}{\soptile{2}}$
  };
  \node (shiftTwo) [box, xshift=\pindent, below=of pushTwo] {
    $\dlroot \ltop{} \soptile{2}$ \quad
  };
  \node (pushMult) [box, xshift=-\pindent, below=of shiftTwo] {
    $\pushj{\dlroot\ltop{\none} \soptile{2}}{\none}{\sbintile{*}}$
  };
  \node (reduceTwo) [box, xshift=\pindent, below=of pushMult] {
    $\soptile{2} \gtop{} \sbintile{*}$
  };
  \node (pushMult2) [box, below=of reduceTwo,
  ] {
    $\pushj{\dlroot}{\termtwo}{\sbintile{*}}$
  };
  \node (shiftMult) [
    box,
    text width={\twidth-\pindent},
    below right=4pt and \pindent of pushMult2.south west,
    anchor=north west
  ] {
    $\dlroot \ltop{} \sbintile{*}$
  };
  \node (pushEnd) [
    box,
    text width={\twidth+\pindent},
    below left=4pt and 2*\pindent of shiftMult.south west,
    anchor=north west
  ] {
    $\pushj{\dlroot \ltop{\termtwo} \sbintile{*}}{\none}{\drroot}$
  };
  \node (reduceMult) [
    box,
    below right=4pt and \pindent of pushEnd.south west,
    anchor=north west] {
    $\sbintile{*} \gtop{} \drroot$
  };
  \node (pushEnd2) [box, below=of reduceMult,
  ] {
    $\pushj[\hfill]{\dlroot}{\termmult}{\drroot}$
  };
  \node (shiftEnd) [box,
    text width={\twidth-\pindent},
    below right=4pt and \pindent of pushEnd2.south west,
    anchor=north west
  ] {
    $\dlroot \eqop{} \drroot$
  };
  \node (done) [box,
    text width={\twidth+\pindent},
    below left=4pt and 2*\pindent of shiftEnd.south west,
    anchor=north west
  ] {
    $\dlroot \eqop{\termmult} \drroot$
  };
  \begin{scope}[on background layer]
    \shiftbox{shift2}{(shiftTwo)}
    \reducebox{reduce2}{(reduceTwo) (pushMult2) (shiftMult)}
    \reducebox{reducemult}{(reduceMult) (pushEnd2) (shiftEnd)}
    \shiftbox{shiftmult}{(shiftMult)}
    \shiftbox{shiftend}{(shiftEnd)}
  \end{scope}
  \end{tikzpicture}
  \caption{A valid OP parsing trace for $\hhazel$
    that returns the invalid term $\termmult$}\label{fig:simple-unsound}
  \end{minipage}
\end{wrapfigure}
Let us consider the ways this algorithm can fail.

\subsubsection{Stuck}
Like with most (non-error-handling) methods,
a typical OP parser will easily get stuck.
This occurs when the stack head
and pushed token share no precedence relation,
like $\soptile{2}$ and $\soptile{let}$ in \autoref{fig:simple-incomplete}.


\subsubsection{Invalid Reduction} \label{sec:invalid-reduction}
OP parsing is unsound, meaning it can produce grammatically invalid reductions.
Recall from \autoref{sec:precedence-comparisons} that each precedence comparison
$\cft_L \odot_{\o\cfsr} \cft_R$ means there exists a derivable string of the
form $\dots\cft_L\o{\cfsr}\cft_R\dots$.
Floyd's original definition of precedence comparisons
omits the index $\o\cfsr$.
This ``nonterminal blindness'' means that an OP parser,
given a reduction $\termr?$ between delimiters $\cft_L$ and $\cft_R$,
can determine which parent delimiter(s) to reduce next,
but not whether $\termr?$ is a valid child of the chosen parent.
In the final \rlabel step in \autoref{fig:simple-unsound},
the parser identifies the handle pattern
$\dlroot \lessdot \sbintile{*} \gtrdot \drroot$
and proceeds blindly to reduce $\snode{\snode{\soptile{2}}\sbintile{*}}$
without checking that the initial reduction $\none$ is a valid right-argument to
$\sbintile{*}$.

\subsubsection{Invalid Prefix} \label{sec:invalid-prefix}
A core tenet of shift-reduce parsers is the \emph{valid prefix property},
which maintains that the parse stack
forms the prefix of some grammar-derivable symbol string.
This property ensures that the parser is sound, \ie every parsed term is
well-formed.

\begin{wrapfigure}[3]{l}{0.18\textwidth}
  \vspace{-1em}
  \begin{tabular}{lcl}
    $\startsort{\textsc{t}}$ & $\rightarrow$ & \texttt{\$}\,\textsc{x}\,\texttt{\$} $\mid$
    \texttt{t} \\
    \textsc{x} & $\rightarrow$ & \texttt{\{}\,\textsc{t}\,\texttt{\}} $\mid$ \texttt{x}
  \end{tabular}
\end{wrapfigure}
Ideally prefix-validity would be implied by stack well-formedness (\autoref{fig:stack-wf}),
but this is not always the case for an OP parser depending on the grammar.
Consider the small grammar on the left,
which produces strings like
\texttt{t}, \texttt{\$x\$},
\texttt{\$\{t\}\$}, \texttt{\$\{\$x\$\}\$}, etc.
Generated from this grammar are the
precedence comparisons $\dlroot \lessdot \texttt{\$}$
and $\texttt{\$} \doteq \texttt{\$}$,
so Floyd's parser would happily ingest the tokens $\texttt{\$x\$x\$}$
and organize them into the stack
$\dlroot \lessdot_{\none} \texttt{\$}
\doteq_{\some{\snode{\texttt{x}}}} \texttt{\$}
\doteq_{\some{\snode{\texttt{x}}}} \texttt{\$}$,
which is well-formed but prefix-invalid.
The issue here is the reuse of $\texttt{\$}$
as both opener and closer in the rule
$\startsort{\textsc{t}} \rightarrow \texttt{\$}\,\textsc{x}\,\texttt{\$}$.
Distinguishing between these two uses would require stack-level analyses
out of scope of the local pairwise precedence comparisons.





\subsection{OP Parsing with Error Handling} \label{sec:op-parsing-error-handling}

We now define our error-handling extension of OP parsing
that avoids or recovers from the various failure modes seen in the last section.
Some of our changes involve requirements (\autoref{sec:molding-tiles})
and transformations (\autoref{sec:injecting-grout}) of the language grammar;
others involve generalizing Floyd's algorithm (\autoref{sec:tylrcore-parsing})
to incorporate completion-based repairs
and to restore soundness by making use
of our nonterminal-enriched precedence comparisons.

\subsubsection{Molding Tiles} \label{sec:molding-tiles}
To secure the valid prefix property (\autoref{sec:invalid-prefix}),
we take the blunt approach of requiring
every tile $\labl\in\lbls$ to appear uniquely in the PBG $\gram$:
\noindent\begin{assumption}[Unique Tiles] \label{assum:unique-tiles}
  A tile $\labl\in\lbls$ is called \emph{unique}
  if \emph{\glp}${\dots \labl\dots} \in \rgxsem{\gram(\sort,\precp)}$
  and ${\dots \labl\dots} \in \rgxsem {\gram(\sortr,\precq)}$\emph{\glr}
  imply \emph{\glp}$\sort = \sortr$ and $\precp = \precq$
  and $\labl$ appears uniquely in $\gram(\sort,\precp)$\emph{\glr}.
  All tiles $\labl\in\lbls$ are unique.
\end{assumption}

\noindent
With \autoref{assum:unique-tiles}, we can guarantee for any height-$1$ precedence chain
of the form $\cft
\lessdot_{\dsty{\o{\cfsr}_0}}
\labl_0 \iseq{\doteq_{\dsty{\o{\cfsr}_i}} \labl_i}{0<i\leq k}$
that the string $\iseq{\o{\cfsr}_i\labl_i}{0\leq i\leq k}$
forms a prefix of the yield of some nonterminal $\cfs$ adjacent to $\cft$:

\begin{lemma}[Valid Prefixes] \label{lemma:valid-prefix}
  For all terminals $\cft$,
  tiles $\iseq{\labl_i}{0\leq i\leq k}$,
  and slots $\iseq{\o{\cfsr}_i}{0\leq i\leq k+1}$
  $\glp\textcolor{gray}{k\geq 0}\glr$
  such that
  \ \ $\cft \lessdot_{\dsty{\o{\cfsr}_0}} \labl_0
  \iseq{\doteq_{\dsty{\o{\cfsr}_i}} \labl_i}{0<i\leq k}
  \adjop \o{\cfsr}_{k+1}$\ \
  there exist nonterminals $\cfs_\cft,\cfs$,
  tiles $\iseq{\labl_i}{k < i \leq \ell}$,
  and slots $\iseq{\o{\cfsr}_i}{k + 1 < i \leq \ell + 1}$
  $\glp\textcolor{gray}{\ell\geq k}\glr$
  such that
  \ \ $\cft \adjop \cfs_\cft
  \lmsymop^* \cfs
  \produce \o{\cfsr}_0 \iseq{\labl_i\,\o{\cfsr}_{i+1}}{0\leq i\leq \ell}
  $\ \ .
\end{lemma}

\autoref{assum:unique-tiles} would be severely constraining
if $\gram$ were a grammar of purely textual tokens---%
for example, we would not be able to reuse
parentheses $\code{(}$ $\code{)}$ across different sorts.
In this work, we take $\gram$ to be a grammar of tiles,
which we conceptualized in our Hazel grammar $\ghazel$
(\autoref{fig:pbg-example})
as being textual tokens paired with ``molds'',
visually distinguished using color and shape.
Rather than requiring that the grammar author
manually design and distinguish their terminal symbols,
however, we can generically convert
any ordinary textual grammar $\mathcal{F}$
into a grammar $\gram$ of unique tiles,
simply by augmenting each terminal symbol in $\mathcal{F}$
with its \emph{one-hole context}, \ie its mold.
We continue this discussion in \autoref{sec:implementation},
where we describe how \tylr chooses between
multiple possible molds for a textual token.

\newcommand{\fbound}[2]{#1\,!\,#2}
\newcommand{\restrictb}[2]{\langle #1 \rangle^{#2}_{\sort}}
\newcommand{\restrictz}[1]{\restrictb{#1}{0}\,}
\begin{figure}[b]
\begin{minipage}[t]{0.4\textwidth}
  \vskip0pt
\judgbox{\syninj{\gram}{\s{\cfx}}{\cfs}}{$\s{\cfx}$ reduces to $\cfs$ in
grout-injected $\gram$}
\end{minipage}
\hfill
\begin{minipage}[t]{0.45\textwidth}
  \vskip0pt
  \[\arraycolsep=3pt\begin{array}{rlcl}
    \text{\textcolor{gray}{grout}} & \groot & \textcolor{gray}{::=} & \ophole ~\gvert~
    \prehole ~\gvert~ \poshole ~\gvert~ \binhole \\
    \text{\textcolor{gray}{terminal}} & \cft & \textcolor{gray}{::=} & \textcolor{gray}{\dots}
    ~\gvert~ \groot^{\sort}
  \end{array}\]
\end{minipage}
\vskip 4pt

\begin{mathpar}
  \ogdots

  \inferrule[GInj-Hole]{
  }{\syninj{\gram}{\ophole^\sort}{\topsort{\sort}}}

  \fbound{\cfs}{\sort} =
  \left\{\begin{array}{cl}
    \top & \text{if } \cfs \not\sim \sort \\
    \bot & \text{if } \cfs \sim \sort
  \end{array}\right\}

  \restrictb{\prq}{\precm} =
  \left\{\begin{array}{cl}
    \bddsort{\max(\precm, \precp)}{\sortr}{\max(\precq, \precm)} & \text{if } \sortr = \sort \\
    \prq & \text{if } \sortr \neq \sort
  \end{array}\right\} \\


  \inferrule[GInj-Operand$\;\glp\textcolor{gray}{k\geq0}\glr$]{
    \iseq{ \yieldsStar{\botsort{\sort}}{\cfsr_i}}{0 \leq i \leq k}
  }{\syninj{\gram}{\prehole^\sort \restrictz{\cfsr_0} \iseq{\binhole^\sort\restrictz{\cfsr_i}}{0 < i\leq k}\,\poshole^\sort}{\topsort{\sort}}
  }

  \inferrule[GInj-Infix$\;\glp\textcolor{gray}{k\geq0}\glr$]{
     \lmsymStar{\botsort{\sort}}{\cfsr_0} \\
    \iseq{ \yieldsStar{\botsort{\sort}}{\cfsr_i}}{0 < i < k} \\
     \rmsymStar{\cfsr_k}{\botsort{\sort}}
  }{
    \syninj{\gram}{
      \restrictz{\cfsr_0} \iseq{\binhole^\sort\restrictz{\cfsr_i}}{0 < i\leq k}
    }{\bddsort{\fbound{\cfsr_0}{\sort}}{\sort}{\fbound{\cfsr_k}{\sort}}}
  }

  \inferrule[GInj-Prefix$\;\glp\textcolor{gray}{k\geq0}\glr$]{
    \iseq{ \yieldsStar{\botsort{\sort}}{\cfsr_i}}{0 \leq i < k} \\
     \rmsymStar{\cfsr_k}{\botsort{\sort}}
  }{\syninj{\gram}{
    \prehole^\sort \restrictz{\cfsr_0} \iseq{\binhole^\sort\restrictz{\cfsr_i}}{0 < i\leq k}
  }{\bddsort{\top}{\sort}{\fbound{\cfsr_k}{\sort}}}}

  \inferrule[GInj-Postfix$\;\glp\textcolor{gray}{k\geq0}\glr$]{
     \lmsymStar{\botsort{\sort}}{\cfsr_k} \\
    \iseq{ \yieldsStar{\botsort{\sort}}{\cfsr_i}}{k > i \geq 0}
  }{\syninj{\gram}{
    \iseq{\restrictz{\cfsr_i}\,\binhole^\sort}{k\geq i > 0}\,\restrictz{\cfsr_0}\,\poshole^\sort
  }{\bddsort{\fbound{\cfsr_k}{\sort}}{\sort}{\top}}}
\end{mathpar}
\caption{Grout injection extending
the definition of terminals $\cft$ (\autoref{fig:cfg})
and reduction $\synelab{\gram}{\s{\cfx}}{\cfs}$
(\autoref{fig:compile-precedence})}
\label{fig:add-grout}
\end{figure}

\subsubsection{Injecting Grout} \label{sec:injecting-grout}

When a shift-reduce parser ``goes wrong'',
it is because of an unresolved mismatch between
the bottom-up reductions accumulated so far and
the remaining top-down expectations of the grammar.
Many of these mismatches are inconsistencies of \emph{multiplicity}:
in \autoref{fig:simple-unsound}, the reduction
$\snode{\snode{\soptile{2}}\sbintile{*}}$ in the last \rlabel step
is ill-formed because there is no term where one is expected
as the right multiplicand;
in \autoref{fig:simple-incomplete},
the parser gets stuck on neighbors $\soptile{2}$ and $\spretile{let}$
because it does not know how to combine these parts of two unrelated terms into
one as required.
When multiplicities align, there remains further the possibility of \emph{sort}
inconsistencies, such as the one in \autoref{fig:sort-transition-obligations}
between the $\spretile{let}$-delimiter expecting a pattern
and the $\sbintile[\typcolor]{->}$-term providing a type.

\begin{figure}
  \centering
  \tikzset{nt/.style={
    inner sep=2,
    outer sep=0,
  }}
  \begin{tikzpicture}[
    >={Classical TikZ Rightarrow[]},
  ]
    \node (botbot) [nt]
    at (0, 0) {
      \phnt
      $\botsort{\expsort}$
    };
    \node (botbot') [right=.5 of botbot] {
      \phnt
      $\zerosort{\expsort} \,\sbinhole\, \zerosort{\expsort} ~\gvert~
      \zerosort{\expsort} \,\sbinhole\, \zerosort{\expsort}
      \,\sbinhole\, \zerosort{\expsort}
      ~\gvert~ \dots$};

    \node (bottop) [nt, below=.6 of botbot] {
      \phnt
      $\bddsort{\bot}{\expsort}{\top}$
    };
    \node (bottop') [right=.9 of bottop] {
      \phnt
      $\zerosort{\expsort}\, \sposhole ~\gvert~
      \zerosort{\expsort}\,\sbinhole\,\botsort{\typsort}\,\sposhole
      ~\gvert~ \dots$
    };

    \node (topbot) [nt, below right=0 and .15 of botbot] {
      \phnt
      $\bddsort{\top}{\expsort}{\bot}$
    };
    \node (topbot') [right=.5 of topbot] {
      \phnt
      $\sprehole\,\zerosort{\expsort} ~\gvert~
      \sprehole\,\botsort{\patsort}
      \,\sbinhole\,\zerosort{\expsort}
        ~\gvert~ \dots$
    };

    \node (toptop) [nt, below=.6 of topbot] {
      \phnt
      $\topsort{\expsort}$
    };
    \node (toptop') [right=.5 of toptop] {
      \phnt
      $\sophole ~\gvert~
      \sprehole\,\zerosort{\expsort}\,\sposhole
      ~\gvert~ \sprehole\,\botsort{\patsort}\,\sposhole
      ~\gvert~ \dots$
    };

    \node (pbotbot) [nt, right=0.3 of botbot'] {
      \phnt
      $\botsort{\patsort}$
    };
    \node (pbotbot') [right=.5 of pbotbot] {
      \phnt
      $\zerosort{\patsort} \,\sbinhole[\patcolor]\, \zerosort{\patsort} ~\gvert~
      \zerosort{\patsort} \,\sbinhole[\patcolor]\, \botsort{\typsort}
      ~\gvert~ \dots$};

    \node (pbottop) [nt, below=.6 of pbotbot] {
      \phnt
      $\bddsort{\bot}{\patsort}{\top}$
    };
    \node (pbottop') [right=.9 of pbottop] {
      \phnt
      $\zerosort{\patsort}\, \sposhole[\patcolor] ~\gvert~
      \zerosort{\patsort}\,\sbinhole[\patcolor]\,\botsort{\typsort}\,\sposhole[\patcolor]
      ~\gvert~ \dots$
    };

    \node (ptopbot) [nt, below right=0 and .15 of pbotbot] {
      \phnt
      $\bddsort{\top}{\patsort}{\bot}$
    };
    \node (ptopbot') [right=.5 of ptopbot] {
      \phnt
      $\sprehole[\patcolor]\,\zerosort{\patsort} ~\gvert~
      \sprehole[\patcolor]\,\botsort{\typsort}
      ~\gvert~ \dots$
    };

    \node (ptoptop) [nt, below=.6 of ptopbot] {
      \phnt
      $\topsort{\patsort}$
    };
    \node (ptoptop') [right=.5 of ptoptop] {
      \phnt
      $\sophole[\patcolor] ~\gvert~
      \sprehole[\patcolor]\,\zerosort{\typsort}\,\sposhole[\patcolor] ~\gvert~
      \dots$
    };

    \node (tbotbot) [nt, right=0.3 of pbotbot'] {
      \phnt
      $\botsort{\typsort}$
    };
    \node (tbotbot') [right=.5 of tbotbot] {
      \phnt
      $\zerosort{\typsort} \,\sbinhole[\typcolor]\, \zerosort{\typsort}
      ~\gvert~ \dots$};

    \node (tbottop) [nt, below=.6 of tbotbot] {
      \phnt
      $\bddsort{\bot}{\typsort}{\top}$
    };
    \node (tbottop') [right=.9 of tbottop] {
      \phnt
      $\zerosort{\typsort}\, \sposhole[\typcolor]
      ~\gvert~ \dots$
    };

    \node (ttopbot) [nt, below right=0 and .15 of tbotbot] {
      \phnt
      $\bddsort{\top}{\typsort}{\bot}$
    };
    \node (ttopbot') [right=.5 of ttopbot] {
      \phnt
      $\sprehole[\typcolor]\,\zerosort{\typsort}
      ~\gvert~ \dots$
    };

    \node (ttoptop) [nt, below=.6 of ttopbot] {
      \phnt
      $\topsort{\typsort}$
    };
    \node (ttoptop') [right=.5 of ttoptop] {
        \phnt
        $\sophole[\typcolor] ~\gvert~
        \dots$
        \vphantom{$\sprehole[\typcolor]\,\zerosort{\typsort}\,\sposhole[\typcolor]$}
      };

    \draw[-implies, double equal sign distance, \subsumecolor, line width=0.7pt] (botbot) -- (botbot');
    \draw[-implies, double equal sign distance, \subsumecolor, line width=0.7pt] (bottop) -- (bottop');
    \draw[-implies, double equal sign distance, \subsumecolor, line width=0.7pt] (topbot) -- (topbot');
    \draw[-implies, double equal sign distance, \subsumecolor, line
    width=0.7pt] (toptop) -- (toptop');

    \draw[-implies, double equal sign distance, \subsumecolor, line
    width=0.7pt] (pbotbot) -- (pbotbot');
    \draw[-implies, double equal sign distance, \subsumecolor, line
    width=0.7pt] (pbottop) -- (pbottop');
    \draw[-implies, double equal sign distance, \subsumecolor, line width=0.7pt] (ptopbot) -- (ptopbot');
    \draw[-implies, double equal sign distance, \subsumecolor, line width=0.7pt] (ptoptop) -- (ptoptop');

    \draw[-implies, double equal sign distance, \subsumecolor, line width=0.7pt] (tbotbot) -- (tbotbot');
    \draw[-implies, double equal sign distance, \subsumecolor, line width=0.7pt] (tbottop) -- (tbottop');
    \draw[-implies, double equal sign distance, \subsumecolor, line width=0.7pt] (ttopbot) -- (ttopbot');
    \draw[-implies, double equal sign distance, \subsumecolor, line width=0.7pt] (ttoptop) -- (ttoptop');

    \draw[-implies, double equal sign distance, \tightencolor, line width=0.7pt] (botbot) -- (bottop);
    \draw[-implies, double equal sign distance, \tightencolor, line width=0.7pt] (bottop) -- (toptop.west);
    \draw[-implies, double equal sign distance, \tightencolor, line width=0.7pt] (botbot) -- (topbot.west);
    \draw[-implies, double equal sign distance, \tightencolor, line width=0.7pt] (topbot) -- (toptop);
    \draw[-implies, double equal sign distance, \tightencolor, line
    width=0.7pt] (topbot) -- (toptop);

    \draw[-implies, double equal sign distance, \tightencolor, line width=0.7pt] (pbotbot) -- (pbottop);
    \draw[-implies, double equal sign distance, \tightencolor, line width=0.7pt] (pbottop) -- (ptoptop.west);
    \draw[-implies, double equal sign distance, \tightencolor, line width=0.7pt] (pbotbot) -- (ptopbot.west);
    \draw[-implies, double equal sign distance, \tightencolor, line width=0.7pt] (ptopbot) -- (ptoptop);
    \draw[-implies, double equal sign distance, \tightencolor, line
    width=0.7pt] (ptopbot) -- (ptoptop);

    \draw[-implies, double equal sign distance, \tightencolor, line width=0.7pt] (tbotbot) -- (tbottop);
    \draw[-implies, double equal sign distance, \tightencolor, line width=0.7pt] (tbottop) -- (ttoptop.west);
    \draw[-implies, double equal sign distance, \tightencolor, line width=0.7pt] (tbotbot) -- (ttopbot.west);
    \draw[-implies, double equal sign distance, \tightencolor, line width=0.7pt] (ttopbot) -- (ttoptop);
    \draw[-implies, double equal sign distance, \tightencolor, line
    width=0.7pt] (ttopbot) -- (ttoptop);

  \end{tikzpicture}
  \vskip -7pt
  \caption{Excerpt of the grout rules injected
  (\autoref{fig:add-grout})
  into $\hhazel$
  (\autoref{fig:cfg-example}).
  The production rules are arranged and color-coded
  by whether they emerge from subsuming \textcolor{\subsumecolor}{reduction}
  or by \textcolor{\tightencolor}{tightening}
  (\autoref{fig:compile-precedence}).
}
  \label{fig:ginj-example}
\end{figure}

\autoref{fig:add-grout} shows how we materialize these inconsistencies
as \emph{grout} forms injected into the elaborated grammar,
extending our definition in \autoref{fig:compile-precedence},
while \autoref{fig:ginj-example} shows an excerpt of
the grout forms injected into $\hhazel$.
Every sort acquires the form $\topsort{\sort} \produce \ophole^\sort$,
injected via rule \rulename{GInj-Hole},
consisting of a single convex grout terminal $\ophole^\sort$
that stands in for missing terms of sort $\sort$.

Grout terminals also come in prefix $\prehole$, postfix $\poshole$,
and infix $\binhole$ shapes that are used to wrap
sort-inconsistent and extraneous terms,
injected via the rules \rulename{GInj-Operand}, \rulename{GInj-Infix},
\rulename{GInj-Prefix}, and \rulename{GInj-Postfix}.
There are four of these rules to enumerate over
whether the left and right ends of the form
are bookended with a prefix $\prehole^\sort$
or postfix $\poshole^\sort$ grout, respectively.
The left (right) bookend is optional when the exposed nonterminal
is a leftmost (rightmost) descendant of the unbounded sort $\botsort{\sort}$.
For example, \autoref{fig:ginj-example} includes the grout production
$\bddsort{\top}{\patsort}{\bot}
\produce \sprehole[\patcolor] \botsort{\typsort}$
because of the $\patsort$-sorted form
$\bddsort{\bot}{\patsort}{1} \sbintile[\patcolor]{:} \botsort{\typsort}$
in $\hhazel$ (\autoref{fig:cfg-example}), where $\botsort{\typsort}$ is the rightmost descendant.
On the other hand, there is no $\expsort$-sorted form with $\botsort{\patsort}$
as its leftmost or rightmost descendant, so $\botsort{\patsort}$ can only appear
in the $\expsort$-sorted grout forms that buffer it on both sides (\eg
$\sprehole[\expcolor] \botsort{\patsort} \sposhole[\expcolor]$).

Grout terminals behave like associative operators of loosest precedence within
each sort, where their left and right tip decorations follow the pattern of tiles.
More precisely, $\groot_L^\sort \doteq \groot_R^\sort$
if $\groot_L$ is right-concave and $\groot_R$ is left-concave,
and $\groot^\sort \lessdot \labl$ for any tile $\labl$ of sort $\sort$
if $\groot$ is right-concave.
The nonterminal descendants $\cfsr_i$ are precedence-bounded in
their injected forms $\restrictz{\cfsr_i}$, depending on their sort,
to prevent conflicting precedence comparisons between grout terminals of the same sort.

\subsubsection{Parsing with \tylrcore} \label{sec:tylrcore-parsing}


\begin{figure}
\begin{minipage}[t]{0.52\textwidth}
  \input{fig/fill}

  \vskip1em
  \begin{center}
  \begin{tabular}{rcl}
  $\parsel{\stack}{\snil}$ & $=$ & $\stack$ \\
  $\parsel{\stack}{\cft\,\s{\cft}}$ & $=$
  & $\parsel{\pushl{\stack}{\snil}{\cft}}{\s{\cft}}$
  \end{tabular}
  \end{center}

  \caption{Parsing with \tylrcore}\label{fig:tylrcore}
\end{minipage}
\hfill
\begin{minipage}[t]{0.4\textwidth}
\judgboxy{\push{\stack}{\dsty{\s{\termr}}}{\cft}{\stack'}}{
  Pushing token $\cft$ \\ onto stack $\stack$ \\
  over reductions $\s{\termr}$ \\returns stack $\stack'$%
}
\begin{mathpar}
\mprset{vskip=1ex}
\inferrule[Shift$\;\glp\textcolor{gray}{k \geq 0}\glr$]{
    \head{\stack}\iseq{\cmprleq_{\dsty{\o{\cfsr}_i}}\cft_i}{0\leq i\leq k} = \cft \\\\
    \pfill{\s{\termr}}{\iseq{\o{\cfsr}_i}{0\leq i\leq k}} = {\iseq{\o{\term}_i}{0\leq i\leq k}}
}{
    \push{\stack}{\dsty{\s{\termr}}}{\cft}{\stack\iseq{\cmprleq_{\dsty{\o{\term}_i}}\cft_i}{0\leq i\leq k}}
} \\

\inferrule[Reduce$\;\glp\textcolor{gray}{0 \leq k \leq \ell}\glr$]{
  \labl_k \iseq{\doteq_{\dsty{\o{\cfsr}}_i} \labl_i}{k \leq i \leq \ell}
  \gtrdot_{\dsty{\o{\cfsr}_{\ell+1}}~ {\medsquare}} \\
  \pfill{\s{\termr}}{\iseq{\o{\cfsr}_i}{k < i \leq \ell+1}} = {\iseq{\o{\termr}_i}{k < i \leq \ell+1}} \\\\
  \push{\stack_0}{\displaystyle{
    \snode{\o{\termr}_0\iseq{\labl_i\,\o{\termr}_{i+1}}{0\leq i\leq
    \ell}}
  }}{\cft}{\stack}
}{
  \push{
    \stack_0
    \lessdot_{\dsty{\o{\termr}_0}}\labl_0\,
    \iseq{\doteq_{\dsty{\o{\termr}_i}}\labl_i}{0\leq i\leq k}
  }{\dsty{\s{\termr}}}{\cft}{\stack}
} \\

\inferrule[Degrout$\;\glp\textcolor{gray}{k \geq 0}\glr$]{
  \push{\stack_0}{\dsty{
    \iseq{\o{\termr}_i}{0\leq i\leq k}\,
    \s{\termr}
  }}{\cft}{\stack}
}{
  \push{
    \stack_0
    \lessdot_{\dsty{\o{\termr}_0}}\groot^\sort_0\,
    \iseq{\doteq_{\dsty{\o{\termr}_i}}\groot^\sort_i}{0< i\leq k}
  }{\dsty{\s{\termr}}}{\cft}{\stack}
}
\end{mathpar}
\caption{Pushing with \tylrcore}\label{fig:tylrcore-push}
\end{minipage}








\end{figure}

\autoref{fig:tylrcore-push} gives the rules for parsing with \tylrcore,
whose notable features we will illustrate through several examples.

\begin{figure}
  \begin{minipage}[b]{0.7\textwidth}
    \centering
  \begin{minipage}[b]{0.43\textwidth}
    \newcommand{\twidth}{3.1cm}
    \begin{tikzpicture}[
      box/.style={rectangle, inner sep=0pt, text width=\twidth},
    ]
    \node (pushMult) [box, xshift=-\pindent, text width=3.5cm] {
      $\pushj{\dlroot \lessdot_{\none} \spretile{let} \lessdot_{\none} \soptile[\patcolor]{x}}{\none}{\sbintile[\patcolor]{:}}$
    };
    \node (reduceTwo) [box, xshift=\pindent, below=of pushMult.south west,
      anchor=north west] {
      $\soptile[\patcolor]{x} \gtrdot \sbintile[\patcolor]{:}$
      $\vphantom{\soptile[\patcolor]{x} \gtrdot_{\none_{\mathsf{slot}}}}$
    };
    \node (reduceTwo2) [box, below=of reduceTwo] {
      $\vphantom{\pick{\snil}{\none}{\none}}$
    };
    \node (pushMult2) [box, below=of reduceTwo2] {
      $\pushj{\dlroot \lessdot_{\none} \spretile{let}}{\snode{\soptile[\patcolor]{x}}}{\sbintile[\patcolor]{:}}$
    };
    \node (shiftMult) [
      box,
      text width={\twidth-\pindent},
      below right=4pt and \pindent of pushMult2.south west,
      anchor=north west
    ] {
      $\spretile{let} \lessdot \sbintile[\patcolor]{:}$
      $\vphantom{\spretile{let} \lessdot_{\dsty{\bddsort{\bot}{\patsort}{1}}} \sbintile[\patcolor]{:}}$
    };
    \node (shiftMult2) [
      box,
      text width={\twidth-\pindent},
      below=2pt of shiftMult,
    ] {
      $\vphantom{\pfill{\snode{\soptile[\patcolor]{x}}\!}{\!\bddsort{\bot}{\patsort}{1}} = {\snode{\soptile[\patcolor]{x}}}}$
    };
    \node (pushEnd) [
      box,
      text width={\twidth+\pindent},
      below left=4pt and 2*\pindent of shiftMult2.south west,
      anchor=north west
    ] {
      $\dlroot \lessdot_{\none} \spretile{let} \lessdot_{\dsty{\snode{\soptile[\patcolor]{x}}}} \sbintile[\patcolor]{:}$
    };
    \begin{scope}[on background layer]
      \reducebox{reduce2}{(reduceTwo) (pushMult2) (shiftMult2)}
      \shiftbox{shiftmult}{(shiftMult) (shiftMult2)}
    \end{scope}
    \end{tikzpicture}
  \end{minipage}
  \hfill
  \begin{minipage}[b]{0.56\textwidth}
    \centering
    \newcommand{\twidth}{4cm}
    \begin{tikzpicture}[
      box/.style={rectangle, inner sep=0pt, text width=\twidth},
    ]
      \node (pushMult) [box, xshift=-\pindent, text width=4.5cm] {
        $\pushj{\dlroot \lessdot_{\none} \spretile{let} \lessdot_{\none} \soptile[\patcolor]{x}}{\snil}{\sbintile[\patcolor]{:}}$
      };
      \node (reduceTwo) [box, xshift=\pindent, below=of pushMult.south west,
        anchor=north west] {
        \raisebox{2pt}{$\soptile[\patcolor]{x} \gtrdot_{\none_{\mathsf{slot}}} \sbintile[\patcolor]{:}$}
        $\vphantom{\topsort{\patsort}\soptile[\patcolor]{x} \gtrdot \sbintile[\patcolor]{:}}$
      };
      \node (reduceTwo2) [box, below=of reduceTwo] {
        $\pick{\snil}{\none_{\mathsf{slot}}}{\none_{\mathsf{cell}}}$
      };
      \node (pushMult2) [box, below=of reduceTwo2
      ] {
        $\pushj{\dlroot \lessdot_{\none} \spretile{let}}{\dsty{\snode{\none\soptile[\patcolor]{x}\none_{\mathsf{cell}}}}}{\sbintile[\patcolor]{:}}$
      };
      \node (shiftMult) [
        box,
        text width={\twidth-\pindent},
        below right=4pt and \pindent of pushMult2.south west,
        anchor=north west
      ] {
        $\spretile{let} \lessdot_{\dsty{\bddsort{\bot}{\patsort}{1}}} \sbintile[\patcolor]{:}$
      };
      \node (shiftMult2) [
        box,
        text width={\twidth-\pindent},
        below=2pt of shiftMult,
      ] {
        $\pfill{\snode{\soptile[\patcolor]{x}}\!}{\!\bddsort{\bot}{\patsort}{1}} = {\snode{\none\soptile[\patcolor]{x}\none_{\mathsf{cell}}}}$
      };
      \node (pushEnd) [
        box,
        text width={\twidth+\pindent},
        below left=4pt and 2*\pindent of shiftMult2.south west,
        anchor=north west
      ] {
        $\dlroot \lessdot_{\none} \spretile{let} \lessdot_{\dsty{\snode{\none\soptile[\patcolor]{x}\none_{\mathsf{cell}}}}} \sbintile[\patcolor]{:}$
      };
      \begin{scope}[on background layer]
        \reducebox{reduce2}{(reduceTwo) (pushMult2) (shiftMult2)}
        \shiftbox{shiftmult}{(shiftMult) (shiftMult2)}
      \end{scope}
      \end{tikzpicture}
    \end{minipage}
    \caption{Corresponding traces of OP parsing (left) and \tylrcore (right)
    on the same inputs to highlight their differences}\label{fig:meldr-subsume}
    \end{minipage}
    \hfill
    \begin{minipage}[b]{0.26\textwidth}
      \centering
      \newcommand{\twidth}{3.1cm}
      \begin{tikzpicture}[
        box/.style={rectangle, inner sep=0pt, text width=\twidth},
      ]
      \node (pushMult2) [box] {
        $\pushj{\dlroot \lessdot_{\none}
        \spretile{let}}{\snil}{\sbintile[\patcolor]{:}}$
        $\vphantom{\pushj{\dlroot \lessdot_{\none} \spretile{let}}{\dsty{\snode{\none\soptile[\patcolor]{x}\none_{\mathsf{cell}}}}}{\sbintile[\patcolor]{:}}}$
      };
      \node (shiftMult) [
        box,
        text width={\twidth-\pindent},
        below right=4pt and \pindent of pushMult2.south west,
        anchor=north west
      ] {
        $\spretile{let} \lessdot_{\dsty{\bddsort{\bot}{\patsort}{1}}} \sbintile[\patcolor]{:}$
        $\vphantom{\spretile{let} \lessdot_{\dsty{\bddsort{\bot}{\patsort}{1}}} \sbintile[\patcolor]{:}}$
      };
      \node (shiftMult2) [
        box,
        text width={\twidth-\pindent},
        below=2pt of shiftMult,
      ] {
        $\pfill{\snil\!}{\!\bddsort{\bot}{\patsort}{1}} = {\snode{\ophole^\patsort}}$
      };
      \node (pushEnd) [
        box,
        text width={\twidth+\pindent},
        below=of shiftMult2,
      ] {
        $\dlroot \lessdot_{\none} \spretile{let}
        \lessdot_{\dsty{\snode{\ophole^\patsort}}} \sbintile[\patcolor]{:}$
        $\vphantom{\dlroot \lessdot_{\none} \spretile{let} \lessdot_{\dsty{\snode{\none\soptile[\patcolor]{x}\none_{\mathsf{cell}}}}} \sbintile[\patcolor]{:}}$
      };
      \begin{scope}[on background layer]
        \shiftbox{shiftmult}{(shiftMult) (shiftMult2)}
      \end{scope}
      \end{tikzpicture}
    \caption{\tylrcore filling the slot ${\bddsort{\bot}{\patsort}{1}}$
    with the grout form $\snode{\ophole^\patsort}$} \label{fig:meldr-fill-grout}
    \end{minipage}
\end{figure}

\autoref{fig:meldr-subsume} illustrates how \tylrcore directly generalizes the
standard non-error-handling algorithm (\autoref{fig:simple-push}).
The main difference is the new \emph{fill} operation,
defined in \autoref{fig:fill},
invoked in \autoref{fig:meldr-subsume} as
\ $\pick{\snil}{\none_{\mathsf{slot}}}{\none_{\mathsf{cell}}}$\
in \rlabel
and
\ $\pfill{\snode{\soptile[\patcolor]{x}}\!}
{\!\bddsort{\bot}{\patsort}{1}} = {\snode{\soptile[\patcolor]{x}}}$\
in \slabel.
Filling is responsible for
assigning accumulated reductions to grammatically appropriate slots,
now exposed as operator indices in the precedence comparisons.
In \rlabel, nothing $\snil$ is assigned
to the unfillable slot $\none_{\mathsf{slot}}$;
in \slabel, the reduction $\snode{\soptile[\patcolor]{x}}$ is assigned
to the fillable slot $\some{\bddsort{\bot}{\patsort}{1}}$.
In these cases, the input reduction is returned as is
because it is consistent with its assigned slot.
In other cases, filling may additionally repair the given reduction
with additional grout to bridge any multiplicity or sort inconsistencies.
\autoref{fig:meldr-fill-grout} shows how pushing $\sbintile[\patcolor]{:}$
against the stack $\dlroot \lessdot_{\none} \spretile{let}$
leads to the slot $\some{\bddsort{\bot}{\patsort}{1}}$
getting filled instead with convex grout $\snode{\ophole^\patsort}$.

\begin{wrapfigure}[22]{R}{0.44\textwidth}
  ${}$
  \hfill
  \begin{minipage}{0.42\textwidth}
  \newcommand{\dlrt}{\vphantom{\mathstrut}\dlroot}
    \newcommand{\twidth}{3.6cm}
    \begin{subfigure}{\linewidth}
    \begin{tikzpicture}[
  box/.style={rectangle, inner sep=0pt, text width=\twidth},
]
  \node (pushTwo) [box] {
    $\pushj{
      \dlroot \lessdot_{\none} \spretile{let}
    }{\snil}{\soptile[\typcolor]{Num}}$
  };
  \node (shiftTwo) [box, xshift=\pindent, below=of pushTwo] {
    $\spretile{let} \lessdot_{\none} \prehole^\patsort
    \lessdot_{\none} \soptile[\typcolor]{Num}$
  };
  \node (shiftTwo2) [box, below=of shiftTwo] {
    $\pfill{\snil}{\none\,\none}\ =\ \none\,\none$
  };
  \node (pushMult) [box, xshift=-\pindent, below=of shiftTwo2.south west, anchor=north west, text width=\twidth] {
    $\dlroot \lessdot_{\none} \spretile{let}
    \lessdot_{\none}
    \prehole^\patsort
    \lessdot_{\none} \soptile[\typcolor]{Num}$
  };
  \begin{scope}[on background layer]
    \shiftbox{shift2}{(shiftTwo) (shiftTwo2)}
  \end{scope}
\end{tikzpicture}
    \caption{}
    \end{subfigure}
    \vspace{-8pt}

    \renewcommand{\twidth}{5cm}
    \begin{subfigure}{\linewidth}
    \begin{tikzpicture}[
  box/.style={rectangle, inner sep=0pt, text width=\twidth},
]
\node (pushMult) [box, xshift=-\pindent] {
  $\pushj{\dlroot \lessdot_{\none} \spretile{let}}{\snil}{\drroot}$
};
\node (reduceTwo) [box, xshift=\pindent, below=of pushMult.south west,
  anchor=north west] {
  $\spretile{let}
  \doteq_{\botsort{\patsort}} \sbintile[white]{=}
  \doteq_{\botsort{\expsort}} \sbintile[white]{in}
  \gtrdot_{\bddsort{1}{\expsort}{\bot}} \drroot$
};
\node (reduceTwo2) [box, below=of reduceTwo] {
  $\snil
  \rcurvearrowne
    \botsort{\patsort}\,\botsort{\expsort}\,\bddsort{1}{\expsort}{\bot}
  =
    \snode{\ophole^\patsort}\,\snode{\ophole^{\expsort}}\,\snode{\ophole^{\expsort}}$
};
\node (pushMult2) [box, below=of reduceTwo2
] {
  $\pushl{\dlrt}{\dsty{\term_{\texttt{let}}}}{\drroot}$
};
\node (shiftMult) [
  box,
  text width={\twidth-\pindent},
  below right=4pt and \pindent of pushMult2.south west,
  anchor=north west
] {
  $\dlroot \doteq_{\botsort{\expsort}} \drroot$
};
\node (shiftMult2) [
  box,
  text width={\twidth-\pindent},
  below=2pt of shiftMult,
] {
  $\pfill{\term_{\texttt{let}}\!}{\!\botsort{\expsort}} = \term_{\texttt{let}}$
};
\node (pushEnd) [
  box,
  text width={\twidth+\pindent},
  below left=4pt and 2*\pindent of shiftMult2.south west,
  anchor=north west
] {
  $\dlroot \doteq_{\term_{\texttt{let}}} \drroot$
};
\begin{scope}[on background layer]
  \reducebox{reduce2}{(reduceTwo) (pushMult2) (shiftMult2)}
  \shiftbox{shiftmult}{(shiftMult) (shiftMult2)}
\end{scope}
\end{tikzpicture}
    \vspace{-1.5em}
    \caption{$\term_{\texttt{let}} = \snode{
      \spretile{let}\snode{\ophole^\patsort}\sbintile[white]{=}\snode{\ophole^{\expsort}\!}\sbintile[white]{in}\snode{\ophole^{\expsort}}
    }$}
    \end{subfigure}
  \vspace{-1.5em}
  \caption{\tylrcore traces with multi-step precedence walks}
  \label{fig:meldr-walk}
  \end{minipage}
\end{wrapfigure}

\autoref{fig:meldr-walk} shows how \tylrcore generalizes the single-step
precedence comparisons of the original algorithm to multi-step
precedence \emph{walks}.
Where the original method would get stuck trying to push
$\soptile[\typcolor]{Num}$ or $\drroot$
against the stack $\dlroot \lessdot_{\none} \spretile{let}$
because $\spretile{let}$ is precedence-comparable with either,
\tylrcore can proceed because it finds extended walks like
$\spretile{let} \lessdot_{\none} \prehole^\patsort \lessdot_{\none}
\soptile[\typcolor]{Num}$
and $\spretile{let}
      \doteq_{\botsort{\patsort}} \sbintile[white]{=}
      \doteq_{\botsort{\expsort}} \sbintile[white]{in}
      \gtrdot_{\bddsort{1}{\expsort}{\bot}} \drroot$.
In both of these cases, the fill operation has multiple ways
of assigning the initial reduction $\snode{\soptile[\patcolor]{x}}$
to the traversed slots,
as determined by the rule \rulename{Fill-Partition}.
Whichever walk is chosen and however its slots are filled
(\autoref{sec:min-oblig}),
the intermediate terminals and filled slots
traversed between the comparands
form the completion \tylrcore uses to repair the input.


A subtle but consequential difference between \tylrcore and OP parsing
lies in our definition of \rulename{Reduce}:
\tylrcore does not require that the comparison walk
conclude with the pushed terminal $\cft$---%
any concluding terminal (notated $\medsquare$) is sufficient.
This relaxation allows \tylrcore to fall back to \rulename{Reduce}
when the stack head and pushed terminal are not monotonically
precedence-walkable,
completing and reducing the head stack level
and deferring the comparison to something further up the stack.
An example of this is shown in the first \rlabel step
in \autoref{fig:fix-incomplete} that handles pushing $\spretile{let}$
against the stack $\dlroot \lessdot_{\none} \soptile{2}$.
As shown in the example and in our metatheory,
this recursive deferral is guaranteed to conclude eventually with
the base rule \rulename{Shift},
thanks to the various grout forms that can accommodate
both the accumulated reduction and the pushed terminal.
This fallback to completion and reduction is a sort of
opposite of ``panic mode'', which is forced instead to drop
parts of the stack without the multiplicity-handling
guarantees of grout.

\subsubsection{Sound and Total} \label{sec:sound-total}
Altogether, molded tiles, injected grout, and our filling and walking extensions
of OP parsing guarantee that \tylrcore can complete and reduce any sequence of
input tiles into a well-formed term.


\begin{lemma}[Pushing is Sound and Total] \label{lemma:push-total-0}
  For all well-formed stacks $\wf{\stack}$ and tiles $\labl$,
  there exists well-formed stack $\wf{\stack'}$
  such that $\push{\stack}{\snil}{\labl}{\stack'}$.
\end{lemma}

\begin{theorem}[Parsing is Sound and Total]
\label{theorem:totality}
  For all well-formed stacks $\wf{\stack}$ and tile sequences $\s{\labl}$,
  there exists well-formed stack $\wf{\dlroot\doteq_{\term}\drroot}$
  such that $\parse{\stack}{\s{\labl}\,\drroot}{\dlroot\doteq_{\term}\drroot}$.
\end{theorem}

\noindent
The traces in \autoref{fig:fix-unsound-incomplete} illustrate
this guarantee for the failed examples in \autoref{fig:simple-incomplete}
and \autoref{fig:simple-unsound}.

\input{fig/parsing-example}

\section{From \tylrcore to \tylr} \label{sec:implementation}

\tylrcore formalizes a nondeterministic parser of tile sequences, possibly completing them with some choice of grout and additional tiles, and we showed that the resulting term is grammatical and guaranteed to exist.
To turn this into a deterministic parser of textual input, we must answer the following questions. (A) How does the parser ``mold'' raw text into the tiles to be parsed, in particular when numerous grammatically unique tiles share a common textual form? (B) How does the parser rank and choose among different possible completions?

Moreover, \tylrcore assumes a batch processing context, where the entire input is parsed left-to-right from scratch.
Further questions arise when incorporating \tylrcore into an interactive editor like \tylr. (C) How might existing structures and completions guide or constrain subsequent molding and completion choices? (D) How does the user interact with the system-chosen completions, in particular when it differs from their intent?

This section describes how we addressed these questions in our implementation of \tylr.
Subsequently, Section~\ref{sec:user-study} presents the user study we conducted to evaluate these decisions.

\subsection{Minimizing Obligations} \label{sec:min-oblig}
Guiding \tylr's various decisions is a simple principle: \emph{minimize obligations}.
Obligations serve not only to scaffold and complete partial structures, but also as a useful metric for resolving ambiguities.
Because \tylrcore is total, we may adopt the simple strategy of trying every choice at each juncture---setting aside efficiency concerns for the moment---and taking the one that inserts the fewest (and removes the most) obligations.

Each type of obligation is weighted differently.
Recall from \autoref{sec:design-overview} that the various forms of obligations can be viewed as indicators of \emph{multiplicity} and \emph{sort} inconsistencies between the top-down expectations of the grammar and the bottom-up reductions of the input:
\begin{itemize}
  \itemsep 0pt
  \item Operand grout $\ophole$ indicate there is no term where one is expected ($0 = \bullet < 1$).
  \item Ghosts indicate there is a partial term where one is expected ($0 < \bullet < 1$).
  \item Prefix $\prehole$ and postfix $\poshole$ grout indicate there is a term
  as expected ($\bullet = 1$), but of the wrong sort.
  \item Infix grout $\binhole$ indicate there are multiple terms where one is expected ($1 < \bullet$).
\end{itemize}
The obligations are listed above in order of increasing weight class.
Given two sets of changes in obligations, we compare them lexicographically from highest to lowest weight class.
The principle underlying this ordering is \emph{context preservation}: lower-weighted obligations like operand grout and ghosts are introduced to complete a form independent of its context, whereas higher-weighted obligations like pre-, post-, and infix grout appear when the current context cannot accommodate a form and must change.

\paragraph*{Molding Tiles}
\begin{wrapfigure}[5]{R}{0.52\textwidth}
\setlength{\tabcolsep}{3pt}
\vspace{-1em}
\begin{tabular}{rcl}
  $\pushl{\dlroot \ltop{\none} \spretile{let}}{\none}{\spretile{(}}$ & $=$ &
  $\dlroot \ltop{\none} \spretile{let} \eqop{\nnnode{\ophole^{\patsort}}} \sbintile[white]{=} \ltop{\none} \spretile{(}$ \\
  $\pushl{\dlroot \ltop{\none} \spretile{let}}{\none}{\spretile[\patcolor]{(}}$ & $=$ &
  $\dlroot \ltop{\none} \spretile{let} \ltop{\none} \spretile[\patcolor]{(}$ \\
  $\pushl{\dlroot \ltop{\none} \spretile{let}}{\none}{\spretile[\typcolor]{(}}$ & $=$ &
  $\dlroot \ltop{\none} \spretile{let} \ltop{\none} \prehole^{\patsort} \ltop{\none}\spretile[\typcolor]{(}$
\end{tabular}
\end{wrapfigure}
When a token is inserted, \tylr looks up which tiles in the grammar share the same textual form (typically only a few) and considers the consequences of parsing each one.
For example, when editing Hazel (\autoref{fig:pbg-example}),
suppose the token \code{(} is inserted against the stack
$\dlroot \ltop{\none} \spretile{let}$.
There are three distinct tiles with the same textual label,
each of a different sort.
Pushing each tile against the stack leads to the following minimal outcomes, where ghosts are indicated with a white background:
The first option introduces an operand hole and a ghost, while the third introduces a prefix grout.
The clear winner is the second option, an opening parenthesis of pattern sort, which introduces no new obligations.

\paragraph*{Choosing Completions}
The parsing rules allow for arbitrary walks through the precedence relation graph, with each step from the head of the stack inserting one or more new obligations.
For example, the following are all valid precedence walks when applying the \texttt{Shift} rule to derive $\pushl{\dlroot \ltop{\none} \spretile{(}}{\none}{\sbintile{,}}$:
\[\begin{array}{rlrl}
  (A) & \spretile{(} \eqop{\nnnode{\ophole^\expsort}} \sbintile{,}
  & (D) & \spretile{(} \ltop{\none} \spretile[white]{(} \eqop{\nnnode{\ophole^\expsort}} \sbintile{,} \\
  (B) & \spretile{(} \eqop{\nnnode{\ophole^\expsort}} \sbintile[white]{,} \eqop{\nnnode{\ophole^\expsort}} \sbintile{,}
  & (E) & \spretile{(} \ltop{\none} \spretile[white]{let} \eqop{\nnnode{\ophole^\patsort}} \sbintile[white]{=} \ltop{\none} \spretile[white]{(} \eqop{\nnnode{\ophole^\expsort}} \sbintile{,} \\
  (C) & \spretile{(} \eqop{\nnnode{\ophole^\expsort}} \sbintile[white]{,}  \eqop{\nnnode{\ophole^\expsort}} \sbintile[white]{,} \eqop{\nnnode{\ophole^\expsort}} \sbintile{,}
  & (F) & \spretile{(} \ltop{\none} \prehole^{\expsort} \ltop{\none} \spretile[white]{(} \eqop{\nnnode{\ophole^\expsort}} \sbintile{,}
\end{array}\]
\tylr limits the walks considered to those of shortest length found via breadth-first search, ruling out options like (B) and (C).
\tylr also filters out walks with strictly $\ltop{}$-intermediate tile levels, such as $\spretile[white]{let} \eqop{\nnnode{\ophole^\patsort}} \sbintile[white]{=}$ in (E),
preferring instead to abstract such possibilities with grout like in (F).
The remaining walks are subsequently sorted by height and length to break ties in obligation deltas when filling in any accumulated terms.


\subsection{Maintaining Obligations}
Total error-correcting parsing lends itself to a continuously structured editing experience. Indeed, our obligation design is inspired directly by numerous structure editor designs~\cite{moonTylrTinyTilebased2022,moonGradualStructureEditing2023}.
In this setting, questions arise as to how to maintain and remove existing obligations to produce a smooth editing experience, and how to insert new obligations around existing structures.

The main concern regards inserting, maintining, and removing ghosts, as the minimal requisite grout needed to complete an edit state is fully determined if all requisite tiles are in place.
Ghost maintenance concerns roughly divide into three areas.
The first concerns inserting ghost replacements after deleting requisite tiles---in this case, to maximize continuity, \tylr replaces deleted requisite tiles with ghosts in the same position.
The second concerns inserting fresh ghosts around existing structures on insertion.
As mentioned in \autoref{sec:design-overview}, \tylr uses a simple policy of inserting any pending ghosts at the first newline following the insertion---all other positioning concerns are deferred to obligation minimization and completion choices.

The third area concerns removing existing ghosts when they are no longer needed.
\tylr models its edit state as a pair of prefix and suffix stacks, where the suffix is reparsed after each change.
There are two cases to consider.
The first is when a ghost in the suffix becomes redundant---for example, when
inserting \sposttile{)} between the stacks
\hfill
$\dlroot \ltop{\none} \spretile{(} \ltop{\none} \soptile{2}
\hskip10pt \boldsymbol{\vert} \hskip10pt \sbintile{+}
\gtop{\snode{\soptile{3}}} \sposttile[white]{)} \gtrdot_{\none} \drroot$ \\
When a ghost is encountered in the suffix, \tylr pushes it onto the prefix stack as if it were a solid tile and removes it if it cannot find an $\eqop{}$-match.
The second case is when a ghost in the prefix becomes redundant---for example,
when inserting $\sbintile{in}$ between the stacks

\vspace{-1em}
\[\arraycolsep=10pt\begin{array}{rcl}
\dlroot \ltop{\none} \spretile{let} \eqop{\nnnode{\ophole^\patsort}} \sbintile{=} \eqop{\nnnode{\ophole^\expsort}} \sbintile[white]{in} \ltop{\none} \soptile{4}
& \boldsymbol{\vert} & \drroot
\end{array}\]
\vspace{-1em}

\noindent
When \tylr pushes $\sbintile{in}$ onto the prefix and encounters the ghost $\sbintile[white]{in}$, \tylr tries removing it and commits to the removal if the pushed $\sbintile{in}$ finds an $\eqop{}$-match.
Our current design is limited in that it provides no way to removing ghosts directly, instead requiring the user to insert a solid tile replacement elsewhere, the consequences of which we discuss in more detail in \autoref{sec:user-study}.

\subsection{Performance}

The focus of this paper is on the conceptual, theoretical, and interaction
design of \talltylr.
We did little to optimize its performance beyond what was needed for
responsiveness on relatively small programs (less than 100 lines)
and make no strong claims, though we report basic performance numbers in the
supplemental appendix for the sake of completeness.
There are high-level reasons to believe that this approach would scale
performantly.
Standard OP parsing scales linearly with the input and moreover enjoys the property of
local parsability which greatly simplifies incrementalization and
parallelization
\cite{barenghiParallelParsingOperator2013,barenghiParallelParsingMade2015}.
Meanwhile, prior work on enumerating local repairs
\cite{considineSyntaxRepairIdempotent,diekmannDonPanicBetter2020}
suggests this can be done efficiently.
We leave detailed optimizations along these lines to future work.

\section{User Study} \label{sec:user-study}

Prior work on error-handling parsing does not explicitly consider user
interfaces for representing and interacting with parse errors.
In \tylr, we explore a novel UI that materializes obligation-based repairs
as inline completions.
This requires making choices about where to insert obligations in situations
underdetermined in our formal model.
Providing a good user experience thus requires choosing heuristics which
adequately anticipate user intent across real-world coding tasks,
as well as providing affordances to correct obligation placement in cases where
these heuristics fail.

We took a maximally structured approach, inserting or removing  obligations on every code edit so that the edit state remains structured at all times. While this strategy is desirable in that it allows the possibility of continuous language server feedback, it is relatively aggressive, raising questions about the impact of frequent insertion and removal of elements within the text flow.

We considered the following questions:

\begin{itemize}[noitemsep,nolistsep]
    \item[\textbf{Q1}] Do users generally find \tylr usable and useful across a range of naturalistic code insertion and modification tasks?
    \item[\textbf{Q2}] During which kinds of editing operations do users find specific \tylr mechanisms useful, confusing, or cumbersome?
\end{itemize}


\subsection{Study Design}

We ran a remote user study, recording participants' screens as they performed nine code transcription and modification tasks. Each sixty minute session began with a series of pre-recorded videos outlining the motivation for \tylr and its essential editor mechanisms. To reduce jargon, we referred to syntactic obligations as `placeholders' in the study materials.

After the introduction, users performed a practice task to familiarize themselves with the study setup. For each task, they were asked to read and internalize their goal, ask any clarifying questions, and then proceed, pausing after each task to relay any reflections, possibly replaying their actions. At the end of the study, participants were sent a link to an exit survey.

We piloted a shorter version of this study with an earlier prototype;
we have included quotes from one previous participant (labeled
\textbf{P0}) in \autoref{sec:user-study-results}.

\subsubsection{Participants}

We recruited participants with self-reported experience in expression-based
languages by posting on Bluesky, Mastodon, and X offering
compensation of \$25 USD for a 1-hour session. Our study had 9 participants (8
male, 1 non-binary); ages 19-38 ($\mu = 28$); 5-25 years of programming experience
($\mu = 13$), and 1-15 years of functional programming
experience ($\mu = 6$).

\subsubsection{Tasks}
We chose nine code editing tasks (\autoref{tab:tasks}) intended to reflect
real-world use patterns, six of which are adapted from a
previous study \cite{moonGradualStructureEditing2023}. As well as simple entry
and spot-editing tasks, we included more complex goals most economically
accomplished by multiple
edits which temporarily break term structure; an example is shown in
\autoref{fig:task-example}. Since the language syntax is new to study
participants, we asked them to carefully read the desired end state, and to ask
the study administrator any questions about the semantics of the requested
transformation.

\begin{table}
  \centering
  \caption{Study tasks including line count change between initial and target
  states}  \label{tab:tasks}
  \vspace{-0.2em}
  \begin{tabular}{|p{0.05\textwidth}|p{0.15\textwidth}|p{0.55\textwidth}|p{0.06\textwidth}|}
      \hline
      \textbf{Task} & \textbf{Type} & \textbf{Description} & \textbf{Lines} \\
      \hline
       1 & Transcription & Linear entry of data pipeline & +5 \\
      \hline
       2 & Modification & Rearrange the elements of a data pipeline & +2 -2 \\
      \hline
       3 & Transcription & Linear entry of geometry processing function & +5 \\
      \hline
       4 & Modification & Extract helper function & +6 -3 \\
      \hline
       5 & Transcription & Linear entry of graphics function definition & +7 \\
      \hline
       6 & Modification & Refactor a function to remove redundancy & +7 -7\\
      \hline
       7 & Modification & Add a sum type and add branching to linear code & +9 -3 \\
      \hline
       8 & Modification & Uncurry function definition and type annotation & +2 -2\\
      \hline
       9 & Modification & Fuse a series of transformations & +4 -4 \\
      \hline
  \end{tabular}
\end{table}

\begin{figure}
  \centering
  \includegraphics[width=0.8\linewidth]{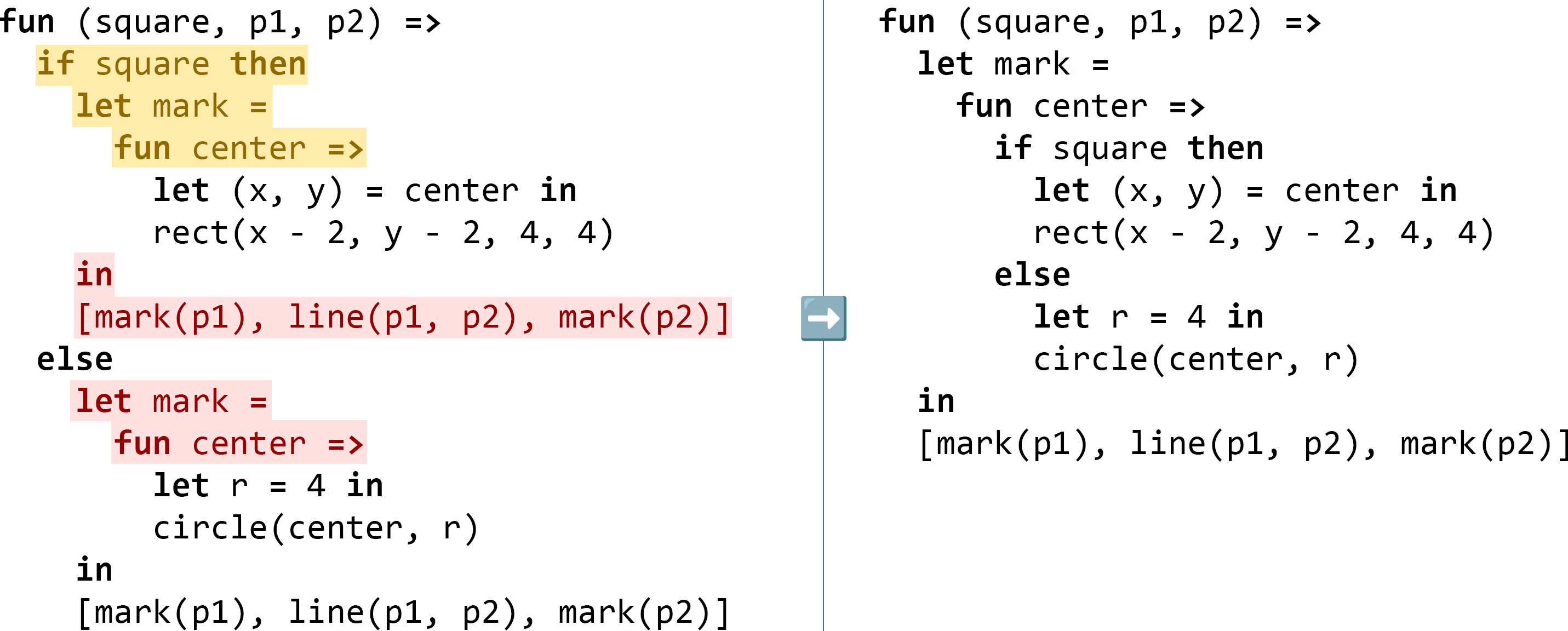}
  \caption{Task 6 asked participants to refactor a function from the start state
  on the left to the target state on the right. Highlighting is added here for readability and was not present in the study.}
  \label{fig:task-example}
\end{figure}
After each task, participants were asked to reflect on any unexpected or interesting behaviors they encountered. Since we knew participants would approach tasks via different editing strategies, for some tasks we provided a follow-up reflection slide illustrating a specific edit and ensuing placeholder insertion in order to more directly solicit opinions on particular heuristics.

\subsection{Results} \label{sec:user-study-results}

\begin{figure}
    \centering
    \includegraphics[width=1.0\linewidth]{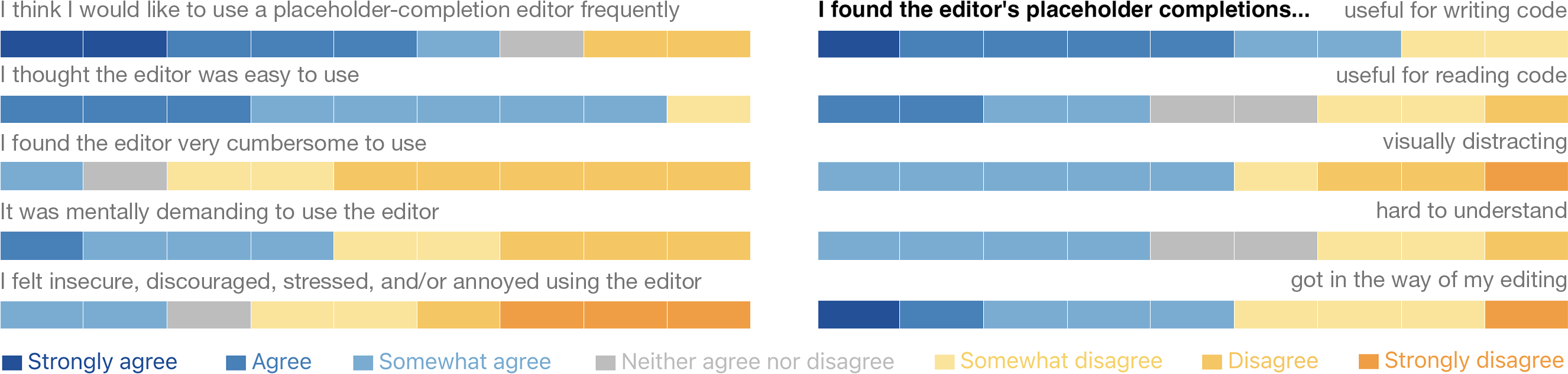}
    \caption{Participant opinions on \tylr's general usability (left) and reactions to placeholders (right)}
    \label{fig:likert}
\end{figure}

Participant assessments of overall usability are summarized in \autoref{fig:likert} , with eight of nine participants at least somewhat agreeing that \tylr was easy to use. However four participants found the editor at least somewhat mentally demanding, with two experiencing stress or annoyance. This may have been impacted by bugs in the prototype. Promisingly, six participants reported desire to frequently use an editor supporting placeholder completions.

Of those who found \tylr easy to use, \textbf{P8} said ``the typing experience felt premium, bespoke, closer to video game than text editor''. With respect to obligations, \textbf{P0} remarked ``I don’t think I expect an editor to exactly pinpoint what fix I need to make. How would it know what I intended? But I like that this allows me to instantly see what it is that you’re assuming I meant.''

Seven participants at least somewhat agreed obligations helped while writing code. Participants found placeholders particularly helpful during left-to-right entry, with \textbf{P8} saying ``I always would like placeholder completions until I have a complete expression!''. Some attributed this to lowered mental load - \textbf{P0} liked that they could ``turn my brain off a bit while typing them out''. \textbf{P2} appreciated that they ``just have to remember how to write the first token in a term'' due to ghost insertion.

However, five participants felt obligations at least somewhat got in the way while modifying code, and half of participants found placeholders at least somewhat visually distracting and hard to understand. \textbf{P9} said that ``When I've created an invalid state [during refactoring], the placeholders often didn't feel helpful''. \textbf{P7} agreed, saying ``it seemed to just break and also just jumble up the screen which is when I probably would've preferred a normal editor with red text''.

We identified a number of specific scenarios where placeholders proved problematic. Three are included below, and others (along with more participant reactions) are located in the appendix.

\paragraph*{Failed attempts to bust ghosts directly}

Although our intended workflow to address ghosts in unwanted positions is for the user to insert the delimiter where they wanted, leaving \tylr to clean up the misplaced ghost, many participants found themselves wanting to interact with ghosts more directly. At least five participants attempted to directly delete ghosts in one or more tasks, despite a caution against this in the introductory video. \textbf{P4} felt ``the placeholder completions felt like they were there to help the computer, not me'', saying that ``where I was unable to delete the ghosts and grout, it took a while to figure out how to get rid of them.'' This was particularly felt when the obligations were inserted in the middle of a complex edit, with \textbf{P5} saying ``the editor sometimes added a lot of holes while I was in the middle of editing an expression, which I instinctively tried to delete.'' This issue was exacerbated by a bug in the \tylr prototype that sometimes prevented ghost cleanup in the presence of nested ghost delimiters.


\vspace{-0.2pt}
\paragraph*{Ghosts are sometimes too tied to the place where they were deleted}

\begin{wrapfigure}[14]{r}{0.6\textwidth}
  ${}$\hfill
  \begin{minipage}{0.58\textwidth}
  \centering
  \vspace{-1em}
  \includegraphics[width=\linewidth]{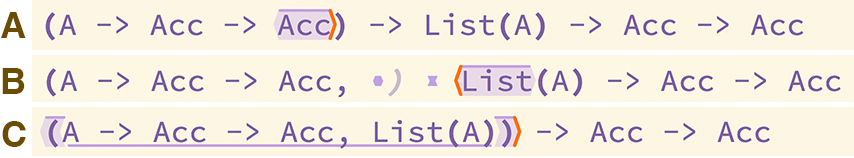}
  \vspace{-1em}
  \caption{During Task 8, participants must modify type annotation (A) to uncurried form. If this is approached in a left-to-right fashion, the user will insert a comma (creating an operand obligation), delete the parenthesis (leaving a ghost), and delete the type arrow (creating a infix obligation) as shown in (B). If the ghost parenthesis did not retain its location, the grout could be combined and cleaned up. This cleanup only occurs when the user re-inserts the closing parenthesis (C).}
  \label{fig:scenario-curry}
  \end{minipage}
\end{wrapfigure}
Although participants generally liked that ghost delimiters remembered their original positions, this did lead to some confusing in-between (``tween'') states. \autoref{fig:scenario-curry} shows a scenario encountered by three participants during Task 8. Here users found the tween state distracting, sometimes attempting unsuccessfully to delete the obligations directly, although all eventually moved on to complete the task successfully.






\paragraph*{Uncertainty around triggering token remolding}
In \tylr users must press space after entering a leading delimiter like \code{let} before the associated grout and trailing delimiter ghosts are inserted. This special treatment of space is primarily to permit entry of tokens beginning with \code{let}, and secondarily to mitigate jarring changes by limiting them to occur only when certain 'action keys' are pressed. In our study this behavior was unproblematic when writing code, but for editing it caused issues, particularly during typo correction. For example, during Task 1 participant \textbf{P3} mistyped an operator requiring space and continued on to the end of the line. They later went back to correct it, but since there was already a space afterwards, they didn't bother to press space after the correction, resulting in remaining infix obligation and the operator  left unmolded. Similar issues confusion for at least 3 participants, including  \textbf{P7} who noted that "space has a learning curve".



\subsection{Threats to Validity}


Our participants are few and drawn from social media networks already self-selected for affinity towards novel programming tools and concepts.

Our study is a synthetic representation of real coding tasks in the sense that
participants' attention is artificially divided. Where programmers might
otherwise be focused on writing, they now must go back and forth between the
slides and the editor. This might make \tylr look both worse and better, in that
participants cannot devote their full attention to editor mechanics, but also
may avoid being distracted by confusing obligations during awkward tween states.

For the purpose of reducing jargon, we referred to syntactic obligations as `placeholders' in the study materials, a choice which may have backfired as some participants seemed to expect that these placeholders should be less insistent and easier to dismiss.

There are also a number of factors which complicate clearly ascribing participant difficulties to \tylr mechanics, including: (1) unfamiliar syntax leading to task confusion and higher rates of typos; (2) bugs in the editor interfering with participants' accurately internalizing editor mechanics; (3) learning curve and difficulty internalizing novel concepts within 60 minutes.

\section{Related Work} \label{sec:related-work}




\paragraph*{OP Parsing}
Operator-precedence (OP) parsing \cite{floydSyntacticAnalysisOperator1963} is an
early form of bottom-up parsing, which proceeds by iteratively reducing the
input sequence of terminal symbols to a single start nonterminal.
The goal at each iteration is to find the next \emph{handle}, a subsequence of symbols matching the righthand side of a grammar production rule, and replace it with the lefthand side nonterminal.
Unlike other parsing methods, OP parsing elides the reduced nonterminals.
Handles are identified as chains of precedence-related terminals of the form
$\labl_L \ltop{} \labl_0 \eqop{} \cdots \eqop{} \labl_k \gtop{} \labl_R$, where
$\labl_L$ and $\labl_R$ delimit the handle's terminals $\labl_0\dots\labl_k$.
Where typically the handle would be replaced by its reduced nonterminal,
OP parsing instead replaces it with a precedence comparison $\odot\in\{\lessdot,
\doteq, \gtrdot\}$ describing the relation between the delimiters $\labl_L$ and $\labl_R$.

\citet{levyCompleteOperatorPrecedence1975} observed that OP parsing is consequently unsound,\footnote{
  Levy took the goal of parsing to be detecting invalid sentences rather than valid ones, and hence called ``complete'' (detect all invalid sentences) what we call ``sound'' (detect only valid sentences) in this work.
} as we illustrated in \autoref{fig:simple-unsound}.
In a resolution similar to ours, \citet{hendersonExtendedOperatorPrecedence1976} split each precedence relation $\odot$ into two relations $\odot_1$ and $\odot_2$, the difference being whether a reduced nonterminal is expected between the related terminals.
Our approach generalizes this idea by indexing each relation by the optional
nonterminal itself---the slot-filling operation in \tylrcore
(\autoref{fig:fill}) uses this extra information to validate
that the bottom-up accumulated reduction meets the top-down slot's requirements.

\paragraph*{Precedence Annotations}
In OP parsing, the precedence comparisons are derived from
the derivation patterns of an unannotated CFG.
In other words, these methods expect operator precedence conventions to be encoded in the CFG's nonterminal dependency structure.
This is tedious to do by hand and leads to a profileration of nonterminals, one
for each precedence level, that obscure the language's natural organization
into semantically meaningful sorts.
The rule \texttt{Produce-Tighten} in our PBG-to-CFG elaboration
(\autoref{fig:compile-precedence}) automatically extracts
these dependency structures from a sort-organized PBG.
Not only does this organization benefit grammar authoring and documentation,
it also helps our system repair errors using a concentration of grout forms
that are semantically meaningful and thereby more easily user-communicable.


Predominant interpretations of precedence annotations are specific to the parsing method---in LR parsers generators, for example, the annotations are used
to resolve shift/reduce conflicts in the generated action table \cite{ahoDeterministicParsingAmbiguous1975}.
Less common are parser-independent semantics, such as our PBG-to-CFG elaboration
(\autoref{fig:compile-precedence}).
\autoref{sec:derive-prec} described how prior semantics
by \citet{desouzaamorimMultipurposeSyntaxDefinition2020} (for the language
workbench Spoofax \cite{katsSpoofaxLanguageWorkbench2010})
and \citet{danielssonParsingMixfixOperators2011} (for mixfix operators in Agda)
are unnecessarily restrictive in how they handle prefix and postfix operators.
Making similar observations,
\citet{aasaPrecedencesSpecificationsImplementations1995}
defined the precedence weights that we recapitulate
in our elaborated reduction rules (\autoref{fig:compile-precedence}).
Aasa used these measure to define when a derivation tree
of the underlying unannotated grammar is \emph{precedence-correct}
according to the annotations.
Separately, Aasa also defined a translation from annotated to unannotated
grammars, but this translation follows a different, more complicated design, an opinion we share with
\citet{danielssonParsingMixfixOperators2011}.
Our elaboration re-centers Aasa's precedence weights via
a novel bidirectional organization.

\paragraph*{Error Handling}
Modern parsers are expected to be able to \emph{recover} from errors (\ie
unexpected tokens) and continue parsing around the error site.
Most recovery methods attempt to \emph{repair} the input text around the error, differing in what repairs they consider and how they choose among them.
The simple ``panic mode'' method \cite{ahoCompilersPrinciplesTechniques2007,gruneParsingTechniquesPractical2008} limits itself to repair by deletion, dropping tokens around the error until parsing can resume from some prior state.
While simple to implement, this method is liable to skip large regions of code,
as illustrated in \autoref{fig:syntax-error}B, leaving the programmer without
downstream semantic analysis.
\tylrcore takes an opposite approach, where the tokens dropped by a
panicking parser are instead completed, reduced, and propagated up the stack
with the assurance that, eventually, grout can be used to join together
extraneous terms.

To minimize skipped input, more sophisticated methods
\cite{grahamPracticalSyntacticError1975,considineSyntaxRepairIdempotent,fischerLocallyLeastCostLRError1979}
consider the full range of possible repairs around an error,
including insertions as well as deletions,
and pick one of least cost according to a language-specific cost
vector of token modifications or else textual edit distance.
Most similar to our work is the FMQ method \cite{fischerEfficientLL1Error1980}, which performs repairs using only insertions.
This work defines the \emph{insert-correctable} class of grammars against which
any input text can be repaired by insertions to grammatical form---in our work,
grout injection (\autoref{fig:add-grout}) systematically relaxes any grammar to
be insert-correctable.

Across their variations, these prior repair-based recovery methods limit themselves to purely textual repairs.
This can lead to combinatorial explosion in the space of possible insertion repairs, as illustrated in \autoref{fig:syntax-error}C.
While these repairs can be enumerated efficiently in practice \cite{diekmannDonPanicBetter2020,considineSyntaxRepairIdempotent}, prior work does not consider the question of how to effectively surface these repairs to the programmer.
Our approach is novel in its use of abstract syntactic obligations to compress, communicate, and rank the space of possible repairs.

\paragraph*{Structure Editing}
\talltylr continues a series of design experiments in increasingly
flexible and text-like structure editing.
Its precedessors, \tinytylr \cite{moonTylrTinyTilebased2022} and \teentylr
\cite{moonGradualStructureEditing2023},
proposed and refined a model of structure editing
in which the primary units of the edit state
are nested spans of matching tokens,
there called \emph{tiles} of matching \emph{shards}.
A design consequence of this particular physical metaphor
was that shards remained matched for life in those editors, which
\citet{moonGradualStructureEditing2023} observed compromised
overall usability.
Relaxing this restriction, and repurposing the term \emph{tile}
for molded tokens, led to the present design.

Error-handling parsing and structure editing are kin in their goals
of maximizing structure, but emphasize quite different aspects of
the design/technical problem space.
Our prior emphasis on structure editor design led here to a unique
approach to error-handling parsing, one that builds on the underexplored
symbol-based perspective of OP parsing (driven by symbol-to-symbol precedence
comparisons),
as opposed to the predominant item-based perspective
of methods like LL/LR (where items refer to points \emph{in between} the
symbols of a production rule, used to define the states of handle-finding automata).
The token-based perspective comes with design advantages, simply because tokens
provide more visual surface area to decorate than zero-width items---%
it is, in our opinion, much easier to display and describe molds to the programmer
than it is to display and describe items and automaton states.

\section{Conclusion} \label{sec:future-work}


This paper presented \tylr,
a tile-based parser and editor generator
that handles errors by completing its input with syntactic obligations.
We developed these ideas precisely in our parsing calculus
\tylrcore, which extends OP parsing with error handling and guarantees a
well-formed result on all inputs---along the way, it offers
a new unified account of operator precedence.
Key components of \tylrcore's assured totality
include relaxing grammaticality with grout,
used to buffer inconsistencies of multiplicity and sort,
and generalizing the single-step comparisons of OP parsing
to multi-step walks that serve as completion-repairs.
We proposed the principle of minimizing obligations
that governs how \talltylr discharges the various choices
required for parsing and handling errors.
Our user study suggested that syntactic obligations generated this way have both
demand and promise, but more design work is needed to give the programmer
more control over their placement and removal, especially when modifying
existing code.
Altogether, this work opens up a significant new design space and we look forward to future design experiments driven by the core ideas introduced in this paper.



\newpage
\section*{Acknowledgements} \label{sec:acknowledgements}
This work was partially funded by the National Science Foundation under Grant No. 2238744. Any opinions, findings, and conclusions or recommendations expressed in this material are those of the author and do not necessarily reflect the views of the National Science Foundation.

\section*{Data Availability Statement} \label{sec:data-availability}

This paper includes an artifact \cite{moonArtifactSyntacticCompletions2025} consisting of the study materials given to
participants in the user study.
These include a slideshow, introductory videos, an exit survey,
and a copy of the \tylr editor prototype,
which can be run in a modern web browser.
A detailed table of contents and instructions are provided
in \texttt{README.md}.

\bibliography{references}

\renewcommand{\chpath}{appendix}

\appendix
\section{Proofs for \autoref{sec:tylrcore}} \label{appendix:proofs}



In this appendix, we will use the following shorthand notations:
\begin{align*}
\lmsym{\cfs}{\cfx} & \ \ \triangleq\ \  \cfs\produce\cfx\dots \\
\adj{\cfx_L}{}{\cfx_R} & \ \ \triangleq\ \ \exists \cfs.~\cfs\produce\dots\cfx_L\cfx_R\dots \\
\rmsym{\cfx}{}{\cfs} & \ \ \triangleq\ \ \cfs\produce\dots\cfx \\
\yields{\cfs}{\cfx} & \ \ \triangleq\ \ \cfs\produce\dots\cfx\dots
\end{align*}

\subsection{Precedence Comparisons}

Lemmas \ref{lemma:homogeneity}-\ref{lemma:no-trespassing} follow by inspection of the elaboration (\autoref{fig:compile-precedence}) and injection (\autoref{fig:add-grout}) rules.

\begin{lemma}[Homogeneity] \label{lemma:homogeneity}
If $\cft_L \doteq_{\medsquare} \cft_R$ then either
\begin{itemize}
\item $\cft_L = \dlroot$ and $\cft_R = \drroot$;
\item $\cft_L = \labl_L$ and $\cft_R = \labl_R$ for some tiles $\labl_L,\labl_R$; or
\item $\cft_L = \groot_L^{\sort}$ and $\cft_R = \groot_R^{\sort}$ for some grout
$\groot_L,\groot_R$ and sort $\sort$.
\end{itemize}
\end{lemma}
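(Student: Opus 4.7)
The plan is to prove the lemma by straightforward case analysis on the rules that derive $\eqbrel{\cft_L}{\medsquare}{\cft_R}$, appealing directly to the elaboration from precedence-bounded grammars into plain CFGs (\autoref{sec:elab-prec}) and to the subsequent grout injection producing $\cfgi$. As in Floyd's original framework, $\doteq$ between two terminals is introduced precisely when those terminals appear as consecutive terminal positions (separated by at most one nonterminal) in the right-hand side of some production of $\cfgi$. It therefore suffices to enumerate the shapes of productions that the generator can emit and inspect the adjacent-terminal pairs each produces.

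By construction, the productions of $\cfgi$ fall into three homogeneous families, each contributing exactly one disjunct of the conclusion. First, the unique root-wrapping production has shape $\startSym \mapsto \dlroot\,\cdots\,\drroot$ and is the sole source of $\doteq$-pairs involving the root delimiters, yielding the first disjunct. Second, each mixfix tile production is generated from a single tile form, so its terminals are drawn entirely from that form's labels $\labl$; any $\doteq$-pair arising here therefore lies in the second disjunct. Third, the grout-bridging productions added by injection are sort-indexed and pair a left-grout $\groot_L^{\sort}$ with a right-grout $\groot_R^{\sort}$ of matching sort $\sort$, yielding the third disjunct.

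The bulk of the argument is thus a mechanical check that every production form falls into exactly one of these three families. The one subtlety to watch for is that nothing in the injection rules accidentally places a grout of one sort adjacent to a grout of a different sort, nor mixes root delimiters or tile labels into a grout bridge. Since roots appear only in the root wrapper, tile labels only in per-form productions, and grout only in sort-preserving bridges, this is immediate by inspection; the only care required is tracking how the sort superscript on grout is propagated through the bridging rules. No induction is needed: each derivation of $\eqbrel{\cft_L}{\medsquare}{\cft_R}$ is witnessed by a single production whose adjacent-terminal structure already forces one of the three disjuncts. The main (minor) obstacle is simply verifying that the formal statement of the grout-injection rules in \autoref{sec:op-parsing-error-handling} really does enforce sort-matching uniformly, so that the third disjunct's shared $\sort$ superscript is not an artifact that can be violated by some edge case in the rule schema.
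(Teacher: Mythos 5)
Your proposal is correct and matches the paper's treatment: the paper offers no explicit proof for this lemma, stating only that it ``follows by inspection of the elaboration and injection rules,'' which is precisely the case analysis on production shapes that you carry out. Your enumeration of the three homogeneous production families (root wrapper, per-tile-form productions, sort-indexed grout bridges) is exactly the inspection the paper has in mind, so no further comparison is needed.
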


\begin{lemma}[Grout Precedence] \label{lemma:grout-prec}
  The following statements hold:
  \begin{itemize}
  \item
  If $\groot$ is left-convex, then $\cft \cmpr_{\o{\cfs}} \groot^{\sort}$ if and only if $\o{\cfs} = \none$ and $\cmpr = {\lessdot}$. \\
  If $\groot$ is right-convex, then $\groot^{\sort} \cmpr_{\o{\cfs}} \cft$ if and only if $\o{\cfs} = \none$ and $\cmpr = {\gtrdot}$.
  \item
  If $\groot_R$ is left-concave,
  then $\cft \doteq_{\o{\cfs}} \groot_R^{\sort}$ if and only if
  $\o{\cfs} = \some{\zerosort{\sort}}$ and $\cft = \groot_L^{\sort}$ is right-concave. \\
  If $\groot_L$ is right-concave,
  then $\groot_L^{\sort} \doteq_{\o{\cfs}} \cft$ if and only if
  $\o{\cfs} = \some{\zerosort{\sort}}$ and $\cft = \groot_R^{\sort}$ is left-concave.
  \end{itemize}
\end{lemma}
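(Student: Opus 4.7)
The plan is to prove each bullet by case analysis on the derivation of the precedence relation, using the injection rules that add grout forms to $\cfg$ and the elaboration rules that define $\cmpr_{\o{\cfs}}$ in terms of the resulting productions of $\cfgi$. Because grout forms are introduced only by the injection rules (and not present in the original $\cfg$), every production involving a grout-labeled terminal is of a fixed, small number of shapes, and inspecting these shapes is all that is needed to pin down which comparison relation can involve it and what optional middle sort may appear.

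For the first bullet, I would argue the ``only if'' direction as follows. Fix a left-convex $\groot$. By the injection rules, any rule of $\cfgi$ in which $\groot^{\sort}$ occurs on its right-hand side places $\groot^{\sort}$ at the very start of the right-hand side (that is what left-convexity means at the grammar level: nothing may sit to its left inside the same production). Consequently, any derived precedence relation of the form $\cft \cmpr_{\o{\cfs}} \groot^{\sort}$ arises from a situation where $\cft$ immediately precedes $\groot^{\sort}$ across an outer production boundary with no nonterminal reduced in between. By the elaboration rules for $\cmpr_{\o{\cfs}}$, this forces $\cmpr = {\lessdot}$ and $\o{\cfs} = \none$. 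The ``if'' direction is then a direct reading off the same rules. The right-convex case is symmetric, with $\gtrdot$ replacing $\lessdot$.

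For the second bullet, the key observation is that the injection rules pair a right-concave $\groot_L$ and a left-concave $\groot_R$ only within a single production that completes a missing operand, and in that production the two grout terminals sandwich a nonterminal of sort $\sort$ at the bottom precedence level, i.e.\ $\zerosort{\sort}$. This is the only way a $\doteq$ relation between two terminals can be generated involving a concave grout on either side, because any other production of $\cfgi$ that uses a concave grout terminal either pairs it with a non-grout terminal (ruled out by Lemma~\ref{lemma:homogeneity}, applied after establishing it separately for grout) or pairs it with a convex grout (contradicting the first bullet, which forces a strict relation in that case). So in the ``only if'' direction the elaboration rules force $\cft = \groot_L^{\sort}$ right-concave and $\o{\cfs} = \some{\zerosort{\sort}}$; the ``if'' direction is immediate. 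Again the symmetric statement for $\groot_L$ on the left follows identically.

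The main obstacle I expect is the cross-reference to Lemma~\ref{lemma:homogeneity} in the concave case: one must be careful that homogeneity is not itself being proved by appeal to grout precedence, lest the argument be circular. I would therefore prove the grout-only sub-case of the lemma directly from the injection rules (confirming that a concave grout terminal can only be $\doteq$-related to another grout terminal, and then only of the matching sort $\sort$), and invoke homogeneity only to rule out interactions between grout terminals and non-grout tiles or root markers. Once that bookkeeping is done, the rest of the argument is a mechanical reading of the small table of grout productions.
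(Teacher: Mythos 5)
Your proposal is correct and matches the paper's approach: the paper gives no written proof for this lemma, stating only that it (together with the neighboring lemmas) ``follows by inspection of the elaboration and injection rules,'' and your argument is precisely that inspection carried out explicitly --- enumerating the few grout-bearing productions introduced in \autoref{fig:add-grout} and reading off the admissible relations and slots. Your precaution about proving the grout-only subcase directly rather than leaning on Lemma~\ref{lemma:homogeneity} (to avoid circularity) is sensible bookkeeping that the paper leaves implicit.
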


\begin{lemma}[Start Matches End] \label{lemma:droot-eq}
  $\dlroot \doteq_{\o{\cfs}} \cft$ if and only if $\o{\cfs} = \some{\botsort{\startSym}}$ and $\cft = \drroot$.
\end{lemma}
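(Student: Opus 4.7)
The plan is to prove both directions by inspection of the elaboration rules that give rise to the equal-precedence relation $\doteq$, as suggested by the introductory sentence of the appendix. Recall that $\cft_L \doteq_{\o{\cfs}} \cft_R$ holds precisely when there is a production rule of the elaborated grammar $\cfg$ whose righthand side places $\cft_L$ and $\cft_R$ as adjacent terminals, possibly separated by the optional nonterminal $\o{\cfs}$. So the proof reduces to enumerating the production rules in which the distinguished boundary marker $\dlroot$ can occur.

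First, I would establish that $\dlroot$ appears in $\cfg$ only in the designated root production introduced during elaboration. By construction (see the PBG-to-CFG compilation of \autoref{sec:elab-prec}), $\dlroot$ and $\drroot$ are fresh symbols not present in any user-authored production; they are inserted solely to bracket the start nonterminal at the top level, giving a unique production of the form $\scroot \mapsto \dlroot \, \startSym \, \drroot$, where the interior occurrence of the start nonterminal $\startSym$ is annotated with the weakest possible precedence bounds on both sides, namely $\bot$.

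For the forward direction, I would assume $\dlroot \doteq_{\o{\cfs}} \cft$ and observe that the witnessing production must be the root production above, since no other rule mentions $\dlroot$. Reading off the adjacency in this rule then forces $\cft = \drroot$ and $\o{\cfs} = \some{\botsort{\startSym}}$, as the intervening nonterminal between $\dlroot$ and $\drroot$ is precisely the bottom-bounded start symbol. The converse direction is immediate: the root production witnesses $\dlroot \doteq_{\some{\botsort{\startSym}}} \drroot$ by construction.

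The main obstacle, such as it is, lies in carefully tracking the precise form of the bounded sort annotation through the elaboration. In particular, I need to confirm that the PBG-to-CFG compilation assigns exactly $\botsort{\startSym}$ (bottom on both sides) to the start nonterminal in the root production, reflecting that no external operator can constrain the precedence of a top-level form. Once that bookkeeping is verified, the lemma follows directly from the fact that $\dlroot$ is a uniquely introduced boundary marker appearing in exactly one production rule.
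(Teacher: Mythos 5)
Your proposal is correct and matches the paper's approach: the paper offers no explicit proof, stating only that this lemma (like its neighbors) ``follows by inspection of the elaboration and injection rules,'' and your argument is precisely that inspection, correctly identifying that $\dlroot$ occurs only in the unique root production $\dlroot\,\botsort{\startSym}\,\drroot$ (a reading confirmed by the paper's later use of $\dlroot \doteq_{\some{\botsort{\startSym}}} \drroot$ in the push-totality proof). The only small addition worth making explicit is that, since $\doteq$ is defined over the grout-injected grammar $\cfgi$ rather than $\cfg$ alone, you should also note that the grout-injection rules introduce no further productions mentioning $\dlroot$.
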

\begin{lemma}[No Escaping] \label{lemma:no-escaping}
  There exists no $\cft$ such that $\cft \lessdot \drroot$ or $\dlroot \gtrdot \cft$.
\end{lemma}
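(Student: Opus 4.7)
The plan is to proceed by inspection of the elaboration and injection rules, as signaled by the shared preamble with Lemmas~\ref{lemma:homogeneity}--\ref{lemma:droot-eq}. The key observation is that $\dlroot$ and $\drroot$ are introduced only by the single distinguished top-level production that frames the start symbol $\startSym$, and this is also the only production that induces the $\doteq$ relation between them (giving Lemma~\ref{lemma:droot-eq}). No other production rule in $\cfg$ (or its grout-injected form $\cfgi$) mentions $\dlroot$ or $\drroot$ at all. So the proof strategy is: enumerate the rule schemes that can derive a $\lessdot$ or $\gtrdot$ step, and show that each such rule requires the symbol on its ``outer'' side to appear in a grammar position that is structurally unavailable to the roots.

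Concretely, I would argue as follows. A relation $\cft \lessdot \cft'$ is derived exactly when $\cft$ appears immediately left of a nonterminal $N$ in some production and $\cft'$ is a leftmost terminal reachable from $N$. For $\cft \lessdot \drroot$ to hold, $\drroot$ would therefore have to be a leftmost terminal of some nonterminal occurring on the right-hand side of a non-root production. But $\drroot$ only occurs as the final terminal of the single top-level production, and is itself the root; it is never reachable as the leftmost descendant of any proper subform. Symmetrically, $\dlroot \gtrdot \cft$ requires $\dlroot$ to be a rightmost terminal reachable from a nonterminal appearing just right of some $\cft$, and no such occurrence exists. Both cases are therefore vacuous.

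The main obstacle I anticipate is being precise about \emph{all} the ways these relations can be derived, given that the excerpt distinguishes indexed variants $\cltop{\o{\cfs}}$, $\cgtop{\o{\cfs}}$ (tied to the optional missing subterm) and that the grout-injection of \autoref{sec:op-parsing-error-handling} adds further productions through $\cfgi$. I would handle this by checking that grout injection only inserts forms \emph{inside} existing sort-indexed nonterminals $\bddsort{\precp}{\sort}{\precq}$, never modifying the single top-level production that introduces $\dlroot$ and $\drroot$. Once that closure property is established, the case analysis above completes the argument for both clauses of the lemma.
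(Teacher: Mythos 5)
Your proposal is correct and matches the paper's treatment: the paper dispatches this lemma (along with Lemmas~\ref{lemma:homogeneity}--\ref{lemma:no-trespassing}) with the single remark that it ``follows by inspection of the elaboration and injection rules,'' and your argument is exactly that inspection spelled out --- $\dlroot$ and $\drroot$ occur only in the distinguished root production, so no nonterminal $\bddsort{\precp}{\sort}{\precq}$ can derive a string whose leftmost terminal is $\drroot$ (blocking $\cft \lessdot \drroot$) or whose rightmost terminal is $\dlroot$ (blocking $\dlroot \gtrdot \cft$). Your additional check that grout injection only adds productions for the existing sort-indexed nonterminals, leaving the root production untouched, is the right closure property to make the inspection airtight.
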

\begin{lemma}[No Trespassing] \label{lemma:no-trespassing}
  There exists no $\cft$ and $\cmpr$ such that $\cft\cmpr\dlroot$ or $\drroot\cmpr\cft$.
\end{lemma}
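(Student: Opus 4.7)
The plan is to prove the claim by inspection of the rules that define the three precedence relations $\lessdot_{\o{\cfs}}$, $\doteq_{\o{\cfs}}$, and $\gtrdot_{\o{\cfs}}$, much as the preceding four lemmas have been established. The key observation is that the root markers $\dlroot$ and $\drroot$ play a highly restricted role in the elaborated/injected grammar: $\dlroot$ is introduced only as the left boundary of the root production and $\drroot$ only as the right boundary. Hence neither marker appears in the interior of any right-hand side, which is precisely where precedence relations are generated.

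First, I would inspect each rule deriving $\cft_L \cmpr_{\o{\cfs}} \cft_R$ and trace what positional role $\cft_L$ and $\cft_R$ play in the underlying production of $\cfg$ (or $\cfgi$). In every such rule, $\cft_L$ appears in a non-final position of some production's right-hand side (so that it may have a right neighbor), and $\cft_R$ appears in a non-initial position (so that it may have a left neighbor). From \autoref{lemma:droot-eq} and the elaboration rules for the start nonterminal, $\dlroot$ only occurs as the leftmost symbol of the single root production, and $\drroot$ only occurs as the rightmost symbol of that same production. So neither can take the interior positions demanded by any rule generating a $\cmpr$-relation.

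Concretely, then, I would argue by contradiction: suppose $\cft \cmpr \dlroot$ held for some $\cft$ and $\cmpr$. By the generating rule, $\dlroot$ must appear somewhere with a symbol to its left in a production right-hand side of $\cfgi$; but by inspection $\dlroot$ occurs only as the leftmost symbol of the root production, a contradiction. The symmetric argument handles $\drroot \cmpr \cft$. Because the three relations $\lessdot$, $\doteq$, and $\gtrdot$ are all derived in this uniform way from positions inside right-hand sides, the same argument covers all of $\cmpr$ uniformly.

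The main obstacle I anticipate is simply being exhaustive across the elaboration and grout-injection rules: we must verify that no injected grout production accidentally re-introduces $\dlroot$ or $\drroot$ in an interior position, and that the case analysis on the bounded-sort nonterminals $\bddsort{\precp}{\sort}{\precq}$ does not produce a rule in which a root marker is flanked on the wrong side. Once that bookkeeping is in place, the contradictions fall out immediately, and the result strengthens \autoref{lemma:no-escaping} from ``no specific comparison'' to ``no comparison at all.''
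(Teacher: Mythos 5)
Your proposal is correct and matches the paper's approach: the paper offers no explicit argument for this lemma, stating only that it ``follows by inspection of the elaboration and injection rules,'' and your inspection argument --- that $\dlroot$ occurs only leftmost and $\drroot$ only rightmost in the unique root production, so neither can occupy the interior position any rule generating a $\cmpr$-relation requires --- is exactly the intended justification, spelled out.
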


\begin{lemma}[Root-Grout Delimits Tiles] \label{lemma:root-grout-prec}
  If $\groot$ is right-concave,
  then for all tiles $\labl$,
  there exist slots $\iseq{\o{\cfs}_i}{0\leq i \leq k}$
  and tiles $\iseq{\labl_i}{0\leq i \leq k}$
  such that
  $\groot^{\startSym} \lessdot_{\o{\cfs}_0} \labl_0
  \iseq{\doteq_{\o{\cfs}_i}\labl_i}{1\leq i \leq k} = \labl$.
\end{lemma}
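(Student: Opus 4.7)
The plan is to proceed by induction on the structure of the derivation that reaches $\labl$ from $\startSym$ in the elaborated CFG $\cfg$, selecting at each step an enclosing form whose delimiters we walk across with $\doteq$ steps after entering via the initial $\lessdot$ step.

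For the initial step, I would appeal to the grout injection rules, which endow right-concave grout with left-positioning relations of the form $\groot^{\startSym} \lessdot_{\o{\cfs}_0} \labl_0$ for any $\labl_0$ that may begin a form of sort $\startSym$. Homogeneity pins down the sort structure of this step, and Grout Precedence fixes any remaining degree of freedom in the slot annotation $\o{\cfs}_0$. In the base case, when $\labl$ itself is left-convex and admissible as the head of a form derivable from $\startSym$, we take $\labl_0 = \labl$ and $k = 0$, obtaining the chain immediately.

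For the inductive case, when $\labl$ is not left-convex (or appears only as a non-initial delimiter of some production), I would select a mixfix form in $\cfg$ in whose delimiter sequence $\labl$ appears, with some left-convex $\labl_0$ as its head, and walk the $\doteq$-chain across the form's consecutive delimiters up to $\labl$. Each $\doteq_{\o{\cfs}_i}$ step captures a pair of consecutive delimiters in a single production rule, with the intervening slot sort (if any) recorded in $\o{\cfs}_i$. By Homogeneity, these steps respect the sort structure along the walk.

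The main obstacle will be establishing that every tile in $\gram$ is surfaceable as a delimiter of some form reachable from $\startSym$ by a single $\lessdot$ step followed by $\doteq$ steps, since only $\doteq$ steps are available after the initial descent. I expect this to follow from the expressiveness of the PBG-to-CFG compilation: the precedence-bounded sorts $\bddsort{\precp}{\sort}{\precq}$ produced by the compilation ensure that every tile's production is reachable from $\startSym$, and a careful induction on the derivation tree for $\labl$ should exhibit an enclosing production whose delimiters bring $\labl$ to the surface, with the slot sorts $\o{\cfs}_i$ absorbing the nested structure that the chain would otherwise have to traverse.
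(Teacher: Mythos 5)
Your overall shape is right---locate a form whose delimiter sequence contains $\labl$, take its head delimiter $\labl_0$, and chain $\doteq$ steps across consecutive delimiters of that form---and this matches the paper's construction, which uses the assumption that each tile occurs in a unique production together with the operator-form assumption to decompose that production as $\o{\sort}_0\labl_0\o{\sort}_1\labl_1\cdots$ and then reads off \emph{all} of the $\doteq_{\promotesym{\o{\sort}_i}}$ steps at once from a single rule via \texttt{Prec-EQ}; no induction on the derivation is needed. However, the step you flag as ``the main obstacle'' is precisely the crux of the lemma, and your proposal leaves it unresolved: you hope that reachability ``should follow from the expressiveness of the compilation,'' and your stated initial step only covers tiles $\labl_0$ that ``may begin a form of sort $\startSym$,'' whereas the lemma quantifies over all tiles, including those occurring only in forms of other sorts and at arbitrary precedence levels. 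Your suggestion that ``the slot sorts $\o{\cfs}_i$ absorb the nested structure'' is also a misunderstanding: those slots are just the promoted operand positions between consecutive delimiters of one production; they cannot absorb the nesting from $\startSym$ down to $\labl$'s form.

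The actual resolution is that the relation $\groot^{\startSym}\lessdot_{\o{\cfs}_0}\labl_0$ is itself ``deep'': it is witnessed by a nonterminal $\cfs$ with $\groot^{\startSym}\adjop\cfs$ and $\cfs\produce^*\o{\cfs}_0\labl_0\dots$, so a \emph{single} $\lessdot$ step already permits an arbitrarily long descent through sorts and precedence levels. Concretely, the paper (i) obtains $\groot^{\startSym}\adjop\zerosort{\startSym}$ from the grout-injection rules, and (ii) uses the elaboration rules to construct a production of $\zerosort{\sort}$ whose right-hand side is exactly $\dots\labl_0\promotesym{\o{\sort}_1}\labl_1\dots$, which is always possible because the zero bounds satisfy $\precz\prec_\sort\precp$ and $\precp\succ_\sort\precz$ unconditionally. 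Your appeal to the Grout Precedence lemma for the first step is also off target: its clauses constrain relations into convex grout and grout-to-grout $\doteq$ steps, and say nothing about $\groot^{\startSym}\lessdot_{\o{\cfs}_0}\labl_0$; the fact you need is the adjacency in (i). Without (i) and (ii) your argument does not close.
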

\begin{proof}
By \Assumptionref{assum:unique-tiles},
there exists sort $\sort$ and precedence $\precp$
such that ${\dots\labl\dots} \in
\rgxsem{\gram(\sort,\precp)}$.
By \Assumptionref{assum:operator-form},
there exist optional sorts $\iseq{\o{\sort}_i}{0\leq i \leq n+1}$
and tiles $\iseq{\labl_i}{0\leq i \leq n}$ such that $\labl_k = \labl$ for some
$k\leq n$ and
\begin{align}
  \o{\sort}_0 \iseq{\labl_i\o{\sort}_{i+1}}{0\leq i \leq n} \in \rgxsem{\gram(\sort,\precp)} \label{eq:root-grout-prec-derive}
\end{align}
Depending on whether $\o{\sort}_0 = \sort$ and $\o{\sort}_{n+1} = \sort$,
apply one of the elaboration rules in \autoref{fig:compile-precedence} to \eqref{eq:root-grout-prec-derive} to
construct a production rule for $\zerosort{\sort}$, knowing that $\precz
\prec_\sort \precp$ and $\precp \succ_\sort \precz$ as needed.
Across all cases, we can show
\begin{align}
\zerosort{\sort} \produce {\dots\labl_0\iseq{\promotesym{\o{\sort}_i}\labl_i}{1\leq i \leq n}\dots} \label{eq:prec-eq}
\end{align}
Define $\iseq{\o{\cfs}_i = \promotesym{\o{\sort}_i}}{0\leq i \leq n}$.
 \texttt{Prec-EQ} applied to \eqref{eq:prec-eq} gives us
\begin{align}
\iseq{\labl_i \doteq_{\o{\cfs}_i} \labl_{i+1}}{0\leq i < n}
\end{align}

It remains to show $\groot^{\startSym} \lessdot_{\o{\cfs}_0} \labl_0$.
It can be shown using one of the injection rules in \autoref{fig:add-grout} that
$\groot^{\startSym} \adjop \zerosort{\startSym}$.
\end{proof}

\begin{lemma}[Reachability] \label{lemma:root-connected}
  For every tile $\labl$,
  there exist slots $\iseq{\o{\cfs}_i}{1\leq i \leq k}$
  and tiles $\iseq{\labl_i}{1\leq i \leq k}$
  such that
  $\dlroot \iseq{\cmprleq_{\o{\cfs}_i}\labl_i}{1\leq i \leq k} = \labl$.
\end{lemma}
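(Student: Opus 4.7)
The plan is to reduce this lemma to Lemma~\ref{lemma:root-grout-prec} (\textbf{Root-Grout Delimits Tiles}) by prepending a single connecting step from $\dlroot$ to a suitable right-concave grout of sort $\startSym$. Concretely, since $\cmprleq$ subsumes both $\lessdot$ and $\doteq$, once we exhibit any chain of $\lessdot$ and $\doteq$ steps from $\dlroot$ to $\labl$, we are done.

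First, I would fix an arbitrary tile $\labl$ and apply Lemma~\ref{lemma:root-grout-prec} to obtain slots $\iseq{\o{\cfs}_i}{0 \leq i \leq k}$ and tiles $\iseq{\labl_i}{0 \leq i \leq k}$ with $\labl_k = \labl$ such that
\[
  \groot^{\startSym} \lessdot_{\o{\cfs}_0} \labl_0 \iseq{\doteq_{\o{\cfs}_i} \labl_i}{1 \leq i \leq k},
\]
for some right-concave grout $\groot$. Every step in this chain is already a $\cmprleq$ step, so it suffices to supply an initial step $\dlroot \cmprleq_{\o{\cfs}} \groot^{\startSym}$ for some slot $\o{\cfs}$.

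Second, I would inspect the grout injection rules in \autoref{fig:add-grout} (together with the elaboration for the start symbol) to extract such a step. Intuitively, the injection step guarantees that the outermost sort $\startSym$ is wrapped or prefixed by a grout form that is adjacent to $\dlroot$, mirroring the $\groot^{\startSym} \adjop \zerosort{\startSym}$ fact used at the end of the proof of Lemma~\ref{lemma:root-grout-prec}. From this adjacency, together with the compilation of $\bounds$-bounds for the start sort, one derives either $\dlroot \lessdot_{\o{\cfs}} \groot^{\startSym}$ or $\dlroot \doteq_{\o{\cfs}} \groot^{\startSym}$; in either case this is a $\cmprleq$ step. Prepending it to the chain above yields the required walk $\dlroot \cmprleq_{\o{\cfs}} \groot^{\startSym} \cmprleq_{\o{\cfs}_0} \labl_0 \iseq{\cmprleq_{\o{\cfs}_i} \labl_i}{1 \leq i \leq k} = \labl$, completing the proof after reindexing.

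The main obstacle will be identifying the precise grout injection rule that justifies the initial step and confirming it attaches a \emph{right-concave} grout of sort exactly $\startSym$, so that Lemma~\ref{lemma:root-grout-prec} applies without modification. If no single rule gives this directly, a fallback is to thread through an intermediate $\doteq$-related grout pair (by Lemma~\ref{lemma:grout-prec}, second bullet) before reaching $\groot^{\startSym}$, which would lengthen the chain but preserve the $\cmprleq$ property. Either way, the argument is essentially bookkeeping on top of the already-established structural facts and should not require new inductive machinery.
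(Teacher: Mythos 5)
Your proposal is correct and takes essentially the same route as the paper: the paper's proof also reduces to Lemma~\ref{lemma:root-grout-prec} and supplies the single initial step $\dlroot \lessdot_{\none} \prehole^{\startSym}$, justified by $\dlroot \adjop \botsort{\startSym} \lmsymop \prehole^{\startSym}$, which is exactly the connecting step you leave as bookkeeping (with the grout instantiated concretely as the prefix grout $\prehole^{\startSym}$, which is indeed right-concave and of sort $\startSym$).
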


\begin{proof}
Follows from $\dlroot \lessdot_{\none} \prehole^{\startSym}$
(since $\dlroot \adjop \botsort{\startSym} \lmsymop \prehole^{\startSym}$)
and \Lemmaref{lemma:root-grout-prec}.
\end{proof}

\subsection{Precedence Bounds}

Lemmas \ref{lemma:dispute-inherit}-\ref{lemma:left-bounded-produce}
follow by inspection of the elaboration (\autoref{fig:compile-precedence})
and injection (\autoref{fig:add-grout}) rules.

\begin{lemma} \label{lemma:dispute-inherit}
  If $\bddsort{\precp}{\sort}{\medsquare} \produce
  \bddsort{\precq}{\sort}{\precm}\labl\dots$
  then $\precp \pleq_\sort \precq \plt_\sort \precm$.
\end{lemma}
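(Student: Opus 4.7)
The plan is to prove the lemma by inverting the single derivation step $\bddsort{\precp}{\sort}{\medsquare} \produce \bddsort{\precq}{\sort}{\precm}\labl\dots$ according to the compilation rules from \autoref{fig:compile-precedence} that generate productions for $\bddsort{\precp}{\sort}{\medsquare}$. Since the right-hand side begins with a bounded subsort of the same sort $\sort$, only those compilation cases that admit a left child of sort $\sort$ (i.e., infix and postfix forms, or the left-side of sort-preserving mixfix forms) can have fired; prefix/bidelim cases would not produce a leading $\bddsort{\precq}{\sort}{\precm}$ directly before $\labl$, which rules them out immediately and narrows the analysis.

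First I would extract $\precq \plt_\sort \precm$. The leading bounded subsort arises from the left-recursive slot introduced by the compilation of a form of sort $\sort$ at precedence level $\precm$, whose leading tile is $\labl$. By construction of the PBG-to-CFG compilation, that slot is emitted with right bound exactly $\precm$ and with a left bound strictly dominated by $\precm$ in the $\pleq_\sort$ order on precedences, yielding $\precq \plt_\sort \precm$. I expect this step to be mostly mechanical once the matching rule is identified.

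Second I would extract $\precp \pleq_\sort \precq$. The rule that fires is gated by a side condition ensuring the derived form's precedence $\precm$ is admissible under the outer left bound $\precp$ (so $\precp \pleq_\sort \precm$), and the inner left bound $\precq$ of the left-recursive slot is chosen to \emph{inherit} $\precp$ rather than reset it -- this is the content of the lemma's name, ``Dispute Inherit'', and mirrors how \autoref{lemma:bidelim-reset} treats the complementary reset case for bidelimited forms. Depending on whether the compilation rule for this case sets $\precq := \precp$ exactly or sets $\precq$ to a value sandwiched between $\precp$ and $\precm$, the inequality $\precp \pleq_\sort \precq$ follows immediately.

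The main obstacle will be making the case analysis exhaustive and unambiguous: I need to enumerate every compilation rule in \autoref{fig:compile-precedence} that can produce a right-hand side of the required shape, confirm that each one both respects the outer left bound and induces a strict drop in the right bound of the left child, and verify that no edge case (for instance, sort-change forms of sort $\sort$ with a same-sort left child) violates either conclusion. Once the rules are tabulated, both inequalities should drop out from the explicit choice of bounds made by the compilation.
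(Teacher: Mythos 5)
The paper never spells out a proof of this lemma---it sits in the block of precedence-bound facts asserted to follow by inspection of the elaboration rules---and your plan of inverting the production down to the \rulename{PElab-Infix}/\rulename{PElab-Postfix} cases (the only ones whose right-hand side opens with a same-sort bounded nonterminal) and reading the two inequalities off the bounds those rules emit is exactly that intended inspection; the bound bookkeeping you anticipate (the head's left bound computed as a minimum involving the left child's left bound, plus a strict side condition separating the left child's two bounds) is consistent with how these rules are manipulated in the proofs of \autoref{theorem:prec-cohere} and \Lemmaref{lemma:valid-prefix}. The one step you elide is dispatching the other ways a $\produce$ step can arise---\rulename{Produce-Bound} and the grout-injection rules---none of which can yield a right-hand side whose second symbol is a tile $\labl$, so the inversion does bottom out in the compilation rules as you assume; it is worth saying this explicitly.
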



\begin{lemma} \label{lemma:unbounded-sort-transition}
  If $\psq \lmsymop^* \mrn$
  and $\sort \neq \sortr$
  then $\psq \lmsymop^* \botsort{\sortr}$.
\end{lemma}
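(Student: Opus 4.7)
The plan is induction on the length $k$ of the $\lmsymop^*$ chain from $\psq$ to $\mrn$. The base case $k = 0$ is vacuous: it would force $\psq = \mrn$ and in particular $\sort = \sortr$, contradicting the hypothesis. So it suffices to handle $k \geq 1$, where we decompose the derivation as a single step $\psq \lmsymop \bddsort{\precp'}{\sort'}{\precq'}$ followed by a shorter chain $\bddsort{\precp'}{\sort'}{\precq'} \lmsymop^{k-1} \mrn$.

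I would then split on whether the first step already changes the sort. If $\sort' = \sort$, the sort mismatch $\sort \neq \sortr$ survives to the subchain, so the induction hypothesis applies to the length-$(k{-}1)$ derivation and yields $\bddsort{\precp'}{\sort}{\precq'} \lmsymop^* \botsort{\sortr}$; prepending the first step gives the desired $\psq \lmsymop^* \botsort{\sortr}$. If instead $\sort' \neq \sort$, I would appeal to an auxiliary fact, analogous to Lemma~\ref{lemma:bidelim-reset}, stating that a sort-changing $\lmsymop$-step must traverse a bidelimited form and therefore forces $\precp' = \bot = \precq'$, i.e.\ $\bddsort{\precp'}{\sort'}{\precq'} = \botsort{\sort'}$. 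If additionally $\sort' = \sortr$, we are immediately done with $\psq \lmsymop \botsort{\sortr}$; otherwise $\sort' \neq \sortr$ and the induction hypothesis applied to $\botsort{\sort'} \lmsymop^{k-1} \mrn$ delivers $\botsort{\sort'} \lmsymop^* \botsort{\sortr}$, which again composes with the first step.

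The main obstacle is the auxiliary claim invoked in the sort-changing case: that any single $\lmsymop$-step across sorts resets both precedence bounds to $\bot$. This is the content analogous to Lemma~\ref{lemma:bidelim-reset} but for leftmost-symbol descent rather than equal-level adjacency. I expect to prove it by inspecting the rules that generate $\lmsymop$ in concert with the PBG-to-CFG compilation in \autoref{fig:compile-precedence}: a cross-sort leftmost descent can only be introduced at the boundary where a bidelimited operand nonterminal is entered, and such nonterminals are precisely the $\botsort{\cdot}$ forms (no production for $\bddsort{\precp'}{\sort'}{\precq'}$ with $\sort' \neq \sort$ appears in the right-hand side of a $\sort$-production except through a bidelimited slot). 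Once that lemma is established, the induction above goes through cleanly, since every case either preserves the sort (absorbing the step into the IH) or resets bounds and shrinks the chain by one.
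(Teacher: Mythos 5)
The paper never writes out a proof of this lemma---it is one of the precedence-bound lemmas stated without argument---so there is no official proof to compare against and I am judging your proposal on its own terms. Your induction skeleton (peel off the first $\lmsymop$ step, case on whether it changes sort) is sound, and the base case and the sort-preserving case go through. The weight of the proof rests entirely on your auxiliary claim that a single cross-sort $\lmsymop$ step must land on a nonterminal with both bounds reset to $\bot$, and this is where there is a gap: you justify the claim by inspecting only the PBG-to-CFG compilation of \autoref{fig:compile-precedence}, but $\lmsymop$ is a relation over the grout-injected grammar $\cfgi$ of \autoref{fig:add-grout}, whose injection rules also contribute productions with a nonterminal in leftmost position. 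In particular, the postfix-grout productions (used in the proof of Lemma~\ref{lemma:push-total} in the form $\cfs_0 \reduce \cfs\,\poshole^{\sort_0}$, with premise $\zerosort{\sort_0} \lmsymop^* \cfs$) place an \emph{arbitrary} reachable nonterminal $\cfs$ at the left edge; nothing forces $\cfs$ to share the sort of $\cfs_0$ or to carry $\bot$ bounds, so these give single cross-sort $\lmsymop$ steps that your auxiliary claim does not cover. The existence of exactly this kind of step is also why Lemma~\ref{lemma:left-bounded-produce} needs its second disjunct.

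The gap is repairable, but not by the bidelim-reset analogy: when the offending step is an injection production $\psq \lmsymop \cfs$ with $\cfs$ of a different sort $\sortr'$, the rule's own reachability premise $\zerosort{\sort} \lmsymop^* \cfs$ together with the induction hypothesis yields $\zerosort{\sort} \lmsymop^* \botsort{\sortr'}$, so the same injection rule can be re-instantiated at $\botsort{\sortr'}$ to give $\psq \lmsymop \botsort{\sortr'}$ directly. Note that this recursion descends into the \emph{premise} of a rule rather than into a shorter suffix of the original chain, so plain induction on chain length no longer suffices; you would need to induct on the derivation of the $\lmsymop^*$ judgment or on a combined measure. Separately, your appeal to an analogue of Lemma~\ref{lemma:bidelim-reset} is slightly misdirected even for the elaborated productions: that lemma governs internal $\doteq$-slots sitting \emph{between} two tokens, which are never the leftmost symbol of a production. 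The fact you actually need is that \texttt{PElab-Prefix} and \texttt{PElab-Operand} wrap a leading operand of a different sort $\sortr$ as the promoted nonterminal $\botsort{\sortr}$, which is visible in the case analysis of the proof of Lemma~\ref{lemma:valid-prefix}; with that substitution, your elaboration case is correct.
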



\begin{lemma} \label{lemma:adj-bot}
  If $\labl \adjop \psq$
  then $\labl \adjop \bddsort{\precp}{\sort}{\bot}$.
\end{lemma}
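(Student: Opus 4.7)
The plan is to proceed by case analysis on the derivation of $\labl \adjop \psq$, following the structure of the PBG-to-CFG elaboration rules (Figure~\ref{fig:compile-precedence}) and the grout injection rules (Figure~\ref{fig:add-grout}), since these are precisely the rules that generate the $\adjop$ relation. For each rule that places $\labl$ immediately to the left of a nonterminal of the form $\psq$, I would identify the corresponding rule that yields the same adjacency with $\bddsort{\precp}{\sort}{\bot}$ as the right-hand nonterminal. The expected shape of the argument is that the elaborated productions are uniformly parametric in the right bound of any sort-nonterminal appearing at a non-terminal position of the production, so the same rule schema that derives $\labl \adjop \psq$ also derives $\labl \adjop \bddsort{\precp}{\sort}{\bot}$ by instantiating $\precq := \bot$.

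The key observation supporting this plan is that, per the high-level gloss given in \autoref{sec:related-parsing}, $\bddsort{\precp}{\sort}{\precq}$ denotes a sort-$\sort$ form whose root operator wins the precedence comparison against $\precp$ on its left and $\precq$ on its right. The left bound $\precp$ is what $\labl$ itself imposes on its right neighbor (via its role as prefix, infix, or mixfix-interior delimiter), and so is fixed by the production containing $\labl$. The right bound $\precq$, on the other hand, is determined by what follows the nonterminal---not by $\labl$. Setting it to $\bot$ yields the most permissive choice, which corresponds to the case where nothing further constrains the right side; this choice is always available among the elaborated productions.

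The main obstacle will be verifying this uniformity across all cases of the elaboration, particularly for mixfix and associative forms where the right bound may play a subtler role in constraining nested invocations. I would carefully check the productions generated for (i) prefix tiles, where $\labl$ is the leading delimiter and the adjacent nonterminal is a right child whose right bound comes from outside the production, (ii) infix and postfix tiles, where the adjacent nonterminal is an operand whose right bound is again external, and (iii) bidelimited and mixfix-interior cases, appealing to \nameref{lemma:bidelim-reset} to conclude that in the bidelimited case the right bound must already be $\bot$, making the conclusion immediate. Grout-adjacency cases from Figure~\ref{fig:add-grout} should behave analogously, since grout forms are injected without introducing rightward precedence constraints on their sort-nonterminal neighbors. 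Once each case is dispatched, the lemma follows.
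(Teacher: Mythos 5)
The paper gives no explicit proof of this lemma---it sits among the precedence-bound lemmas left to routine inspection of the elaboration and injection rules---and your case analysis over exactly those rules, resting on the observation that the right bound of a slot following a tile is inherited from the production head's right bound (which can always be re-instantiated to $\bot$, since $\bot$ trivially satisfies the relevant precedence side conditions) while the left bound $\precp$ is fixed by the tile's own precedence and so is preserved, is precisely the intended argument. Your appeal to Lemma~\ref{lemma:bidelim-reset} to make the mixfix-interior case immediate (the bounds there are already $\bot$) likewise matches how the paper uses that lemma, so the sketch is correct and takes essentially the same approach.
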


\begin{lemma} \label{lemma:left-bounded-produce}
  If $\cfs \lmsymop^* \cfsr$ and $\cfs \consistent \sort$
  then either $\cfsr \consistent \sort$
  or $\cfs \produce \cfsr\,\poshole^\sort$.
\end{lemma}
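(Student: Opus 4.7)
The plan is to proceed by induction on the number of steps in the walk $\cfs \lmsymop^* \cfsr$. In the base case where the walk is empty, we have $\cfsr = \cfs$, so the hypothesis $\cfs \consistent \sort$ gives the first disjunct $\cfsr \consistent \sort$ directly.

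For the inductive step, decompose the walk as $\cfs \lmsymop \cfs' \lmsymop^* \cfsr$ and case split on whether the first step preserves the sort, i.e., on whether $\cfs' \consistent \sort$. In the sort-preserving subcase, the inductive hypothesis applied to the tail $\cfs' \lmsymop^* \cfsr$ with $\cfs' \consistent \sort$ yields either $\cfsr \consistent \sort$ (and we are done) or $\cfs' \produce \cfsr\,\poshole^\sort$. In the latter, we must prepend the first step: by inspection of the grout-injection rules in \autoref{fig:add-grout}, a within-sort $\lmsymop$-step lifts to a same-sort derivation $\cfs \produce \cfs'$, and chaining this with $\cfs' \produce \cfsr\,\poshole^\sort$ via transitivity of $\produce$ delivers $\cfs \produce \cfsr\,\poshole^\sort$ as required.

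In the sort-changing subcase $\cfs' \not\consistent \sort$, the crossing step must have been introduced by the injection rule that pairs a sort transition with a trailing postfix grout of the starting sort, so we immediately get $\cfs \produce \cfs'\,\poshole^\sort$. By \nameref{lemma:unbounded-sort-transition}, the remaining walk $\cfs' \lmsymop^* \cfsr$ stays within sorts disjoint from $\sort$, and these steps lift (uniformly, by the same within-sort analysis now applied at the new sort) to a derivation $\cfs' \produce \cfsr$ that leaves the already-emitted trailing $\poshole^\sort$ undisturbed. Concatenating the two derivations yields $\cfs \produce \cfsr\,\poshole^\sort$.

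The main obstacle will be the bridging observation on which both subcases depend: precisely characterizing when a single $\lmsymop$-step lifts to a $\produce$-derivation, and verifying from \autoref{fig:add-grout} that the unique cross-sort $\lmsymop$-step is exactly the one that emits a trailing postfix grout of the starting sort (with no other injection pattern competing to violate this correspondence). Once this lifting lemma is stated and discharged by rule inspection, the induction itself is routine, and the reliance on \nameref{lemma:unbounded-sort-transition} ensures we never have to worry about the walk re-entering sort $\sort$ after a transition---which is what would otherwise threaten to invalidate the placement of $\poshole^\sort$ at the right end of the derived string.
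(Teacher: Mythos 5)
The paper states this lemma without proof (it sits in the appendix among the ``Precedence Bounds'' lemmas that are left to inspection of the elaboration and injection rules), so there is no paper proof to match your approach against; but your proposal has a genuine gap. The relation $\produce$ is a \emph{single-step} production of the grout-injected grammar --- the paper is careful to distinguish $\produce$ from $\produce^+$ and $\produce^*$ throughout (e.g.\ in \autoref{lemma:derive-leftmost-yield} and \autoref{lemma:lt-bot-witness}) --- and the lemma's conclusion $\cfs \produce \cfsr\,\poshole^\sort$ asserts the existence of one such production. It is not transitive, so your inductive step, which chains $\cfs \produce \cfs'$ with $\cfs' \produce \cfsr\,\poshole^\sort$, can only deliver $\cfs \produce^{*} \cfsr\,\poshole^\sort$, a strictly weaker statement. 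The production in the conclusion exists as a single rule precisely because \texttt{GInj-Postfix} takes an entire leftmost-symbol walk ($\lmsymStar{\zerosort{\sort}}{\cfsr}$, roughly) as its premise and emits $\cfs \reduce \cfsr\,\poshole^\sort$ in one shot --- this is visible in the push-totality proof, where the same judgment is both derived and inverted via \texttt{Produce-Subsume} followed by \texttt{GInj-Postfix}. The intended argument is therefore a direct appeal to that injection rule (after relating the walk from $\psq$ to one from $\zerosort{\sort}$), not an induction that stitches individual steps together.

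A second problem compounds the first: your ``lifting'' of a within-sort step $\cfs \lmsymop \cfs'$ to a derivation $\cfs \produce \cfs'$ conflates \emph{yields a string whose leftmost symbol is $\cfs'$} with \emph{yields exactly $\cfs'$}. In general $\cfs \produce \cfs'\,\s{\cfx}$ for a nonempty tail $\s{\cfx}$, so even if you weakened the conclusion to $\produce^*$, composing with $\cfs' \produce \cfsr\,\poshole^\sort$ would leave the trailing symbols $\s{\cfx}$ to the right of $\poshole^\sort$, breaking the required shape in which the postfix grout is the final symbol. The same conflation undermines the cross-sort subcase, where you claim the crossing step ``immediately'' gives $\cfs \produce \cfs'\,\poshole^\sort$. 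To repair the proof you would need to abandon step-by-step composition and instead show that the hypothesis walk satisfies the premise of \texttt{GInj-Postfix}; your use of \autoref{lemma:unbounded-sort-transition} to control what happens after a sort transition is a reasonable ingredient for that, but it cannot rescue the compositional skeleton.
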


\begin{lemma}\label{lemma:derive-leftmost-yield}
  If $\cfsr \produce^* \o{\cfs}\cft\dots$ then there exists a nonterminal $\cfs$ such that
  \[ \lmsymStar{\cfsr}{\cfs \produce \o{\cfs}\cft\dots}  \]
\end{lemma}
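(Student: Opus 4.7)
The plan is to proceed by induction on the length $n \geq 1$ of the derivation $\cfsr \produce^n \o{\cfs}\cft\dots$. Any zero-step derivation would require $\cfsr$ itself to equal the sentential pattern $\o{\cfs}\cft\dots$, which is impossible since $\cfsr$ is a single nonterminal while the right-hand side contains both a slot and a tile, so $n \geq 1$ is without loss of generality.

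For the base case $n = 1$, the derivation is a single production step $\cfsr \produce \o{\cfs}\cft\dots$. I would take $\cfs = \cfsr$: the required chain $\cfsr \lmsymop^* \cfsr$ holds by reflexivity (zero applications of $\lmsymop$), and the one-step production is the given derivation itself.

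For the inductive step $n \geq 2$, I would decompose the derivation as $\cfsr \produce \alpha \produce^{n-1} \o{\cfs}\cft\dots$ and identify the leftmost nonterminal $\cfsr'$ occurring in $\alpha$. Because the final sentential form begins with $\o{\cfs}\cft$, the expansion of this leftmost nonterminal is precisely what produces that prefix; any material to the right of $\cfsr'$ in $\alpha$ only contributes to the suffix absorbed by the $\dots$. By the definition of $\lmsymop$ as tracking the leftmost-symbol expansion, the first step $\cfsr \produce \alpha$ witnesses $\cfsr \lmsymop \cfsr'$. I would then apply the induction hypothesis to the shorter derivation $\cfsr' \produce^{n-1} \o{\cfs}\cft\dots$, obtaining a nonterminal $\cfs$ with $\cfsr' \lmsymop^* \cfs$ and $\cfs \produce \o{\cfs}\cft\dots$. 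Prepending the single step $\cfsr \lmsymop \cfsr'$ to the chain yields $\cfsr \lmsymop^* \cfs$, completing the case.

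The hard part will be justifying that the sub-derivation rooted at the leftmost nonterminal $\cfsr'$ alone is responsible for producing the entire leading $\o{\cfs}\cft$ prefix, with no contribution from the rest of $\alpha$. This will require appealing to the specific shape of production right-hand sides in the elaborated grammar (alternating optional slots and tiles, per Assumption (\nameref{assum:operator-form})) together with the precise definition of $\lmsymop$. If $\lmsymop$ is defined directly as ``$\cfsr \lmsymop \cfsr'$ iff some production of $\cfsr$ places $\cfsr'$ as its leftmost symbol,'' the argument reduces to inspection; otherwise a short case split on the leading entries of $\alpha$ — in particular distinguishing whether $\alpha$ itself already begins with an optional slot or starts with a nonterminal that must derive $\o{\cfs}$ as its own leftmost element — will be needed to pin down $\cfsr'$ unambiguously.
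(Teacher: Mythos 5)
Your overall shape (induction on the derivation, an identical base case, and recursion through a leftmost child) matches the paper's, but the inductive step as you set it up has a genuine gap, and it is precisely the one you flag as ``the hard part.'' You propose to apply the induction hypothesis to a sub-derivation rooted at the leftmost nonterminal $\cfsr'$ occurring in $\alpha$. Two problems arise. First, if the production applied in the first step has $\o{\cfs}_0 = \none$, then $\alpha$ begins with a \emph{terminal}; the leftmost nonterminal occurring in $\alpha$ then sits to the right of that terminal and is not $\lmsymop$-related to $\cfsr$, so the step $\cfsr \lmsymop \cfsr'$ you rely on is simply false. In that case the correct conclusion is immediate ($\o{\cfs}=\none$, $\cft = \labl_0$, witness $\cfs = \cfsr$) with no recursion at all, and your case split as described does not isolate this situation. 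Second, and more fundamentally, even when $\alpha$ does begin with a nonterminal $\cfsr'$, the sub-derivation rooted at $\cfsr'$ yields only the portion of the final sentential form that $\cfsr'$ contributes, which need not contain the whole leading pattern $\o{\cfs}\cft$: for instance $\cfsr'$ may derive only the nonterminal that becomes $\o{\cfs}$, with $\cft$ supplied by the next symbol of $\alpha$. Your induction hypothesis, stated only for a single nonterminal deriving the full pattern $\o{\cfs}\cft\dots$, therefore cannot be applied to that sub-derivation, and no case split on the leading entries of $\alpha$ repairs this because the deficiency is in the strength of the hypothesis, not in identifying $\cfsr'$.

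The paper's proof avoids both issues by strengthening the statement before inducting: it proves that if a \emph{sentential form} $\cfsr\dots$ (a nonterminal followed by arbitrary trailing material) satisfies ${\cfsr\dots} \produce^+ \o{\cfs}\cft\dots$ under leftmost yields, then the conclusion holds. The induction hypothesis is then applied to the entire successor form $\s{\cfx}$ rather than to an extracted sub-derivation, which sidesteps the decomposition entirely; the case analysis is simply on whether $\s{\cfx}$ begins with a nonterminal (recurse, recording one $\lmsymop$ step) or a terminal (conclude immediately with $\o{\cfs}=\none$ and $\cfs = \cfsr$). If you rework your argument around this generalized statement, the rest of your plan goes through essentially as written.
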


\begin{proof}
We will show a slight generalization:
if $\cfsr \produce^+ \o{\cfs}\cft\dots$ then there exists a nonterminal $\cfs$ such that
$\lmsymStar{\cfsr}{\cfs \produce \o{\cfs}\cft\dots}$.
This is a proper generalization,
despite the use of $\produce^+$ instead of $\produce^*$,
because ${\cfsr}\neq{\o{\cfs}\cft\dots}$
and therefore $\cfsr \produce^* \o{\cfs}\cft\dots$
implies $\cfsr \produce^+ \o{\cfs}\cft\dots$,
which implies $\cfsr\dots \produce^+ \o{\cfs}\cft\dots$.
We will assume here that
every yield step $\s{\cfx}_0 \produce \s{\cfx}_1$ is a leftmost yield,
meaning it rewrites the leftmost nonterminal in $\s{\cfx}_0$.

Induct on the premise ${\cfsr\dots} \produce^+ \o{\cfs}\cft\dots$:
\begin{itemize}
\item Suppose ${\cfsr\dots} \produce \o{\cfs}\cft\dots$.
Our assumption of leftmost yields implies $\cfsr \produce \o{\cfs}\cft\dots$,
so returning $\cfs = \cfsr$ gives
$\lmsymStar{\cfsr}{{\o{\cfs}\cft\dots}}$ as desired.

\item Suppose ${\cfsr\dots} \produce \s{\cfx} \produce^+ \o{\cfs}\cft\dots$.

Further suppose $\s{\cfx} = \cfsr_1\dots$ for some nonterminal $\cfsr_1$.
Then our assumption of leftmost yields implies $\cfsr \lmsymop \cfsr_1$,
and our inductive hypothesis gives us $\cfs$ such that
$\lmsymStar{\cfsr_1}{\cfs \produce \o{\cfs}\cft\dots}$.
Putting it altogether, we have
$\lmsymStar{\cfsr}{\cfs \produce \o{\cfs}\cft\dots}$ as desired.

Otherwise, assume $\s{\cfx} = \cft_1\dots$ for some terminal $\cft_1$.
Given ${\cfsr\dots} \produce \cft_1\dots$,
our assumption of leftmost yields implies $\cfsr \produce \cft_1\dots$.
Given $\cft_1\dots \produce^+ \o{\cfs}\cft\dots$,
it must be that $\o{\cfs} = \none$ and $\cft_1 = \cft$.
It follows that setting $\cfs = \cfsr$ gives
$\lmsymStar{\cfsr}{\cfs \produce {\cft_1\dots = \o{\cfs}\cft\dots}}$
as desired.
\end{itemize}
\end{proof}

\begin{lemma} \label{lemma:lt-bot-witness}
  If $\cft_L \lessdot_{\o{\cfsr}} \cft_R$, then there exists a nonterminal $\cfs =
  \bddsort{\medsquare}{\sort}{\bot}$ such that $\cft_L \adjop \cfs$ and $\cfs \produce^* \o{\cfsr} \cft_R\dots$.
\end{lemma}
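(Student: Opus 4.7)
The plan is to unpack the definition of $\cft_L \lessdot_{\o{\cfsr}} \cft_R$ to obtain an initial witness adjacent to $\cft_L$ that derives the required prefix, and then rework that witness so that its right bound is $\bot$.

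First I would unfold the defining rule of $\ltbound{}{}$ (which, in line with the elaboration story of Section~\ref{sec:elab-prec} and Floyd's original precedence relation, records exactly the raw production-rule adjacency that justifies the $\lessdot$ step). This should hand us some nonterminal $\cfs_0 = \psq$ such that $\cft_L \adjop \cfs_0$ together with a derivation $\cfs_0 \produce^* \o{\cfsr}\cft_R\dots$ whose leftmost terminal is $\cft_R$. The target $\cfs$ demanded by the lemma must satisfy the same two properties but additionally have $\bot$ as its right bound, so Lemma~\ref{lemma:adj-bot} is precisely the tool for the adjacency component: applied to $\cft_L \adjop \psq$, it yields $\cft_L \adjop \bddsort{p}{s}{\bot}$. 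Setting $\cfs = \bddsort{p}{s}{\bot}$ discharges the first conjunct.

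Next I would verify that $\cfs$ itself still derives $\o{\cfsr}\cft_R\dots$. Here Lemma~\ref{lemma:derive-leftmost-yield} factors the original derivation $\cfs_0 \produce^* \o{\cfsr}\cft_R\dots$ into a chain of leftmost yields $\cfs_0 \lmsymop^* \cfs'$ followed by a single production step $\cfs' \produce \o{\cfsr}\cft_R\dots$. Inspecting the elaboration rules in Figure~\ref{fig:compile-precedence}, one sees that the right-bound of each nonterminal encountered on the left spine of a term is dictated by its immediate right sibling in the parent production and not by the ancestor sort's own right bound. Thus the same chain of leftmost descents is available starting from $\bddsort{p}{s}{\bot}$, reaching the same $\cfs'$ and hence the same terminal prefix $\o{\cfsr}\cft_R\dots$.

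The main obstacle I anticipate is this final step: justifying that $\bddsort{p}{s}{\bot}$ subsumes $\psq$ with respect to leftmost-only derivations. I expect this will require a small monotonicity lemma over the PBG-to-CFG elaboration, in spirit very close to the reasoning behind Lemma~\ref{lemma:adj-bot}: the claim is that weakening a sort's right bound to $\bot$ never removes any leftmost-yield productions, because the right bound only constrains operators reachable along the \emph{right} spine, while our derivation descends the left spine. Depending on the exact formulation of $\produce$ and the elaboration rules, this may either fall out by direct case analysis on the parent production introducing $\cfs_0$ or require an auxiliary induction; in either case, once that monotonicity fact is in hand, the lemma follows by combining it with the adjacency obtained from Lemma~\ref{lemma:adj-bot}.
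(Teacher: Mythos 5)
Your setup matches the paper exactly: invert the $\lessdot_{\o{\cfsr}}$ premise to obtain $\psq$ with $\cft_L \adjop \psq$ and $\psq \produce^* \o{\cfsr}\cft_R\dots$, then apply Lemma~\ref{lemma:adj-bot} to transfer the adjacency to $\cfs = \bddsort{\precp}{\sort}{\bot}$. Where you diverge is the derivation component, and there you substantially overcomplicate things and leave the decisive step unresolved. The paper does not replay or refactor the original derivation at all: since $\precq \pgeq_\sort \bot$ holds for every $\precq$, the rule \texttt{Produce-Bound} gives $\bddsort{\precp}{\sort}{\bot} \produce \psq$ in a \emph{single step}, and prepending that step to the already-given derivation $\psq \produce^* \o{\cfsr}\cft_R\dots$ immediately yields $\cfs \produce^* \o{\cfsr}\cft_R\dots$. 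No appeal to Lemma~\ref{lemma:derive-leftmost-yield} is needed, and no monotonicity lemma over the elaboration is needed.

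The "main obstacle" you flag --- showing that $\bddsort{\precp}{\sort}{\bot}$ subsumes $\psq$ on leftmost derivations --- is thus not an obstacle at all but a built-in feature of the calculus: bound widening is an explicit production rule, not a meta-property of the elaboration to be established by induction. Your intuition about the direction of the weakening is right ($\bot$ is the loosest right bound, so nothing is lost), but your proposed justification operates at the wrong level: you are trying to show that the \emph{same productions} are syntactically available from the weakened nonterminal by inspecting Figure~\ref{fig:compile-precedence}, when in fact the two nonterminals have different production sets and are related only through the subsumption step. If you tried to carry out the replay argument literally, the first $\lmsymop$ step of the refactored chain would already require exactly the \texttt{Produce-Bound} step you are missing. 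So the proof is salvageable with one observation, but as written it does not close.
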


\begin{proof}
  Invert the premise $\cft_L \lessdot_{\o{\cfsr}} \cft_R$ to get
  nonterminal $\cfsr = \psq$ such that
  \begin{align}
  \adj{\cft_L}{\cfsr} \label{eq:adj} \\
  \cfsr \produce^* \o{\cfsr} \cft_R\dots \label{eq:derive'}
  \end{align}
  Let $\cfs = \bddsort{\precp}{\sort}{\bot}$.
  Apply Lemma \ref{lemma:adj-bot} to \eqref{eq:adj} to get
  $\adj{\cft_L}{\cfs}$.
  Since $\precq \pgeq_\sort \bot$ by definition,
  we have by rule \texttt{Produce-Tighten}
  that $\cfs \produce \cfsr$,
  which together with \eqref{eq:derive'} gives
  $\cfs \produce^* \o{\cfsr} \cft_R\dots$ as desired.
\end{proof}








\subsection{Proof of \autoref{theorem:prec-cohere}}
For all sorts $\sort$, precedence levels $\precp_L,\precp_R$,
and tiles $\labl_L, \labl_R$
such that $\dots\labl_L \sort \in \rgxsem{\gram(\sort,\precp_L)}$
and ${\sort\labl_R\dots}\in\rgxsem{\gram(\sort,\precp_R)}$,
the following equivalences hold:
\begin{align*}
    \labl_L \lessdot \labl_R &\iff \precp_L < \precp_R \\
    \labl_L \gtrdot \labl_R &\iff \precp_L > \precp_R
\end{align*}
\begin{proof}
   We assume the premises. Beginning with the left-to-right direction of the first equivalence, we also assume that $\labl_L \lessdot \labl_R$. By repeated rule inversion, we obtain $\bddsort{a}{\sort}{b} \reduce \dots\labl_L \bddsort{c}{\sort}{d}$ and $\bddsort{d^+}{\sort}{d^+} \reduce \bddsort{e}{\sort}{f}\labl_R\dots$, for some precedences $a, b, c, d, e, f$ such that $c \leq c^+$ and $d^+ \geq d$ (the result of \rulename{Produce-Tighten}). Applying rule inversion once more to each yields the inequalities $\precp_L < c$ and $c^+ = \min(e, \precp_R)$. Together we have $\precp_L < c \leq c^+ \leq \precp_R$, completing the implication.

    For the other direction, we assume that $\precp_L < \precp_R$. We will first derive $\bddsort{a}{\sort}{\precp_R} \reduce \dots\labl_L \bddsort{\precp_R}{\sort}{\precp_R}$ for some extension $\dots$ and some precedence $a$. In the case that the prefix of $\dots\labl_L \sort$ begins with $\sort$, \rulename{PElab-Infix} is used, with the bounds of each $\chi_i$ being $\top$ except for $\chi_k = \bddsort{\precp_R}{\sort}{\precp_R}$. The premises are satisfied with $\dots\labl_L \sort \in \rgxsem{\gram(\sort,\precp_L)}$, $\top > \precp_L < \precp_R$, and $\precp_L = \min(\precp_L, \precp_R)$. The case when the prefix does not begin with $\sort$ is analogous, with \rulename{PElab-Prefix} used instead.

    From this reduction judgment we obtain $\adj{\labl_L}{\bddsort{\precp_R}{\sort}{\precp_R}}$, the first premise to our desired conclusion. For the second premise, it suffices to show that $\bddsort{a}{\sort}{\precp_R} \reduce \bddsort{\top}{\sort}{\top} \labl_R \dots$, which we obtain from ${\sort\labl_R\dots}\in\rgxsem{\gram(\sort,\precp_R)}$ via \rulename{PElab-Infix} or \rulename{PElab-Postfix} as needed, with $\top > \precp_R < \top$ and $\precp_R = \min(\precp_R, \top)$.

    The second equivalence is proven symmetrically.
\end{proof}


\subsection{Proof of \Lemmaref{lemma:valid-prefix}}
\begin{lemma}[Splice]
  \label{lemma:splice}
  Suppose symbol $\sym$ appears uniquely in regex $\rgx$.
  If  ${\s{\sym}_L \sym \dots}\in\rgxsem{\rgx}$
  and ${\dots \sym \s{\sym}_R}\in\rgxsem{\rgx}$,
  then ${\s{\sym}_L \sym \s{\sym}_R}\in\rgxsem{\rgx}$.
\end{lemma}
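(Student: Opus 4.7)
I would prove the lemma by structural induction on the regex $\rgx$, using the uniqueness hypothesis at each inductive step to identify, in every matching derivation, the unique sub-regex that contributes the chosen occurrence of $\sym$. The base cases are routine: if $\rgx = \rzero$ no string matches, if $\rgx = \rone$ the only match $\snil$ does not contain $\sym$, and if $\rgx$ is an atomic symbol $\sym'$ then uniqueness forces $\sym' = \sym$, both premises reduce to $\sym \in \rgxsem{\rgx}$ with $\s{\sym}_L = \s{\sym}_R = \snil$, and the splice is $\sym$ itself.

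For $\ralt{g_1}{g_2}$, uniqueness places $\sym$ in exactly one disjunct, say $g_1$; both premise strings contain $\sym$ and therefore must match $g_1$, so the inductive hypothesis on $g_1$ applies directly and yields the splice in $\rgxsem{g_1} \subseteq \rgxsem{\rgx}$. For $\rseq{g_1}{g_2}$, say $\sym$ is unique to $g_1$. I would decompose each premise string as $u_i v_i$ with $u_i \in \rgxsem{g_1}$ and $v_i \in \rgxsem{g_2}$; since $g_2$ has no occurrence of $\sym$, the chosen $\sym$ in each input must lie in $u_i$. Writing $u_1 = \s{\sym}_L \sym \s{a}$ and $u_2 = \s{b} \sym \s{c}$ so that $\s{\sym}_R = \s{c} v_2$, the inductive hypothesis on $g_1$ gives $\s{\sym}_L \sym \s{c} \in \rgxsem{g_1}$, and concatenating with $v_2 \in \rgxsem{g_2}$ produces $\s{\sym}_L \sym \s{\sym}_R \in \rgxsem{\rseq{g_1}{g_2}}$.

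The Kleene star case $\rstar{g}$ is the one I expect to require the most care, because a single match of $\rstar{g}$ can unfold $g$ many times and so the syntactically unique $\sym$ may legitimately appear in multiple positions of a matched string. The plan is to write each premise string as a concatenation $w_1 \cdots w_k$ of matches of $g$, locate the $\sym$ singled out by the premise inside some $w_{j_L}^L = \s{e}\sym\s{f}$ of the first string and some $w_{j_R}^R = \s{h}\sym\s{k}$ of the second, so that $\s{\sym}_L = w_1^L \cdots w_{j_L-1}^L \, \s{e}$ and $\s{\sym}_R = \s{k} \, w_{j_R+1}^R \cdots w_{k_R}^R$. The inductive hypothesis on $g$ then gives $\s{e}\sym\s{k} \in \rgxsem{g}$, and sandwiching this between the preserved initial $w^L$'s and trailing $w^R$'s exhibits $\s{\sym}_L \sym \s{\sym}_R$ as a concatenation of matches of $g$, hence a match of $\rstar{g}$. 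The main subtlety is that uniqueness of $\sym$ in $\rgx$ is a purely syntactic condition, so the statement and induction must be read as quantifying over any valid positional choice of $\sym$ in each premise string; the bookkeeping above provides exactly this flexibility.
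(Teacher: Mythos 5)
Your proof is correct, and its overall skeleton---structural induction on $\rgx$, with the uniqueness hypothesis used to pin the distinguished occurrence of $\sym$ into a single disjunct or sequence factor---is the same as the paper's. The one place you genuinely diverge is the Kleene star case $\rstar{\rgx_0}$. The paper peels off a single copy of $\rgx_0$ from each premise string, does a three-way case split on whether the distinguished occurrence lands in the first factor or the remainder, and then appeals to ``the inductive hypothesis'' on memberships in $\rgxsem{\rgx_0^{m-1}}$ and $\rgxsem{\rstar{\rgx_0}}$---which are not structural subterms of $\rstar{\rgx_0}$, so that step is really a nested induction on the iteration count that the paper leaves implicit. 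You instead expand each match into its full list of $\rgx_0$-chunks, locate the chunk containing each distinguished occurrence, apply the structural IH exactly once to the body $\rgx_0$, and re-concatenate the surviving prefix and suffix chunks. This buys a cleaner argument: a single well-founded induction with no case split and no hidden secondary measure. Your closing remark that the statement must be read as quantifying over a chosen occurrence (since $\sym$ may appear many times in a string matched by a star, even though it is syntactically unique in $\rgx$) is exactly the subtlety the paper's notation $\s{\sym}_L\sym\dots$ handles implicitly; making it explicit is a strength, not a gap.
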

\begin{proof}
  Proceed by induction on $\rgx$.
  Start each case by assuming the premises:
  \begin{align}
    {\s{\sym}_L \sym \s{\symy}_R}\in\rgxsem{\rgx} \label{eq:splice-L} \\
    {\s{\symy}_L \sym \s{\sym}_R}\in\rgxsem{\rgx} \label{eq:splice-R}
  \end{align}
  \begin{itemize}
  \item $\rgx = \rone$: Impossible, cannot derive a nonempty string.
  \item $\rgx = \sym_0$: Then $\sym_0 = \sym$ and $\s{\sym}_L =
  \s{\symy}_L = \s{\sym}_R = \s{\symy}_R = \none$ and we
  have the goal by assumption.

  \item $\rgx = \ralt{\rgx_L}{\rgx_R}$:
  $\sym$ appears in either $\rgx_L$ or $\rgx_R$---assume $\rgx_L$ without loss of generality. Then
  \begin{align}
    {\s{\sym}_L \sym \s{\symy}_R}\in\rgxsem{\rgx_L} \label{eq:splice-alt-L} \\
    {\s{\symy}_L \sym \s{\sym}_R}\in\rgxsem{\rgx_L} \label{eq:splice-alt-R}
  \end{align}
  By the inductive hypothesis, we have
  ${\s{\sym}_L \sym \s{\sym}_R}\in\rgxsem{\rgx_L}$
  and therefore
  ${\s{\sym}_L \sym \s{\sym}_R}\in\rgxsem{\rgx_L} \cup
  \rgxsem{\rgx_R} = \rgxsem{\rgx}$.

  \item $\rgx = \rseq{\rgx_L}{\rgx_R}$:
  $\sym$ appears in either $\rgx_L$ or $\rgx_R$---assume $\rgx_L$ without loss of generality.

  There exists partitions $\s{\symy}_R =
  \s{\symy}_{\ell}\s{\symy}_{r}$
  and $\s{\sym}_R = \s{\sym}_{\ell}\s{\sym}_{r}$ such that
  \begin{align}
    \s{\sym}_L\sym\s{\symy}_{\ell} \in \rgxsem{\rgx_L} \label{eq:splice-seq-y'L} \\
    \s{\symy}_{r} \in \rgxsem{\rgx_R} \label{eq:splice-seq-y'R} \\
    \s{\symy}_L\sym\s{\sym}_{\ell} \in \rgxsem{\rgx_L} \label{eq:splice-seq-x'L} \\
    \s{\sym}_{r} \in \rgxsem{\rgx_R} \label{eq:splice-seq-x'R}
  \end{align}
  Apply the inductive hypothesis
  to \eqref{eq:splice-seq-y'L}
  and \eqref{eq:splice-seq-x'L}
  to get
  $\s{\sym}_L\sym\s{\sym}_{\ell} \in \rgxsem{\rgx_L}$,
  which combined with \eqref{eq:splice-seq-x'R} gives
  $\s{\sym}_L\sym \s{\sym}_R = \s{\sym}_L\sym
  \s{\sym}_{\ell}\s{\sym}_{r}
  \in \rgxsem{\rseq{\rgx_L}{\rgx_R}}$.

  \item $\rgx = \rstar{\rgx_0}$:

  There exist $m,n\in\mathbb{N}$ such that
  \begin{align}
    {\s{\sym}_L \sym \s{\symy}_R}\in\rgxsem{\rgx_0^m} \label{eq:splice-star-L} \\
    {\s{\symy}_L \sym \s{\sym}_R}\in\rgxsem{\rgx_0^n} \label{eq:splice-star-R}
  \end{align}
  We know $m > 0$ and $n > 0$ because $\rgx^0_0 = \rone$ derives only the empty
  string, meaning
  \begin{align}
    {\s{\sym}_L \sym \s{\symy}_R}\in\rgxsem{\rseq{\rgx_0}{\rgx_0^{m-1}}} \label{eq:splice-star-L'} \\
    {\s{\symy}_L \sym \s{\sym}_R}\in\rgxsem{\rseq{\rgx_0}{\rgx_0^{n-1}}} \label{eq:splice-star-R'}
  \end{align}
  Case analysis on whether $\sym$ appears in $\rgx_0^{m-1}$ and in $\rgx_0^{n-1}$:
  \begin{itemize}
    \item Suppose $\sym$ appears in $\rgx_0^{m-1}$
    \ie there exists partition
    $\s{\sym}_L = \s{\sym}_\ell\s{\sym}_r$
    such that
    \begin{align}
      \s{\sym}_\ell \in \rgxsem{\rgx_0} \label{eq:splice-star-xL} \\
      \s{\sym}_r\sym\s{\symy}_R \in \rgxsem{\rgx_0^{m-1}} \label{eq:splice-star-xR}
    \end{align}
    Apply the inductive hypothesis to \eqref{eq:splice-star-xR} and
    \eqref{eq:splice-star-R} to get
    $\s{\sym}_r\sym\s{\sym}_R \in \rgxsem{\rstar{\rgx_0}}$,
    which combined with \eqref{eq:splice-star-xL} gives
    \begin{align}
      {\s{\sym}_L \sym \s{\sym}_R}
      = {\s{\sym}_\ell\cdot\s{\sym}_r\sym\s{\sym}_R}
      \in \rgxsem{\rseq{\rgx_0}{\rstar{\rgx_0}}}
      = \rgxsem{\rstar{\rgx_0}} \label{eq:splice-star-x}
    \end{align}

    \item Suppose $\sym$ appears in $\rgx_0^{n-1}$
    \ie there exists partition
    $\s{\symy}_L = \s{\symy}_\ell\s{\symy}_r$
    such that
    \begin{align}
      \s{\symy}_\ell \in \rgxsem{\rgx_0} \label{eq:splice-star-yL} \\
      \s{\symy}_r\sym\s{\sym}_R \in \rgxsem{\rgx_0^{n-1}} \label{eq:splice-star-yR}
    \end{align}
    Apply the inductive hypothesis to \eqref{eq:splice-star-L} and
    \eqref{eq:splice-star-yR} to get
    \begin{align}
      {\s{\sym}_L \sym \s{\sym}_R}
      \in \rgxsem{\rseq{\rgx_0^m}{\rgx_0^{n-1}}}
      \subseteq \rgxsem{\rstar{\rgx_0}}
    \end{align}

    \item Suppose $\sym$ appears in neither $\rgx_0^{m-1}$ nor $\rgx_0^{n-1}$.
    Then there exist partitions $\s{\symy}_R = \s{\symy}_\ell\s{\symy}_r$
    and $\s{\sym}_R = \s{\sym}_\ell\s{\sym}_r$ such that
    \begin{align}
      \s{\sym}_L\sym\s{\symy}_r \in \rgxsem{\rgx_0} \\
      \s{\symy}_L\sym\s{\sym}_r \in \rgxsem{\rgx_0} \\
      \s{\symy}_r \in \rgxsem{\rgx_0^{m-1}} \\
      \s{\symy}_r \in \rgxsem{\rgx_0^{n-1}}
    \end{align}
    Apply the same argument as in the case $\rgx = \rseq{\rgx_L}{\rgx_R}$ to
    reach the goal.
  \end{itemize}
  \end{itemize}
\end{proof}



\begin{proof}[Proof of \Lemmaref{lemma:valid-prefix}]
  Assume the premises
  \begin{align}
    \cft \lessdot_{\dsty{\o{\cfsr}_0}} \labl_0 \label{eq:valid-lt0}\\
    \iseq{\labl_{i-1} \doteq_{\dsty{\o{\cfsr}_{i}}} \labl_{i}}{0< i\leq k} \label{eq:valid-eqs} \\
    \labl_k \adjop \o{\cfsr}_{k+1} \label{eq:valid-adj}
  \end{align}
  Apply Lemma \ref{lemma:lt-bot-witness} to \eqref{eq:valid-lt0}
  to get nonterminal $\cfs_\cft = \bddsort{\precm}{\sort_\cft}{\bot}$
  such that
  \begin{align}
  \adj{\cft}{\cfs_\cft} \label{eq:cft-adj} \\
  \cfs_\cft \produce^* \o{\cfsr}_0\labl_0\dots \label{eq:valid-derive-star}
  \end{align}
  Apply Lemma \ref{lemma:derive-leftmost-yield} to \eqref{eq:valid-derive-star}
  to get nonterminal $\cfs_0$ such that
  \begin{align}
    \lmsymStar{\cfs_\cft}{\cfs_0} \label{eq:valid-lmsym0} \\
    \cfs_0 \produce \o{\cfsr}_0\labl_0\dots \label{eq:valid-yield0}
  \end{align}
  Invert \eqref{eq:valid-eqs} to get nonterminals $\iseq{\cfs_i}{0< i\leq k}$ such that
  \begin{align}
    \iseq{\cfs_{i} \produce \dots\labl_{i-1}\o{\cfsr}_{i}\labl_{i}\dots}{0< i\leq k} \label{eq:valid-yieldi}
  \end{align}
  Finally, unabbreviate \eqref{eq:valid-adj} to get nonterminal $\cfs_{k+1}$
  such that
  \begin{align}
    \cfs_{k+1} \produce \dots\labl_k\o{\cfsr}_{k+1}\dots \label{eq:valid-produce-k+1}
  \end{align}

  First, we will show that the productions in
  \eqref{eq:valid-yield0}-\eqref{eq:valid-produce-k+1}
  are backed by a shared derivation from some regex in $\gram$,
  given \Assumptionref{assum:unique-tiles}.
  Letting
    $\iseq{\cfs_i = \bddsort{\precp_i}{\sort_i}{\precq_i}}{0\leq i\leq k}$,
  invert \eqref{eq:valid-yield0}-\eqref{eq:valid-produce-k+1} by
  rule \texttt{Produce-Subsume} to get
  \begin{align}
    \bddsort{p_0}{\sort_0}{q_0} \reduce \o{\cfsr}_0\labl_0\dots \label{eq:valid-rule0'} \\
    \iseq{\bddsort{p_i}{\sort_i}{q_i} \reduce \dots\labl_{i-1}\o{\cfsr}_{i}\labl_{i}\dots}{0< i\leq k} \label{eq:valid-rulei'} \\
    \bddsort{p_{k+1}}{\sort_{k+1}}{q_{k+1}} \reduce \dots\labl_k\o{\cfsr}_{k+1}\dots \label{eq:valid-rule-k+1'}
  \end{align}
  Invert again to get precedence levels $\iseq{\precn_i}{0\leq
  i\leq k+1}$ and optional sorts $\iseq{\o{\sortr}_i}{0\leq i\leq k+1x}$ such that
  \begin{align}
    \iseq{\o{\cfsr}_i \consistent \o{\sortr}_i}{0\leq i\leq k+1} \label{eq:valid-consistent} \\
    {\o{\sortr}_0\labl_0\dots}\in\rgxsem{\gram(\sort_0,\precn_0)} \label{eq:valid-derive0'} \\
    \iseq{{\dots\labl_{i-1}\o{\sortr}_{i}\labl_{i}\dots}\in\rgxsem{\gram(\sort_i,\precn_i)}}{0< i\leq k} \label{eq:valid-derivei'} \\
    \dots\labl_k{\o{\sortr}_{k+1}\dots}\in\rgxsem{\gram(\sort_{k+1},\precn_{k+1})} \label{eq:valid-derive-k+1'}
  \end{align}
  \Assumptionref{assum:unique-tiles} applied to
  \eqref{eq:valid-derive0'}-\eqref{eq:valid-derive-k+1'}
  gives us sort $\sort$ and precedence $\precn$ such that
  \begin{align}
    \iseq{\sort = \sort_i}{0\leq i\leq k+1} \label{eq:valid-same-sort} \\
    \iseq{\precn = \precn_i}{0\leq i\leq k+1} \label{eq:valid-same-precn}
  \end{align}
  Given \eqref{eq:valid-same-sort} and \eqref{eq:valid-same-precn},
  we can now apply \Lemmaref{lemma:splice} to \eqref{eq:valid-derive0'}-\eqref{eq:valid-derive-k+1'}
  to get
  \begin{align}
    {\iseq{\o{\sortr}_i\labl_i}{0\leq i\leq k}\,\o{\sortr}_{k+1}\,\s{\sym}}\in\rgxsem{\gram(\sort,\precn)} \label{eq:valid-symbar-derive}
  \end{align}
  for some symbol sequence $\s{\sym}$, as desired.
  By \Assumptionref{assum:operator-form}, we can rewrite $\s{\sym}$
  with optional sorts $\iseq{\o{\sortr}_i}{k+1 < i \leq \ell + 1}$
  and tiles $\iseq{\labl_i}{k < i \leq \ell}$
  such that
  \begin{align}
    {\o{\sortr}_0\iseq{\labl_i\o{\sortr}_{i+1}}{0\leq i\leq \ell}}\in\rgxsem{\gram(\sort,\precn)} \label{eq:valid-derive'}
  \end{align}

  Now it remains to determine bounds $\precp,\precq$,
  tiles $\iseq{\labl_i}{k < i \leq \ell}$,
  and slots $\iseq{\o{\cfsr}_i}{k+1 < i \leq \ell+1}$
  such that
  \begin{align}
  \lmsymStar{\cfs_\cft}{\psq} \label{eq:lmsym-star-goal}\\
  {\psq}\reduce{\o{\cfsr}_0
  \iseq{\labl_i~\o{\cfsr}_{i+1}}{0\leq i\leq \ell}} \label{eq:derive-goal}
  \end{align}
  We will show \eqref{eq:derive-goal} using one of the elaboration
  rules in \autoref{fig:compile-precedence}.
  Determining which elaboration rule to use requires case analysis on whether
  $\o{\sortr}_0 = \some{\sort}$ and $\o{\sortr}_{\ell+1} = \some{\sort}$.
  Without loss of generality,
  assume $\o{\sortr}_0 = \some{\sort}$
  and $\o{\sortr}_{\ell+1} \neq \some{\sort}$.

  Proceed by case analysis on whether $\sort_\cft = \sort$:
  \begin{itemize}

  \item
  Suppose $\sort_\cft \neq \sort$.
  Rewrite \eqref{eq:valid-lmsym0} with \eqref{eq:valid-same-sort}
  to get $\cfs_\cft \lmsymop^* \bddsort{\medsquare}{\sort}{\medsquare}$,
  then apply Lemma \ref{lemma:unbounded-sort-transition} to get
  \begin{align}
  \cfs_\cft \lmsymop^* \bddsort{\bot}{\sort}{\bot}
  \end{align}
  Pick $\precp = \bot$ and $\precq = \bot$ to reach \eqref{eq:lmsym-star-goal}.

  It remains to construct slots $\iseq{\o{\cfsr}_i}{k+1 < i \leq \ell+1}$
  satisfying \eqref{eq:derive-goal}.
  Apply \texttt{PElab-Postfix} to \eqref{eq:valid-derive'}
  to get
  \begin{align}
  \psq \reduce \o\cfsr_0 \labl_0 \iseq{\promotesym{\o{\sortr}_{i}} \labl_i}{0< i \leq \ell} \label{eq:pe-postfix}
  \end{align}
  where we write
  \begin{align}
    \promotesym{\o\sortr_i} \triangleq \begin{cases}
      \o{\cfsr}_i & \text{if } 0 < i \leq k \\
      \some{\botsort{\sortr_i}} & \text{if } k < i \text{ and } \o{\sortr}_i = \some{\sortr_i} \\
      \none & \text{if } \o{\sortr}_i = \none
    \end{cases} \label{eq:promotesym}
  \end{align}
  Picking $\iseq{\o{\cfsr}_i = \promotesym{\o{\sortr}_i}}{k < i \leq \ell + 1}$
  gives us our goal \eqref{eq:derive-goal}.


  \item
  Suppose $\sort_\cft = \sort$.
  Pick $\precp = \precm$ and $\precq = \bot$, \ie
  $\cfs_\cft = \bddsort{\precm}{\sort}{\bot} = \psq$, which satisfies \eqref{eq:lmsym-star-goal}.

  It remains to construct slots $\iseq{\o{\cfsr}_i}{k < i \leq \ell+1}$ satisfying \eqref{eq:derive-goal}.
  We have $\bddsort{\precm}{\sort}{\bot} = \cfs_\cft \lmsymop^* \cfsr_0 =
  \bddsort{\precp_0}{\sort}{\precq_0}$.
  By Lemma \ref{lemma:dispute-inherit},
  we have $\precm \preccurlyeq_\sort \precp_0$.
  We have $\o{\sortr}_0 = \some{\sort}$
  which implies $\o{\cfsr}_0 = \some{\bddsort{\precn_L}{\sort}{\precn_R}}$
  for some $\precn_L,\precn_R$.
  We have $\bddsort{\precp_0}{\sort}{\precq_0} \produce
  \bddsort{\precn_L}{\sort}{\precn_R}\labl_0\dots$ by \eqref{eq:valid-yield0}%
  ---case analysis on the underlying reduction
  (either \texttt{PElab-Postfix} or \texttt{PElab-Infix})
  tells us that $\precn_R = \precn$.
  By Lemma \ref{lemma:dispute-inherit}, we have
  $\precp_0 \preccurlyeq_\sort \precn_L \prec_\sort \precn_R$.
  Inverting \eqref{eq:valid-yield0} gives us $\precn_R = \precn$.
  Therefore $\precm \preccurlyeq_\sort \precp_0 \preccurlyeq_\sort \precn_L \prec_\sort
  \precn_R = \precn$.

  We have $\precm \prec_\sort \precn$
  and that the rightmost symbol of the derived string
  in \eqref{eq:valid-derive'} is $\labl_\ell$,
  so apply \texttt{PElab-Postfix} to \eqref{eq:valid-derive'} to get
  \begin{align}
  \psq \reduce \bddsort{\precp}{\sort}{\precn} \labl_0 \iseq{\promotesym{\o{\sortr}_{i}} \labl_i}{0< i \leq \ell} \label{eq:pe-postfix'}
  \end{align}
  where $\promotesym{\o\sortr_i}$ is defined as in \eqref{eq:promotesym}.
  The same reasoning applied in the case $\sort_\cft \neq \sort$ from
  \eqref{eq:pe-postfix} onward gives us our goal \eqref{eq:derive-goal}.


  \end{itemize}

\end{proof}

\subsection{Proof of Lemma \ref{lemma:push-total-0}} \label{sec:push-total-proof}

\begin{lemma}[Fill Produces Well-Formed Terms] \label{lemma:fill-wf}
  If $\pfill{\o{\termr}}{\iseq{\o{\cfs}_i}{1\leq i\leq k}}=\iseq{\o{\term}_i}{1\leq i \leq
  k}$
  then $\iseq{\o{\cfs}_i\Produce\o{\term}_i}{1\leq i \leq k}$.
\end{lemma}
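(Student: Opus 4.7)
The plan is to induct on the derivation of $\pfill{\o{\termr}}{\iseq{\o{\cfs}_i}{1\leq i\leq k}}=\iseq{\o{\term}_i}{1\leq i \leq k}$, doing case analysis on the defining rules of fill. By the option-lifting convention stated in the notation preamble, the goal $\iseq{\o{\cfs}_i\Produce\o{\term}_i}{1\leq i \leq k}$ unpacks pointwise: for each $i$, either $\o{\cfs}_i = \o{\term}_i = \none$, or $\o{\cfs}_i = \some{\cfs_i}$ and $\o{\term}_i = \some{\term_i}$ with $\cfs_i \Produce \term_i$. So the proof decomposes into verifying this convention at each slot.

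First, I would dispatch $k = 0$ vacuously. For the inductive step, the passthrough cases of fill---where the leading slot is $\none$, or where fill skips a $\some{\cfs}$ without committing to it---emit $\none$ as the head output, so the head instance of the conjunction is trivial and the tail follows directly from the inductive hypothesis applied to the recursive call (with the argument term unchanged).

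The substantive case is when fill commits $\termr$ to a slot $\some{\cfs}$, emitting $\some{\term}$ there and clearing the argument so that every subsequent slot emits $\none$. The tail is handled by the inductive hypothesis (and by the vacuous direction at emitted-$\none$ positions), so the only real obligation is $\cfs \Produce \term$ at the commit site. This I would establish by inverting the fill rule to extract its sort-compatibility premise relating $\cfs$ to $\termr$, and then assembling $\cfs \Produce \term$ from the relevant \texttt{Produce} rules together with the grout injection rules from \autoref{fig:add-grout}---wrapping $\termr$ in pre-, post-, or infix grout exactly when the sort of $\cfs$ requires a transition away from the sort of $\termr$.

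The main obstacle will be this last bookkeeping step. The difficulty is not conceptual but lies in ensuring the construction of $\term$ under fill lines up precisely with a $\Produce$-derivation: any sort mismatch between $\cfs$ and $\termr$ must be absorbed by exactly one application of grout injection, and any precedence bound carried by $\cfs$ must be respected by a suitable \texttt{Produce-Bound} step. I expect that the definition of fill has been engineered so that these witnesses can essentially be read off from its premises, reducing the proof to a case-by-case verification rather than a novel argument.
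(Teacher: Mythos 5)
The paper does not prove this lemma by induction on the fill derivation at all: its entire proof is the single line ``Corollary of Lemma~\ref{lemma:push-total}, generalized to support proof by induction over recursive \texttt{Reduce} steps.'' That is, the lemma is established as part of a mutual induction with Generalized Push Totality, where the producibility of the terms being distributed into slots is carried along as an invariant of the parser (see the $\Reduce$ premise of rule \texttt{Natural} and the $\nat{\termr_i}$ premises of \rulename{WFConfig} in the push-invariant figure). Your standalone structural induction on the fill derivation is a genuinely different decomposition, and whether it closes depends entirely on a question you leave unresolved.

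The crux is your ``substantive case,'' where fill commits $\termr$ to a slot $\some{\cfs}$ and you must show $\cfs \Produce \term$ for the emitted $\term$. You propose to invert the fill rule for a \emph{sort-compatibility} premise and then ``assemble'' the $\Produce$ derivation from grout injection. But sort compatibility between $\cfs$ and $\termr$ cannot by itself yield $\cfs \Produce \term$ when $\term$ contains $\termr$ as a subterm: you also need $\termr$ itself to be derivable from a nonterminal reachable from $\cfs$ with suitable precedence bounds, and that is neither a hypothesis of the lemma nor a consequence of the fill judgment unless the fill rules explicitly carry it. The paper's usage suggests they do carry it in full --- in the proof of Lemma~\ref{lemma:push-total}, discharging a \texttt{Fill-Postfix} goal requires \emph{proving} $\cfs_0\Produce\snode{\term\,\poshole^\sort}$, i.e.\ that judgment is a premise of the rule --- in which case your commit case collapses to a one-step rule inversion and no assembly is needed. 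If instead the rules carried only sort information, your assembly step would be unprovable for arbitrary $\termr$, and you would be forced back to exactly the mutual induction with push totality that the paper uses, since producibility of reduction results is established there and nowhere else. Either way, the plan as written for the key case is not the right one: it is either redundant or impossible, and you should determine which before committing to the standalone induction.
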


\begin{figure}
  \judgbox{\nat{\term}}{Term $\term$ is natural}
  \begin{mathpar}
    \inferrule[Natural]{
      0 \pleq_\sort \precp \\
      \psq \Reduce \term \\
      \precq \pgeq_\sort 0
    }{
      \nat{\term}
    }
  \end{mathpar}

  \judgbox{\wf{\pushl{\stack}{\s{\termr}}{\medsquare}}}{Stack configuration $\pushl{\stack}{\dsty{\s{\termr}}}{\medsquare}$ is well-formed}
  \begin{mathpar}
    \inferrule[WFConfig$\;\glp\textcolor{gray}{k\geq 0}\glr$]{
      \wf{\stack} \\
      \adj{\head{\stack}}{\o\cfs} \\
      \pfill{\iseq{\termr_i}{0\leq i\leq k}}{\o\cfs} = \s{\o\term} \\
      \iseq{\nat{\termr_i}}{0\leq i\leq k}
    }{
      \wf{\pushl{\stack}{\dsty{\iseq{\termr_i}{0\leq i\leq k}}}{\medsquare}}
    }
  \end{mathpar}


\caption{Push invariant}\label{fig:push-invariant}
\end{figure}





\begin{proof}[Proof of \Lemmaref{lemma:push-total-0}]
  Corollary of \Lemmaref{lemma:push-total},
  generalized to support proof by induction
  over recursive \texttt{Reduce} steps.
\end{proof}

\begin{lemma}[Generalized Push Totality]
\label{lemma:push-total}
For every stack $\stack$,
reduction sequence $\s{\termr}$,
and token $\cft$
such that $\wf{\pushl{\stack}{\s{\termr}}{\medsquare}}$,
there exists $\stack'$
such that $\push{\stack}{\s{\termr}}{\cft}{\stack'}$
and $\wf{\stack'}$.
\end{lemma}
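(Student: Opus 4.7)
The plan is to induct on a well-founded measure of the configuration $\pushl{\stack}{\s{\termr}}{\medsquare}$---most naturally the height of $\stack$, since every Reduce step strictly shrinks $\stack$ by popping its topmost $\lessdot$-delimited segment into a fresh term prepended to $\s{\termr}$. Unpacking the \texttt{WFConfig} premise yields $\wf{\stack}$, an adjacency $\adj{\head{\stack}}{\o{\cfs}}$, a $\picktxt$ decomposition witnessing that $\s{\termr}$ fills the slots of that adjacency, and naturality of each $\termr_i$. These pieces will be threaded through the induction as invariants.

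First I would establish that \emph{some} precedence walk exists from $\head{\stack}$ towards $\cft$, guaranteeing that either the Shift or the Reduce rule applies. For this, I would invoke Reachability (Lemma~\ref{lemma:root-connected}) together with the grout precedence lemmas (Lemmas~\ref{lemma:grout-prec} and~\ref{lemma:root-grout-prec}): in the grout-injected grammar $\cfgi$, such a walk always exists---in the worst case, one routing through a prefix, postfix, or infix grout tile. This is precisely where totality hinges on grout injection: without it, no walk might exist and the push could get stuck.

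Second, I would case-split on the leading precedence step of the walk. If it begins with $\lessdot$ or $\doteq$, then Shift applies: I would invoke the Valid Prefix lemma to package the walk into a derivation whose intermediate slots can be filled with $\s{\termr}$ via Lemma~\ref{lemma:fill-wf}, then construct $\stack'$ by appending the intermediate tiles (possibly including ghosts and grout) followed by $\cft$, discharging the premises of the Shift rule along the way. Well-formedness of $\stack'$ then follows by reestablishing the adjacency invariant at the new head. If the walk instead begins with $\gtrdot$, then Reduce applies: pop the topmost $\lessdot$-delimited segment into a new term $\termr^*$, prepend it to $\s{\termr}$, verify well-formedness of the smaller configuration $\pushl{\stack'}{\termr^*\s{\termr}}{\medsquare}$, and appeal to the inductive hypothesis.

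The main obstacle I anticipate is the Reduce case's invariant bookkeeping. After popping, one must show that the new adjacency at $\head{\stack'}$ admits a slot sequence into which $\termr^* \s{\termr}$ can be filled, with every term natural. Naturality of $\termr^*$ in particular must be reconstructed from the bounds witnessed by the $\gtrdot$ step together with the bounds inherited down the stack---a reasoning that requires careful interplay between Lemmas~\ref{lemma:bidelim-reset}, \ref{lemma:dispute-inherit}, and~\ref{lemma:left-bounded-produce}. Grout forms play a crucial role here as well, supplying canonical natural terms at boundaries where no tile-only construction would satisfy naturality, and thereby ensuring the induction never gets stuck.
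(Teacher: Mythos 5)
Your skeleton---strong induction on $\stack$, threading the \texttt{WFConfig} invariants through the cases, and anticipating the naturality/bounds bookkeeping after a reduction---overlaps substantially with the paper's proof, but there are two genuine gaps. The first is that you omit the \texttt{Degrout} case entirely. By Lemma~\ref{lemma:homogeneity}, the topmost $\doteq$-chained segment of a nonempty stack is either all tiles or all grout. The Reduce path you describe goes through Lemma~\ref{lemma:valid-prefix}, which completes a chain of \emph{tiles} into a production; it does not apply when the top segment is grout. The paper handles that situation with the separate \texttt{Degrout} rule, which strips the grout, re-pushes its subterms as reductions onto the smaller stack, and then re-establishes \texttt{WFConfig} for the new configuration---the fiddliest part of the whole proof, requiring fresh fill and naturality obligations for the exposed subterms. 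Without this case your induction gets stuck on grout-topped stacks.

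The second gap is that your decision procedure presupposes a dichotomy that neither you nor the cited lemmas establish: that a precedence walk from $\head{\stack}$ toward $\cft$ always exists and has a well-defined leading step ($\lessdot/\doteq$ versus $\gtrdot$). Lemma~\ref{lemma:root-connected} gives reachability from $\dlroot$ only, not from an arbitrary stack head, and the grout-precedence lemmas do not by themselves yield a total comparison between arbitrary token pairs. The paper sidesteps this entirely: it never compares $\head{\stack}$ with $\cft$. Instead it shows that a nonempty well-formed stack can \emph{always} be reduced or degrouted---Lemma~\ref{lemma:valid-prefix} supplies whatever ghost tiles are needed to complete the top segment unconditionally---and recurses down to the empty stack, where \texttt{Shift} is applied using $\dlroot \doteq \drroot$ for $\cft = \drroot$ and Lemma~\ref{lemma:root-grout-prec} for a tile. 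Since the push relation is nondeterministic and only existence is claimed, always reducing is a legitimate and much cheaper strategy. To keep your comparison-driven architecture you would need to prove comparison totality as an additional lemma; otherwise, adopt the paper's always-reduce strategy and add the \texttt{Degrout} case.
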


\begin{proof}
Proceed by strong induction on the stack $\stack$.

\begin{itemize}
\item Suppose $\stack$ is empty, \ie $\stack = \dlroot$.

First, we will show that we can apply \texttt{Shift},
\ie there exist nonterminals $\iseq{\cfsr_i}{0\leq i\leq k}$,
tokens $\iseq{\cft_i}{0\leq i\leq k}$,
and cells $\iseq{\o\term_i}{0\leq i\leq k}$
such that
\begin{align}
  \dlroot\iseq{\cmprleq_{\dsty{\o{\cfsr}_i}}\cft_i}{0\leq i\leq k} = \cft \label{eq:shift-premise-chain} \\
  \pfill{\s{\termr}}{\iseq{\o{\cfsr}_i}{0\leq i\leq k}} = {\iseq{\o{\term}_i}{0\leq i\leq k}} \label{eq:shift-premise-fill}
\end{align}
Proceed by case analysis on $\cft$:
\begin{itemize}
\item First suppose $\cft = \drroot$.
Goal \eqref{eq:shift-premise-chain} is satisfied by the fact that $\dlroot \doteq_{\dsty{\some{\botsort{\startSym}}}} \drroot$.
It remains to show there exist cells $\s{\o\term}$ such that
$\pfill{\s{\termr}}{\some{\botsort{\startSym}}} = \s{\o\term}$.
Inverting the premise $\wf{\pushl{\dlroot}{\s{\termr}}{\medsquare}}$,
there exists slot $\o\cfsr$ and cells $\s{\o\term}$ such that
\begin{align}
\dlroot\adjop\o{\cfsr} \label{eq:adj-dlroot} \\
\pfill{\s{\termr}}{\o\cfsr} = \s{\o\term} \label{eq:reductions-fill-startsym}
\end{align}
Given \eqref{eq:adj-dlroot}, we further have
\begin{align}
\o\cfsr = \some{\botsort{\startSym}} \label{eq:cfsr-startsym}
\end{align}
with which rewriting \eqref{eq:reductions-fill-startsym} gives us \eqref{eq:shift-premise-fill}.

\item Otherwise, suppose $\cft = \labl$ for some tile $\labl$.
Let $\s{\termr} = \iseq{\termr_i}{0< i\leq k}$ for some $k\geq 0$.

Lemma \ref{lemma:root-grout-prec} gives us
tiles $\iseq{\labl_i}{0\leq i \leq \ell}$
and slots $\iseq{\o{\cfs}_i}{0\leq i \leq \ell}$
such that
\begin{align}
\binhole^{\startSym} \lessdot_{\dsty{\o{\cfs}_0}} \labl_0
  \iseq{\doteq_{\dsty{\o{\cfs}_i}}\labl_i}{1\leq i\leq \ell} = \labl
  \label{eq:binhole-tile-chain}
\end{align}
We can further show that
\begin{align}
\dlroot
  \lessdot_{\dsty{\none}} \prehole^{\startSym}
  \iseq{
    \doteq_{\dsty{\some{\zerosort{\startSym}}}}
    \binhole^{\startSym}
  }{0\leq i\leq k} \label{eq:dlroot-binhole-chain}
\end{align}
given that $\dlroot \adjop \botsort{\startSym}$
and $\botsort{\startSym} \reduce \prehole^{\startSym}\iseq{
  \zerosort{\startSym}
  \binhole^{\startSym}
}{0\leq i\leq k}$ by rule \texttt{GInj-Prefix}.
Composing \eqref{eq:dlroot-binhole-chain} with \eqref{eq:binhole-tile-chain}
gives us \eqref{eq:shift-premise-chain}.

To show \eqref{eq:shift-premise-fill},
it suffices to show
there exist cells $\iseq{\o{\term}_i}{0 < i \leq k}$ such that
\begin{align}
\iseq{\pfill{\termr_i}{\some{\zerosort{\startSym}}} = \o{\term}_i}{0< i\leq k}
\end{align}
which can be shown using reduction naturality.
\end{itemize}

Given \eqref{eq:shift-premise-chain} and \eqref{eq:shift-premise-fill},
rule \texttt{Shift} gives us
\begin{align}
\push{\dlroot}{\dsty{\s{\termr}}}{\cft}{\dlroot\iseq{\cmprleq_{\dsty{\o{\term}_i}}\cft_i}{0\leq i\leq k}}
\end{align}
It remains to show
\begin{align}
  \wf{\dlroot\iseq{\cmprleq_{\dsty{\o{\term}_i}}\cft_i}{0\leq i\leq k}} \label{eq:shifted-stack-wf}
\end{align}
\Lemmaref{lemma:fill-wf} applied to \eqref{eq:shift-premise-fill} gives us
\begin{align}
  \iseq{\o{\cfs}_i \Produce \o{\term}_i}{0\leq i\leq k} \label{eq:cells-wf}
\end{align}
Using \eqref{eq:shift-premise-chain} and \eqref{eq:cells-wf}
and $\wf{\dlroot}$ (given by \texttt{WFStack-Empty}),
apply rule \texttt{WFStack-Cons} $k+1$ times in succession to get \eqref{eq:shifted-stack-wf}.

\item Otherwise, assume $\stack$ is nonempty
and the inductive hypothesis that,
for any substack $\stack_0$ of $\stack$
and reductions $\s{\term}_0$
such that $\wf{\pushl{\stack_0}{\s{\termr}_0}{\medsquare}}$
there exists $\stack'$
such that $\push{\stack_0}{\s{\term}_0}{\cft}{\stack'}$
and $\wf{\stack'}$.

Inverting the premise $\wf{\pushl{\stack}{\s{\termr}}{\medsquare}}$,
there exists slot $\o\cfs$ and cells $\s{\o\term}$ such that
\begin{align}
\wf{\stack} \label{eq:wf-stack} \\
\head{\stack}\adjop\o{\cfsr} \label{eq:adj-stack} \\
\pfill{\s{\termr}}{\o\cfsr} = \s{\o\term} \label{eq:reductions-fill-neighbor}
\end{align}
Since $\stack$ is nonempty,
\Lemmaref{lemma:droot-eq} implies
there exist tokens $\iseq{\cft_i}{0 \leq i \leq k}$
and cells $\iseq{\o{\termr}_i}{0 \leq i \leq k}$
and stack $\stack_0$ such that
\[
\stack ~=~ \stack_0\lessdot_{\dsty{\o{\termr}_0}}\tnode{\cft_0}
~\iseq{\doteq_{\dsty{\o{\termr}_i}}\tnode{\cft_i}}{1\leq i\leq k}
\]
for some $k\geq 0$.
Starting with \eqref{eq:wf-stack}, invert rule \texttt{WFStack-Cons} $k+1$ times to get
\begin{align}
\wf{\stack_0} \label{eq:wf-stack-0} \\
\head{\stack_0}\lessdot_{\dsty{\o{\cfsr}_0}}\cft_0\iseq{\doteq_{\dsty{\o{\cfsr}_i}}\cft_i}{1\leq i\leq k} \label{eq:terr-prec} \\
\iseq{\o{\cfsr}_i \Produce \o{\termr}_i}{0\leq i\leq k} \label{eq:subterms-wf}
\end{align}

\Lemmaref{lemma:homogeneity} applied to \eqref{eq:terr-prec} implies
either
\begin{align}
\iseq{\cft_i = \labl_i}{0\leq i\leq k} \label{eq:terr-tile}
\end{align}
for some tiles $\iseq{\labl_i}{0\leq i\leq k}$
or
\begin{align}
\iseq{\cft_i = \groot_i^{\sort}}{0\leq i\leq k} \label{eq:terr-grout}
\end{align}
for some grout $\iseq{\groot_i}{0\leq i\leq k}$ and sort $\sort$.

\begin{itemize}
\item Suppose \eqref{eq:terr-tile} holds.
We will show that it is possible to apply rule \texttt{Reduce}.

By \Lemmaref{lemma:valid-prefix} applied to \eqref{eq:terr-prec} and \eqref{eq:adj-stack},
there exist nonterminals $\cfs_0,\cfs$,
slots $\iseq{\o{\cfsr}_i}{k < i \leq \ell+1}$,
and tiles $\iseq{\labl_i}{k < i \leq \ell}$
for some $\ell\geq k$
such that
\begin{align}
  \o{\cfsr}_{k+1} = \o{\cfsr} \\
  \adj{\head{\stack_0}}{\lmsymStar{\cfs_0}{\cfs}} \label{eq:stack-bound-tile} \\
  \cfs \reduce \o{\cfsr}_0 \iseq{\labl_i~\o{\cfsr}_{i+1}}{0\leq i\leq \ell} \label{eq:reduce-produce-premise}
\end{align}
To conclude with rule \texttt{Reduce},
given \eqref{eq:reduce-produce-premise} and \eqref{eq:subterms-wf},
it remains to show
there exist cells $\iseq{\o{\termr}_i}{k < i \leq \ell+1}$
and stack $\stack'$ such that
\begin{align}
  \pfill{\s{\termr}}{\iseq{\o{\cfsr}_i}{k < i \leq \ell+1}} = \iseq{\o{\termr}_i}{k < i \leq \ell+1} \label{eq:reduce-fill-premise} \\
  \push{\stack_0}{\displaystyle{
    \snode{\o{\termr}_0\iseq{\labl_i\,\o{\termr}_{i+1}}{0\leq i\leq
    \ell}}
  }}{\cft}{\stack'} \label{eq:reduce-subpush}
\end{align}
\begin{itemize}
\item To show \eqref{eq:reduce-fill-premise},
it suffices by rule \texttt{Fill-Partition} to show
there exists a partition $\s{\termr} = \iseq{\s{\termr}_i}{k< i\leq \ell+1}$
and cells $\iseq{\o{\termr}_i}{k < i \leq \ell+1}$
such that
\begin{align}
  \iseq{\pfill{\s{\termr}_i}{\o{\cfsr}_i} = \o{\termr}_i}{k < i \leq \ell+1} \label{eq:reduce-fill-partition}
\end{align}
Pick the partition $\s{\termr} = \s{\termr}\,\iseq{\snil}{k+1 <i \leq \ell+1}$
and use \eqref{eq:reductions-fill-neighbor} for index $k+1$
and either rule \texttt{Fill-None} or \texttt{Fill-Default} for the remaining
indices $k+1 < i \leq \ell+1$.

\item To show \eqref{eq:reduce-subpush},
it suffices by the inductive hypothesis to show
\begin{align}
  \wf{\pushl{\stack_0}{\displaystyle{
    \snode{\o{\termr}_0\iseq{\labl_i\,\o{\termr}_{i+1}}{0\leq i\leq
    \ell}}
  }}{\medsquare}}
\end{align}
Let $\term = \snode{\o{\termr}_0\iseq{\labl_i\,\o{\termr}_{i+1}}{0\leq i\leq
\ell}}$.
Given \eqref{eq:wf-stack-0} and \eqref{eq:stack-bound-tile},
it suffices to show there exists term $\term_0$ such that
\begin{align}
  \pfill{\term}{\some{\cfs_0}} = \some{\term_0} \label{eq:reduce-fill-0}
\end{align}
It suffices by rule \texttt{Fill-Postfix} to show
\begin{align}
  \cfs_0\Produce\snode{\term\,\poshole^\sort}
\end{align}
It suffices by rule \texttt{Produce-Term}
to show
\begin{align}
  \cfs_0\produce\cfs\,\poshole^\sort
\end{align}
It suffices by rule \texttt{Produce-Subsume} to show
\begin{align}
  \cfs_0\reduce\cfs\,\poshole^\sort
\end{align}
Let $\cfs_0 = \bddsort{\precp_0}{\sort_0}{\precq_0}$
and $\cfs = \psq$.
It suffices by rule \texttt{GInj-Postfix} to show
\begin{align}
  \zerosort{\sort_0} \lmsymop^* \cfs
\end{align}
By Lemma \ref{lemma:left-bounded-produce} applied to
\eqref{eq:stack-bound-tile},
either $\cfs \consistent \sort_0$ or $\cfs_0 \produce \cfs\,\poshole^{\sort_0}$.

\begin{itemize}
\item Case $\cfs \consistent \sort_0$, \ie $\sort = \sort_0$:
By reduction naturality applied to
\eqref{eq:reduce-produce-premise},
we know $\precz \pleq_{\sort} \precp$ and $\precq \pgeq_{\sort} \precz$,
so we are done by \texttt{Produce-Tighten}.

\item Case $\cfs_0 \produce \cfs\,\poshole^{\sort_0}$:
Invert rule \texttt{Produce-Subsume} to get
$\cfs_0 \reduce \cfs\,\poshole^{\sort_0}$.
Invert rule \texttt{GInj-Postfix} in turn to conclude.
\end{itemize}
\end{itemize}

\item Suppose \eqref{eq:terr-grout} holds.
It suffices by rule \texttt{Degrout} to show there exists $\stack'$ such that
\begin{align}
  \push{\stack_0}{\dsty{\iseq{\o{\termr}_i}{0\leq i\leq k}\,\s{\termr}}}{\cft}{\stack'}
\end{align}
It suffices by our inductive hypothesis to show
\begin{align}
  \wf{\pushl{\stack_0}{\dsty{\iseq{\o{\termr}_i}{0\leq i\leq k}\,\s{\termr}}}{\medsquare}}
\end{align}
That is, by \rulename{WFConfig}, to show
\begin{align}
  \wf{\stack_0}\label{eq:theq-c1}\\
  \head{\stack_0}\adjop\cfs_0?\label{eq:theq-c2}\\
  \pfill{\iseq{\o{\termr}_i}{0\leq i\leq k}\,\s{\termr}}{\cfs?_0} = \s{\o\term_0}\label{eq:theq-c3}\\
  \nat{\iseq{\o{\termr}_i}{0\leq i\leq k}\,\s{\termr}}\label{eq:theq-c4}
\end{align}
For some $\cfs?_0$ and $\s{\o\term_0}$. We already have \eqref{eq:theq-c1} from \eqref{eq:wf-stack-0}. We obtain \eqref{eq:theq-c2} for $\cfs?_0 = \some{\cfs_0}$ by rule inversion of \rulename{Prec-LT} on $\head{\stack_0}\lessdot_{\o{\cfsr}_0}\cft_0$ in \eqref{eq:terr-prec}. For \eqref{eq:theq-c3}, we apply either \rulename{Fill-Default} or \rulename{Fill-Operand}, depending on whether $\iseq{\o{\termr}_i}{0\leq i\leq k}\,\s{\termr}$ is empty. The former case is trivial. For the latter case, we use \rulename{Produce-Tighten}. This means we have to show that $\cfs_0$ produces a sequence of alternating grout and nonterminals, such that the nonterminals produce the nonempty entries in $\iseq{\o{\termr}_i}{0\leq i\leq k}\,\s{\termr}$. We already have the nonterminals for $\iseq{\o{\termr}_i}{0\leq i\leq k}$; they are $\iseq{\o{\cfsr}_i}{0\leq i\leq k}$ by \eqref{eq:subterms-wf}. The sorts for $\s{\termr}$ are the sorts they synthesize, with $0,0$ bounds, they analyzed against these because they are natural.

Now we need to show that $\cfs_0$ actually does produce this form. We obtain this by widening to $\top$-$\top$ bounds, applying subsumption, then applying \rulename{GInj-Operand}. Now we have to show that each of the nonterminals mentioned above is accessible from the sort of $\cfs_0$, and has non-$\bot$ bounds. We obtain non-$\bot$ bounds for the first set, $\iseq{\o{\cfsr}_i}{0\leq i\leq k}$, by rule inversion on the fact that they appear next to grout in \eqref{eq:terr-prec}. The second set was constructed to have 0 bounds, which are not $\bot$. We have accessibility from the sort of $\cfs_0$ for $\iseq{\o{\cfsr}_i}{0\leq i\leq k}$ by \textit{splicing} \eqref{eq:terr-prec}. For accessibility of the root sorts of $\s{\termr}$, we observe that each one is accessible from $\o\cfsr$ by \eqref{eq:reductions-fill-neighbor}, which is in turn accessible from the sort of $\cfs_0$ by the aforementioned splice.

This completes the slot filling obligation. The final obligation is to show naturality, \eqref{eq:theq-c4}. We already have $\nat{\s{\termr}}$ by assuming the original stack configuration is well-formed. For $\iseq{\o{\cfsr}_i}{0\leq i\leq k}$, we rely on the fact that the bounds for each $\iseq{\o{\cfsr}_{i}}{0\leq i\leq k}$ is non-$\bot$, since they appear next to grout.
\end{itemize}
\end{itemize}
\end{proof}

\section{Additional Data for \autoref{sec:user-study}} \label{appendix:tall-study}

\subsection{More Scenarios}

\begin{figure}[h]
    \centering
    \includegraphics[width=0.75\linewidth]{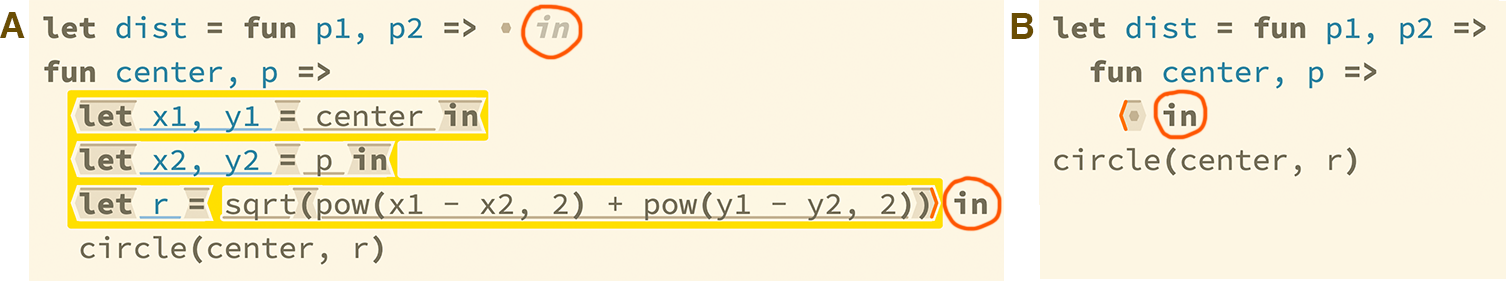}
    \caption{A participant has stubbed the header for a helper function, and is about to cut some relevant code to paste in the helper. However, they left the \code{in} delimiter belonging to the helper stub as a ghost, and incidentally omitted an \code{in} from their selection. On cut, that latter \code{in} becomes an orphan, which is then matched to the ghost \code{in}. This has the effect of shunting the existing function literal into the body of the helper.}
    \label{fig:scenario-reparenting}
\end{figure}

\paragraph*{Spooky action-at-a-distance due to unattended ghosts}

Ghost cleanup logic can trigger non-local effects which can be hard to anticipate when the ghost is not directly involved in the current edit. \autoref{fig:scenario-reparenting} illustrates an example from Task 4, where the participant was confused that their cut action seemingly sucked the function literal into the body of the definition, taking them further from their goal.

\begin{figure}[h]
    \centering
    \includegraphics[width=0.7\linewidth]{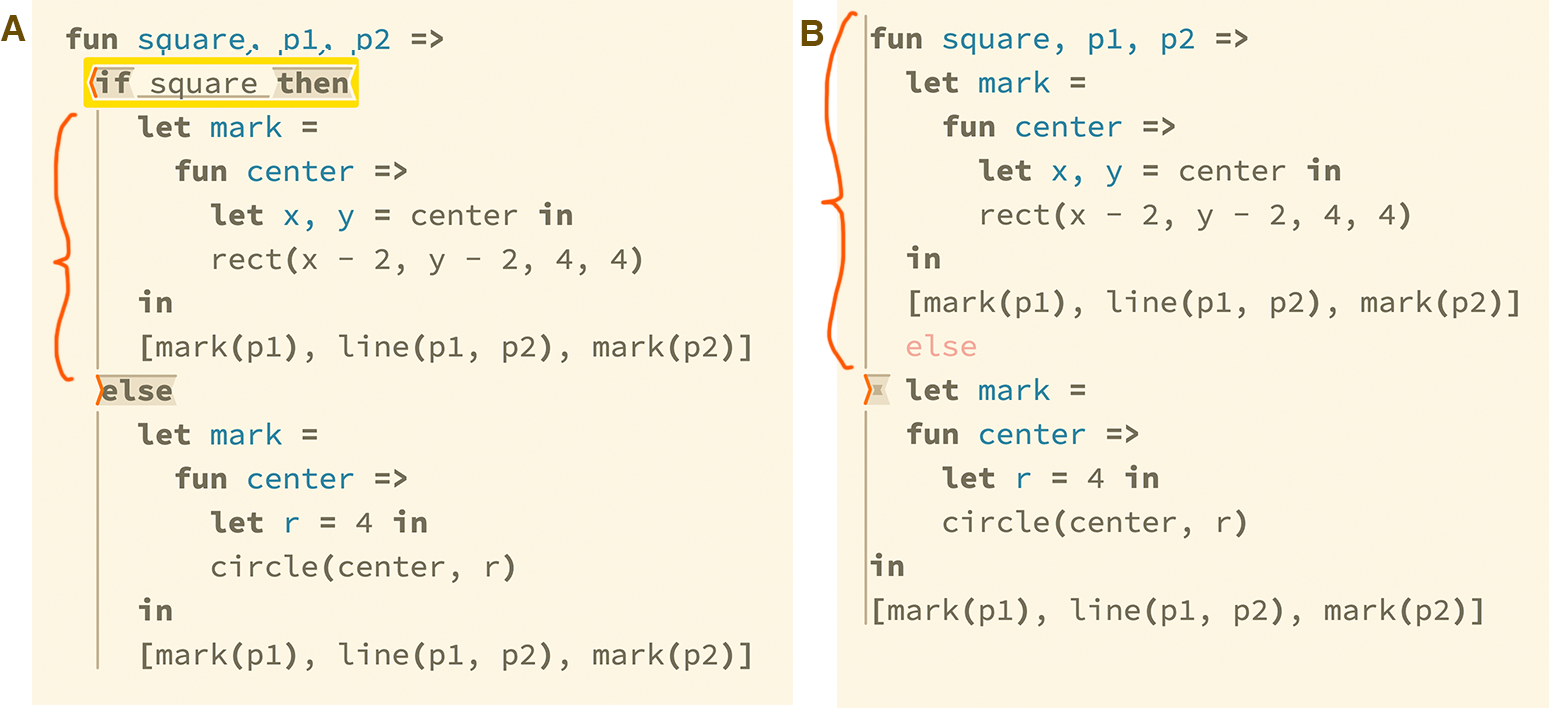}
    \caption{During Task 6, participants had to push an \code{if} expression deeper into a function, which they typically approached by cutting the segment highlighted in yellow. This cut leaves behind an unmolded red 'else' delimiter and an infix obligation. Since infix obligations are assigned the loosest precedence, the function literal taking square as an argument is now entirely on the left side of the grout, and the expression on the last line of the program is no longer inside that function literal, resulting in a subtle but substantial change to the program structure.}
    \label{fig:scenario-grout-delete}
\end{figure}

\paragraph*{Infix obligations and unexpected shifts in structure}

While participants seemed to accept the notion of infix obligations as missing infix operators, their sudden appearance during the deletion of compound syntactic forms like definitions and conditionals lead to subtle and surprising situations. One of these is illustrated in \autoref{fig:scenario-grout-delete}. Many participants did not notice the infix obligation in such situations, pushing on with their planned edits, but for those that did this was a source of confusion. Participant \textbf{P6} notes: "What I expected to be the case is that this piece of grout is a binary operator on this rect and this list in the body of this let. But that's not even true."

\subsection{More Participant Reactions}

\begin{displayquote}
    \textbf{P2:} "It's feeling pretty good. [...] I don't know if I'm just like not thinking of a lot of stuff right now, or if that was just so smooth that it doesn't give me a lot of thoughts"
\end{displayquote}

\begin{displayquote}
   \textbf{P3:} "It takes probably takes more time to develop an intuition behind when to use tylr powers [...] in particular, when use space and start to type something"
\end{displayquote}

\begin{displayquote}
    \textbf{P9:} "It was always clear visually if the editor was on the same page as me."
\end{displayquote}

\begin{displayquote}
    \textbf{P2:} "It's feeling pretty good. [...] I don't know if I'm just like not thinking of a lot of stuff right now, or if that was just so smooth that it doesn't give me a lot of thoughts"
\end{displayquote}

\begin{displayquote}
    P4: "The grout and ghosts, when they worked, felt pretty seamless"
\end{displayquote}

\begin{displayquote}
    \textbf{P8:} "I always would like placeholder completions until I have a complete expression! They allow me to scaffold until I have a complete expression of code."
\end{displayquote}

\begin{displayquote}
    \textbf{P2:} "I don't have to remember the syntax of the language as much, I just have to remember how to write the first token in a term."
\end{displayquote}


\begin{displayquote}
    \textbf{P5:} "The editor sometimes added a lot of holes [...] while I was in the middle of editing an expression, which I instinctively tried to delete."
\end{displayquote}

\begin{displayquote}
   \textbf{P7:} "I was actually surprised. I thought it was going to break. But then it worked. So, I'm pleasantly surprised, I'll say for that one."
\end{displayquote}

\begin{displayquote}
   \textbf{P4:} "So the the fact that these ghosts and grout are in my way and I can't get rid of them is super annoying."
\end{displayquote}

\begin{displayquote}
   \textbf{P3:} "It takes probably takes more time to develop an intuition behind when to use tylr powers [...] in particular, when use space and start to type something"
\end{displayquote}

\section{\tylr Performance} \label{appendix:perf}

We performed some simple benchmarks to test \tylr's parsing and insertion edit performance.

\subsection{Left-to-right parsing performance setup}

For the purposes of judging parsing performance with respect to program length,
we approximated a naturalistic code example by assembling a base program of 100
lines consisting of all examples from our user study (\autoref{sec:user-study}),
expressed as definitions with a trailing hole. For each trial, we concatenated
as many copies of that program as necessary to hit the line total, and then
truncated the result to the desired line length. This creates a program which is
syntactically correct and complete modulo trailing obligations. This program is
then parsed, by splitting into tokens and performing insertion actions for each
token, measuring the total time for all insertions.

(The base program consists of 100 lines, 2233 chars, and 1270 tokens, resulting in an average of 12.7 tokens and 22.3 characters per line).

\subsection{Left-to-right parsing performance results}

Parsing results are shown in \autoref{fig:parse-time}. Performance is currently quadratic in (realistic) program length owing to the fact that each let definition imposes an additional level of nesting. We are considering a specific optimization for top-level definition forms in the future to make this linear.


\begin{figure}[h]
    \centering
    \includegraphics[width=\textwidth]{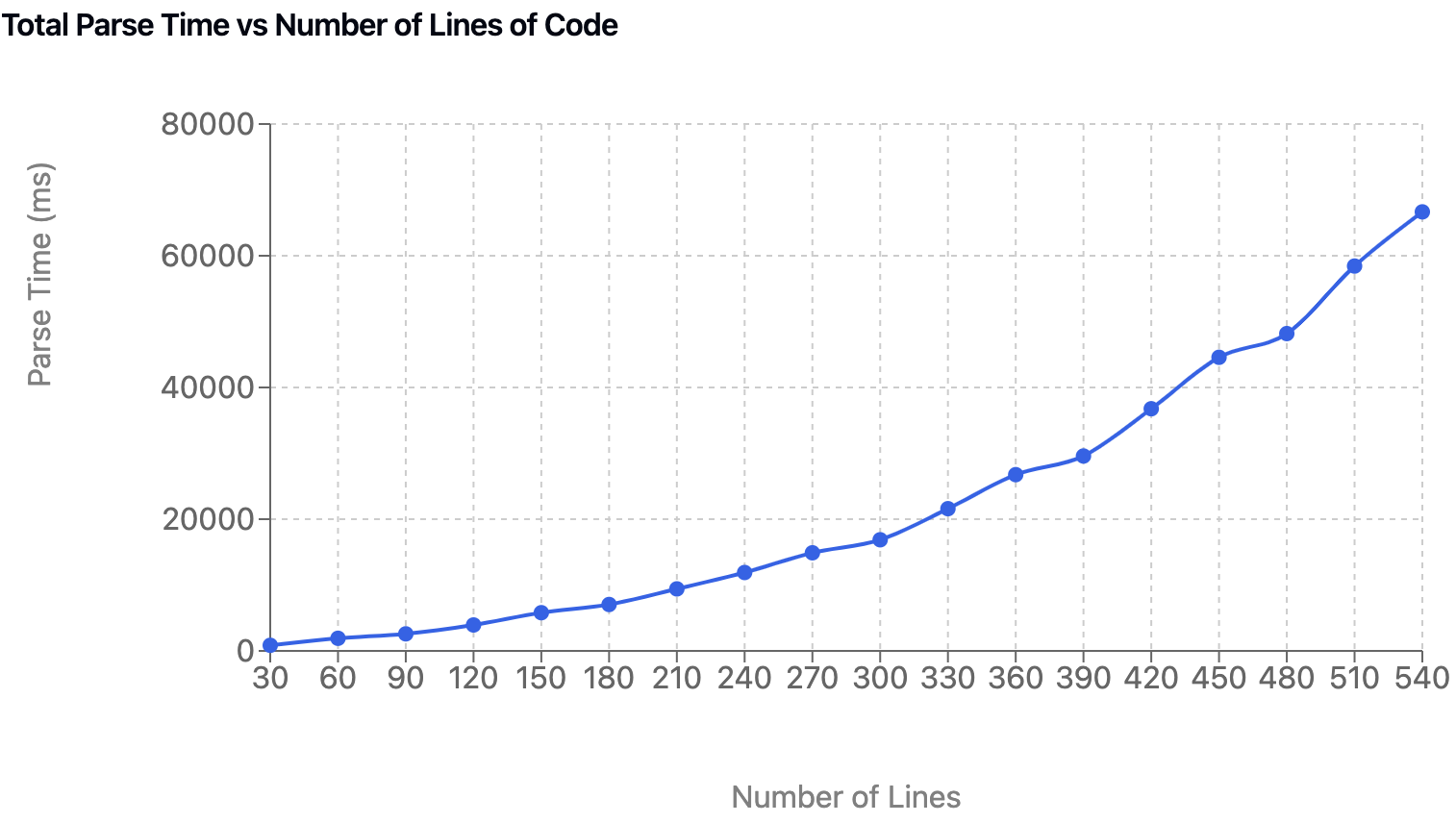}
    \caption{Time taken to parse syntactically correct and prefix-complete programs across a range of program lengths}
    \label{fig:parse-time}
\end{figure}


\subsection{Insertion action performance setup}

Taking the base 100-line program from the above task, we deleted 20 single-token terms distributed randomly over the program, leaving 20 operand obligations. For each resulting obligation, we measured the time taken to insert a single character token into the hole, repeating each such insertion 200 times to increase precision.

For each operand obligation, we derived the syntactic nesting depth (number of containing forms) and the total length of the operand sequence (prefix plus suffix) in which the hole is contained, and plotted these versus the time taken per insertion.

\subsection{Insertion action performance results}

\begin{figure}[!h]
  \centering
  \begin{minipage}[b]{0.45\textwidth}
    \includegraphics[width=\textwidth]{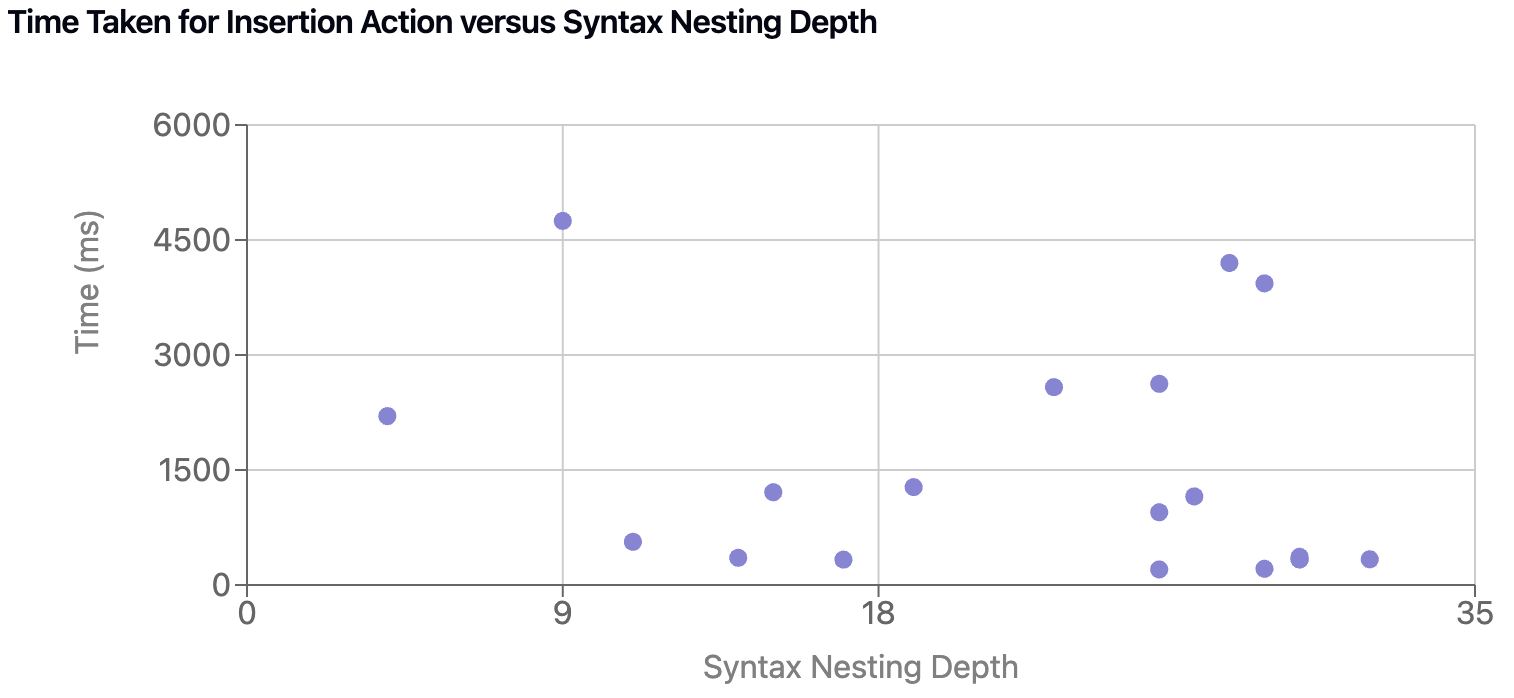}
    \caption{Time taken to perform 200 single-character token insertions in an operand hole of the specified depth}
    \label{fig:insertion-depth}
  \end{minipage}
  \hfill
  \begin{minipage}[b]{0.45\textwidth}
    \includegraphics[width=\textwidth]{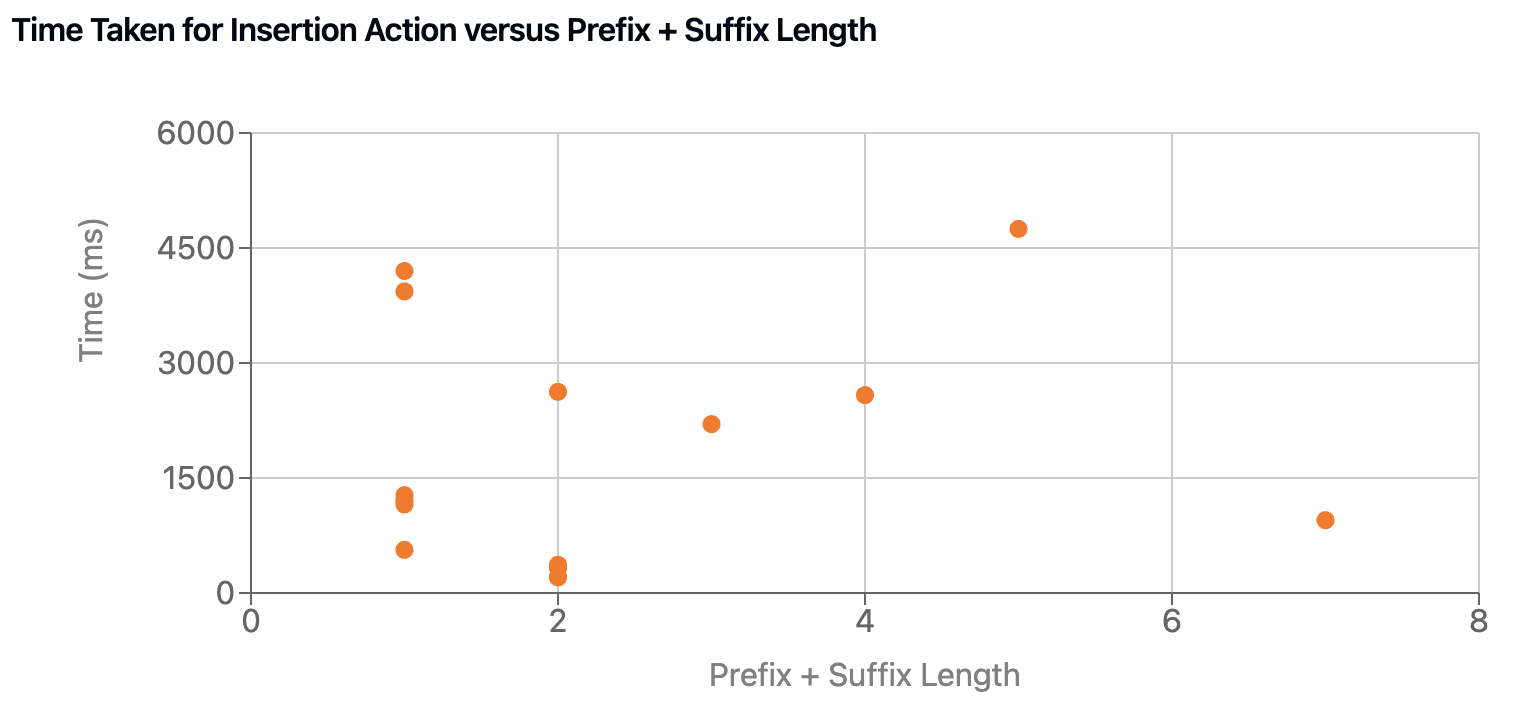}
    \caption{Time taken to perform 200 single-character token insertions in an operand hole of the specified operand sequence length}
    \label{fig:insertion-length}
  \end{minipage}
\end{figure}


\end{document}